\documentclass{article}
\usepackage{latexsym,calc}
\usepackage{jheppub}
\usepackage{hyperref}
\usepackage[none]{hyphenat}
\usepackage[utf8]{inputenc}
\usepackage[english]{babel}
\usepackage{latexsym}
\usepackage{physics}
\usepackage{pdflscape}
\usepackage{amsfonts}
\usepackage{accents}
\usepackage{amsthm}
\usepackage{amssymb}
\usepackage{mathtools}
\usepackage{scalerel}
\usepackage[dvipsnames]{xcolor}
\usepackage{calc}
\usepackage{pgfmath}
\usepackage{bbm}
\usepackage{setspace}
\usepackage{pgffor}
\usepackage{multicol}
\usepackage{tikz}
\usetikzlibrary{calc}
\usetikzlibrary{hobby}
\usetikzlibrary{decorations.markings}
\usetikzlibrary{decorations.pathreplacing}
\usepackage[numbers]{natbib}
\usepackage{array, makecell}

\allowdisplaybreaks

\newcommand\biopencrossl{%
  \mathrel{\scalerel*{>\kern-.4\LMpt\joinrel\blacktriangleleft}{x}}}
\newcommand\biopencrossr{%
  \mathrel{\scalerel*{\blacktriangleright\joinrel\kern-.4\LMpt<}{x}}}
\newcommand\opencrossl{%
  \mathrel{\scalerel*{>\kern-.4\LMpt\joinrel\vartriangleleft}{x}}}
\newcommand\opencrossr{%
  \mathrel{\scalerel*{\kern-.4\LMpt\joinrel\vartriangleright\joinrel\kern-.4\LMpt<}{x}}}
\newcommand\doublecross{%
  \mathrel{\scalerel*{\kern-.4\LMpt\joinrel\vartriangleright\joinrel\mathrel{\vartriangleleft}}{x}}}
  
\newcolumntype{M}[1]{>{\centering\arraybackslash}m{#1}}

\numberwithin{figure}{section}
\numberwithin{equation}{section}

\setlength{\parindent}{0pt}
\setstretch{1.2}

\def\rlbicross{{\triangleright\!\!\!\blacktriangleleft}}
\def\lrbicross{{\blacktriangleright\!\!\!\triangleleft}}
\def\dcross{{\bowtie}}

\def\rbiprod{{\cdot\kern-.33em\triangleright\!\!\!<}}
\def\lbiprod{{>\!\!\!\triangleleft\kern-.33em\cdot}}

\def\calB{{\mathcal B}}

\def\calH{{\mathcal H}}

\def\calP{{\mathcal P}}

\newcommand{\C}{\mbox{${\mathbb C}$}}

\newcommand{\N}{\mbox{${\mathbb N}$}}
\newcommand{\Z}{\mbox{${\mathbb Z}$}}

\renewcommand{\o}{{}_{\scriptscriptstyle(1)}}
\renewcommand{\t}{{}_{\scriptscriptstyle(2)}}
\newcommand{\thr}{{}_{\scriptscriptstyle(3)}}
\newcommand{\fo}{{}_{\scriptscriptstyle(4)}}

\newcommand{\sso}{{}^{\scriptscriptstyle(1)}}
\newcommand{\sst}{{}^{\scriptscriptstyle(2)}}

\newcommand{\id}{{\rm id}}

\newcommand{\eps}{{\epsilon}}
\newcommand{\tens}{\mathop{\otimes}}
\newcommand{\la}{{\triangleright}}
\newcommand{\ra}{{\triangleleft}}

\newcommand{\<}{\left\langle}
\renewcommand{\>}{\right\rangle}



\newcommand\rcs{0.6} 

\newcommand{\harrow}[5]{
    \def\px{#1} 
    \def\w{#4};
    \def\flip{#5}
    \pgfmathsetmacro{\sl}{4*\w*\px};

    \coordinate (origin) at #2;
    \pgfpointanchor{origin}{center}
    \pgfgetlastxy{\xo}{\yo}
    
    \coordinate (target) at #3;
    \pgfpointanchor{target}{center}
    \pgfgetlastxy{\xt}{\yt}

    \ifnum\flip=0
        \coordinate (p1) at ($(origin) + (-\w*\px, \w*\px)$);
        \coordinate (p2) at ($(target) + (\w*\px, \w*\px)$);
        \coordinate (p3) at ($(target) + (-\w*\px, -\w*\px)$);
        \coordinate (p4) at ($(origin) + (\w*\px, -\w*\px)$);
    \else
        \coordinate (p1) at ($(origin) + (\w*\px, \w*\px)$);
        \coordinate (p2) at ($(target) + (-\w*\px, \w*\px)$);
        \coordinate (p3) at ($(target) + (\w*\px, -\w*\px)$);
        \coordinate (p4) at ($(origin) + (-\w*\px, -\w*\px)$);
    \fi

    \coordinate (center) at ($0.5*(target) + 0.5*(origin)$);
    \ifdim\xo<\xt
        \coordinate (v1) at ($(center) + (\sl, 0)$);
        \coordinate (v2) at ($(center) + ({-0.5*\sl}, {-(sqrt(3)/2)*\sl})$);
        \coordinate (v3) at ($(center) + ({-0.5*\sl}, {(sqrt(3)/2)*\sl})$);
    \else
        \coordinate (v1) at ($(center) + (-\sl, 0)$);
        \coordinate (v2) at ($(center) + ({0.5*\sl}, {(sqrt(3)/2)*\sl})$);
        \coordinate (v3) at ($(center) + ({0.5*\sl}, {-(sqrt(3)/2)*\sl})$);
    \fi
    
    \begin{scope}
        \fill[gray] (p1) -- (p2) -- (p3) -- (p4) -- (p1);        
        \fill[gray] (v1) -- (v2) -- (v3) -- (v1);  
    \end{scope}
}

\newcommand{\varrow}[5]{
    \def\px{#1} 
    \def\w{#4};
    \def\flip{#5}
    \pgfmathsetmacro{\sl}{4*\w*\px};

    \coordinate (origin) at #2;
    \pgfpointanchor{origin}{center}
    \pgfgetlastxy{\xo}{\yo}
    
    \coordinate (target) at #3;
    \pgfpointanchor{target}{center}
    \pgfgetlastxy{\xt}{\yt}

    \ifnum\flip=0
        \coordinate (p1) at ($(origin) + (-\w*\px, \w*\px)$);
        \coordinate (p2) at ($(target) + (-\w*\px, -\w*\px)$);
        \coordinate (p3) at ($(target) + (\w*\px, \w*\px)$);
        \coordinate (p4) at ($(origin) + (\w*\px, -\w*\px)$);
    \else
        \coordinate (p1) at ($(origin) + (-\w*\px, -\w*\px)$);
        \coordinate (p2) at ($(target) + (-\w*\px, \w*\px)$);
        \coordinate (p3) at ($(target) + (\w*\px, -\w*\px)$);
        \coordinate (p4) at ($(origin) + (\w*\px, \w*\px)$);
    \fi

    \coordinate (center) at ($0.5*(target) + 0.5*(origin)$);
    \ifdim\yo<\yt
        \coordinate (v1) at ($(center) + (0, \sl)$);
        \coordinate (v2) at ($(center) + ({-(sqrt(3)/2)*\sl}, {-0.5*\sl})$);
        \coordinate (v3) at ($(center) + ({(sqrt(3)/2)*\sl}, {-0.5*\sl})$);
    \else
        \coordinate (v1) at ($(center) + (0, -\sl)$);
        \coordinate (v2) at ($(center) + ({-(sqrt(3)/2)*\sl}, {0.5*\sl})$);
        \coordinate (v3) at ($(center) + ({(sqrt(3)/2)*\sl}, {0.5*\sl})$);
    \fi
    
    \begin{scope}
        \fill[gray] (p1) -- (p2) -- (p3) -- (p4) -- (p1);        
        \fill[gray] (v1) -- (v2) -- (v3) -- (v1);  
    \end{scope}
}

\newcommand{\LMt}[1]{\tilde{L}_{-}^{#1}}
\newcommand{\TPt}[1]{\tilde{T}_{+}^{#1}}
\newcommand{\TMt}[1]{\tilde{T}_{-}^{#1}}

\newcommand{\TP}[1]{T_{+}^{#1}}
\newcommand{\TM}[1]{T_{-}^{#1}}
\newcommand{\PP}[2]{M_{+}^{#1\otimes #2}}
\newcommand{\PM}[2]{M_{-}^{#1\otimes #2}}

\newcommand{\Si}{S^{-1}}
\newcommand{\pr}[1]{#1^\prime}
\newcommand{\cop}[2]{#1_{(#2)}}
\newcommand{\copp}[3]{#1_{(#2)(#3)}}
\renewcommand{\(}{\left(}
\renewcommand{\)}{\right)}
\newcommand{\rdir}[2]{F^{#1 \otimes #2}\left(\tau_{R}\right)}
\newcommand{\ldir}[2]{F^{#1 \otimes #2}\left(\tau_{L}\right)}
\newcommand{\rdur}[2]{F^{#1 \otimes #2}\left(\tilde{\tau}_{R}\right)}
\newcommand{\ldur}[2]{F^{#1 \otimes #2}\left(\tilde{\tau}_{L}\right)}
\newcommand{\rop}[3][]{F^{#2 \otimes #3}\left(\tau_{#1}\right)}

\newtheorem{thm}{Theorem}[section]
\newtheorem{defn}[thm]{Definition}
\newtheorem{prop}[thm]{Proposition}

\newtheorem{lemma}[thm]{Lemma}

\def\be{\begin{equation}}
\def\ee{\end{equation}}

\newcommand{\bes}{\begin{eqnarray}}
\newcommand{\ees}{\end{eqnarray}}


\definecolor{grey}{HTML}{DDDDDD}
\definecolor{gray}{HTML}{7E7E7E}

\makeatletter
\gdef\@fpheader{}
\makeatother

\begin{document}
\title{Ribbon operators in the  Semidual 
lattice code model}

\author[a,b]{{\bf Fred Soglohu,}}\emailAdd{fred@aims.edu.gh}
\affiliation[a]{Department of Mathematics, University of Ghana, Legon, Ghana}
\author[b, c]{{\bf Prince K. Osei,}}\emailAdd{posei@aims.edu.gh}
\affiliation[b]{African Institute for Mathematical Sciences (AIMS), Accra,  Ghana}
\affiliation[c]{ Quantum Leap Africa (QLA), AIMS RIC, Kigali, Rwanda}
\author[c]{{\bf Abdulmajid Osumanu}}\emailAdd{aosumanu@quantumleapafrica.org}

\date{\small\today}

\medskip

\abstract{
In this work, we provide a rigorous  definition of ribbon operators in the Semidual Kitaev lattice model and study their properties. These operators are essential for understanding quasi-particle excitations within topologically ordered systems. We show that the ribbon operators generate quasi-particle excitations at the ends of the ribbon and reveal themselves as irreducible representations of the Bicrossproduct quantum group $M(H)=H^{\text{cop}}\lrbicross H$ or  $M(H)^{\text{op}}$ depending on their chirality or local orientation. 
}

\maketitle

\section{Introduction \& Motivation} 

Topological quantum computing is an approach to the realization of quantum computing using \\
non-Abelian anyons or quasi particles in certain two dimensional quantum systems. 
Information is encoded in topological degrees of freedom with geometrical features that are robust against small local perturbations.
Topological phases of matter, a quantum system with such robust properties that depend on the topology of the underlying material provides a medium for topological quantum computing.
This proposal for fault-tolerant quantum computing is due to Alexei Kitaev \cite{AYK}.
The main class of  topological phases of matter in two dimensions are realised by lattice models, the Kitaev quantum double model. This approach is based on quantum many body systems exhibiting topological order. The realization of these orders involve the construction of lattice models with solvable Hamiltonians.
\\

 Kitaev models are define for a finite group $G$, with quasi-particles given by the representations of the quantum double $D(G)$, and are realized on a two-dimensional lattice. The well-known of the Kitaev models is the toric code, which is  based on the cyclic group $\Z_2$ \cite{AYK}. We refer to \cite{Delgago2d3d} for a recent review. 
  Excitations in the toric code model are in the form of Abelian anyons, limiting its universality for quantum computation.  In the case where $G$ is a non-Abelian group, the associated excitations become non-Abelian as well.
 These models were  generalized to that based  on a finite-dimensional semi simple Hopf algebra $H$ \cite{BMCA}.
The generalized Kitaev quantum double model allow for a broader range of applications and theoretical explorations. The generalised model and its relation to various topological quantum field theories (TQFTs) has been well studied \cite{CM,yan2022ribbon,Cowtan:2022enl,jia2023boundary}. Here excitations are characterized by the representations of the Drinfeld double $D(H)=H\dcross H^{*\text{op}} $. 
\\

The Kitaev quantum double model have also been studied in terms of Weak Hopf algebras. We also refer to \cite{Jia:2023xar}  where their study delved into the mathematical foundations of weak Hopf algebras, focusing on the interplay between weak Hopf symmetry and weak Hopf quantum double.
\\

Other class of topologically ordered spin models  such as the Levin-Wen string-net 
models \cite{Levin04, Levin06}  which are based on a representation category of $H$ are  also related to the Kitaev models \cite{Oliver09, Kadar08, Kadar09,Buerschaper2010}. The structure of excitations for these models is also well established \cite{KitaevKong,Lan14,Hu18}. One defines the so called ribbon operators on the Hilbert space that generate the excitations. 
\\



The fundamental mathematical object to manipulate anyons in the non-Abelean lattice code models. 
The concept of ribbon operators is pivotal in describing the creation, annihilation, or manipulation of quasi-particle excitations within topologically ordered system like the Kitaev models. These operators are represented as thick graphs or ribbons on a two-dimensional graph.  In the toric code, these are characterized by the  so called string operators. In the non-Abelian context, these operators become entangled, requiring the consideration of a thicker or more complex string of operators. In the quantum double model, the initial step involves defining  fundamental triangles which form the building blocks of the ribbon operators. This definition is subsequently extended to encompass longer ribbons through a process of induction \cite{AYK,yan2022ribbon}. 
The ribbon operators for the quantum double  model based on Hopf algebras have been well studied in \cite{AYK,yan2022ribbon, cowtan2022quantum, jia2023boundary}. 
In this case, they are understood as representations of either $D(H)^{*}$ or $D(H)^{*\text{op}}$. In a manner akin to the quantum double model approach, ribbon operators have also been established within the weak Hopf quantum double framework \cite{Jia:2023xar}.
\\

The semidual lattice code model recently introduced in \cite{FOM}, proposed a different class of generalised  models based on  the mirror bicrossproduct quantum group $M(H)=H^{\text{cop}}\lrbicross H$.\\
  The mirror bicrossproduct quantum group  is a special case of general  bicrossproduct quantum groups that originally emerged in the theory of quantum gravity
\cite{Majid:1994cy,SM,Majid94}  and have since been studied in that context  \cite{cath-bernd, OseiSchroers1, prince-bernd,PKO1,MO}.  It tuns out that the  mirror bicrossproduct quantum groups are related to  the quantum doubles $D(H)$ through the idea of semidualisation, a kind of quantum group Fourier transform  where one can exchange position and momentum degrees of freedom in an algebraic framework \cite{MaSch}. 
 They are also known to be mathematically related by  a Drinfeld and module algebra twist \cite{MO}. 
 \\

 The innovative feature of the semidual  model construction lies in the utilisation of the  canonical covariant representation action of $M(H)$ provided by the dual Hopf algebra $H^*$. An exactly solvable Hamiltonian is obtained, and a representation of the ground state is given in terms of tensor networks. In the current work, we provide a thorough  definition of  ribbon operators in the semidual Kitaev model in detail and explore some of their algebraic properties. 
 \\
 
 In the case of the semidual model, the ribbon operators reveal themselves as representations of $M(H)=H^{\text{cop}}\lrbicross H$ or  $M(H)^{\text{op}}= H\lrbicross H^{\text{op}}$. We adopt the definition of ribbons, as outlined in \cite{jia2023boundary}, which classifies them as type-A and type-B based on the chirality (or local orientation) of the triangles that constitute their structure. The chirality of the triangles dictates two distinct ribbons (either left-handed or right-handed) for each type, and thus requiring a different  definition of ribbon operators for each type. Ribbon operators of type-A are found to generate a representation of $H^{\text{cop}}\otimes H\cong M(H)$, while those of type-B generate a representation of $H\otimes H^{\text{op}}\cong M(H)^{\text{op}}$.
\\

 We then establish that, when provided with a ribbon, the ribbon operators exhibit commutativity with all terms in the Hamiltonian, excluding those linked to the two ends of the ribbon. This demonstrates that ribbon operators are instrumental in generating excitations solely at the ends of the ribbon. Further, show that geometric operators of the semidual model commutes with the ribbon operators at sites (i.e., a vertex and adjacent face) within a ribbon. 
\\ 

The paper is organized as follows: Section \ref{background} provides  the ingredients required for constructing the generalised Kitaev lattice models, encompassing directed graphs and Hopf algebras. Additionally, it discuss the semidualization of the quantum double  $D(H)$. In section \ref{sec:semidual_kitaev_model}, we provide an overview of  the semidual lattice model of \cite{FOM} and introduce the right-module triangle operators pivotal for the construction of the ribbon operators. Section \ref{sec:ribbon_operators} is dedicated to the construction of the ribbon operators and the study of their properties. The appendicies contain these calculations. Finally, in section \ref{sec:conclusion}, we present our concluding remarks. 
\\

\section{Background }
\label{background}
There are two main ingredients of the Kitaev lattice code models. Directed graphs and Hopf algebras. 

\subsection{Directed graphs}
\label{subsec:basic_graph_structures}

  For simplicity, we work with  a square lattice  $\Gamma = (E,V,F)$ a  set of interconnected line segments,
  with edge set $E$, vertex set $V$  and face set $F$. When the edges are endowed with directions, the graph is said to be directed. Let $e\in E$ be an edge in a directed graph, we denote by $\partial_{0}e$ the initial vertex of $e$ and $\partial_{1}e$ the terminal vertex of $e$. We say a graph has no self loops if there is no edge $e\in E$ that starts and ends at the same vertex, that is, $\forall e\in E$,  $\partial_{0}e \neq \partial_{1}e$. If there is at most one edge joining any two vertex of a given graph, we say that the graph has no duplicate edges. In other words, we say a graph has no duplicate edges if for any two edges $e, e\prime \in E$ such that $\partial_{i}e = \partial_{i}e\prime$ for $i=0,1$, then $e = e\prime$. A directed graph that has no self loops and no duplicate edges is called a simple directed graph. An example of a directed simple graph is illustrated in Figure \ref{fig:graph_with_basic_structures}.
\\

  Given  a directed simple graph $\Gamma$, there is a graph $\Gamma^{*} = (E^{*}, F^{*}, V^{*})$ induced by $\Gamma$, with $F^{*}$ vertex set of $\Gamma^{*}$ and $V^{*}$ it's face set. By this definition, we mean that the vertices and faces of $\Gamma^{*}$ are respectively defined by the faces and vertices of $\Gamma$. If $p$ is a face in the direct graph, then $p^{*}$ is a vertex in the dual graph and if $v$ is a vertex in the direct graph, then $v^{*}$ is a face in the dual graph. We call $\Gamma$ the direct graph and $\Gamma^{*}$ the corresponding dual graph.  If $\Gamma$ is simple and directed, then $\Gamma^{*}$ is also simple and directed. 
\\

    Let $\Gamma $ be a directed graph, a site $s := (v_{s},p_{s})$ is  a vertex $v\in V$ and an adjacent face $p\in F$ containing $v$. If two sites share the same face, we say they are face adjacent and if they share the same vertex, we say they are vertex adjacent. We use  a dotted line segment connecting $v$ and the adjacent face $p$  to represent the site as shown in Figure \ref{fig:graph_with_basic_structures}.  We refer to  \cite{AYK,BombinDeldado} for more details.

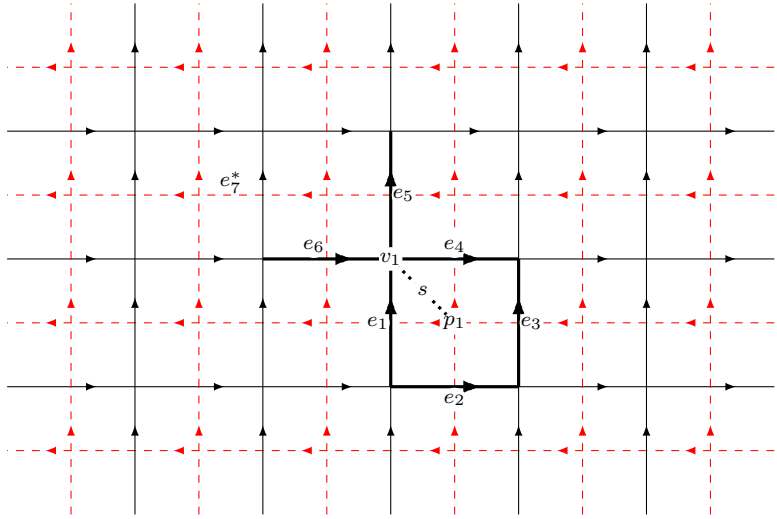
\begin{figure}[h!]
    \centering
    \newcommand\gghgs{3} 
\newcommand\ggvgs{5} 

\newcommand\gs{2}
\newcommand\gss{2*\gs}

\begin{tikzpicture}[scale = 0.85]
    	
	\begin{scope}[decoration={markings,mark=at position 0.7 with {\arrow{latex}}}]
    	\foreach \i in {-2,...,\gghgs}{
        	\foreach \j in {3,...,\ggvgs}{
        	    \draw[postaction={decorate}] (\i*\gs, \j*\gs) -- (\i*\gs + \gs, \j*\gs);
        		\draw[red, thin, dashed, postaction={decorate}] (\i*\gs + \gs, \j*\gs + 0.5*\gs) -- (\i*\gs, \j*\gs + 0.5*\gs);
            }
            \draw[red, thin, dashed, postaction={decorate}] (\i*\gs + \gs, 2.5*\gs) -- (\i*\gs, 2.5*\gs);
    	}
    		
    	\foreach \j in {2,...,\ggvgs}{
        	\foreach \i in {-1,...,\gghgs}{
        		\draw[postaction={decorate}] (\i*\gs, \j*\gs) -- (\i*\gs, \j*\gs + \gs);
        		\draw[red, thin, dashed, postaction={decorate}] (\i*\gs + 0.5*\gs, \j*\gs) -- (\i*\gs + 0.5*\gs, \j*\gs + \gs);
        	}
        	\draw[red, thin, dashed, postaction={decorate}] (-1.5*\gs, \j*\gs) -- (-1.5*\gs, \j*\gs + \gs);
    	}
    \end{scope}
    	
    \draw[very thick, dotted] (1*\gs, 4*\gs) -- (1.5*\gs, 3.5*\gs);
    \begin{scope}[very thick, decoration={markings,mark=at position 0.7 with {\arrow{latex}}}]
        \draw[postaction={decorate}] (1*\gs, 4*\gs) -- (1*\gs, 5*\gs);
        \draw[postaction={decorate}] (0*\gs, 4*\gs) -- (1*\gs, 4*\gs);
        \draw[postaction={decorate}] (1*\gs, 4*\gs) -- (2*\gs, 4*\gs);
        \draw[postaction={decorate}] (2*\gs, 3*\gs) -- (2*\gs, 4*\gs);
        \draw[postaction={decorate}] (1*\gs, 3*\gs) -- (2*\gs, 3*\gs);
        \draw[postaction={decorate}] (1*\gs, 3*\gs) -- (1*\gs, 4*\gs);
    \end{scope}
    	
    \draw[text=black, draw=white, fill=white] (1.25*\gs, 3.75*\gs) circle (5pt) node {\footnotesize $s$};
    	
    \draw[text=black, draw=white, fill=white] (0.9*\gs, 3.5*\gs) circle (5pt) node {\footnotesize $e_{1}$};
    \draw[text=black, draw=white, fill=white] (1.5*\gs, 2.9*\gs) circle (5pt) node {\footnotesize $e_{2}$};
    \draw[text=black, draw=white, fill=white]  (2.1*\gs, 3.5*\gs) circle (5pt) node {\footnotesize $e_{3}$};
    \draw[text=black, draw=white, fill=white] (1.5*\gs, 4.1*\gs) circle (5pt) node {\footnotesize $e_{4}$};
    \draw[text=black, draw=white, fill=white] (1.1*\gs, 4.5*\gs) circle (5pt) node {\footnotesize $e_{5}$};
    \draw[text=black, draw=white, fill=white] (0.4*\gs, 4.1*\gs) circle (5pt) node {\footnotesize $e_{6}$};

    \draw[text=black, draw=white, fill=white] (-0.25*\gs, 4.6*\gs) circle (5pt) node {\footnotesize $e^{*}_{7}$};
    	
    \draw[text=black, draw=white, fill=white] (1*\gs, 4*\gs) circle (5pt) node {\footnotesize $v_{1}$};
    \draw[text=black, draw=white, fill=white] (1.5*\gs, 3.5*\gs) circle (5pt) node {\footnotesize $p_{1}$};
    	
            
    	
\end{tikzpicture}
    \caption{An oriented 2d lattice with examples of direct edges ($e_{i}$), dual edges ($e_{i}^{*}$), vertices ($v_{i}$), faces ($p_{i}$) and sites ($s_{i}$).}
    \label{fig:graph_with_basic_structures}
\end{figure}

\subsection{Hopf algebra
}
\label{Hopf algebras}
Hopf algebras and their representation categories play a key role in the the construction of topological quantum computing models and topological field theories. In the following, we provide a brief review of main notations of Hopf algebras for the purpose of this work following  conventions in the book \cite{SM}.  
\\
A Hopf algebra over a field $\C$ is a vector space  $H$ 
which is an algebra, a coalgebra and has an antipode. As an algebra, it has a linear `product' $\mu:H\tens H \rightarrow H$ and unit $\eta:\C \rightarrow H$ satisfying 
$\mbox{associativity:  }\mu \circ (\id\tens\mu)=\mu\circ(\mu\tens\id)
$ and 
$\mbox{unitary: } \mu \circ(\eta\tens\id)=\id=\mu\circ(\id\tens\eta).
$
 As a coalgebra, it is equipped with a  linear `coproduct' $\Delta:H \rightarrow  H\tens H$  with notation $\Delta (h)= h_{\o}\otimes h_{\t}=h\sso\otimes h\sst$  and  counit $\eps:H \rightarrow \C$  which are algebra homomorphisms satisfying  the coassociativity condition $(\Delta\otimes \mbox{id})\circ \Delta=(\mbox{id}\otimes\Delta)\circ\Delta$ and counit condition  $(\epsilon \otimes \id)\circ\Delta =\id =(\id\otimes \epsilon)$ respectively. The maps $\mu$ and $\eta$ are morphisms of coalgebras.
The  antipode $S:H\rightarrow H$ is an anti-algebra map satisfying  $(Sh\o)h\t=h\o S h\t=\eps(h)$, for all $h\in H$. If $H$ is finite-dimensional, then $S^{-1}$ exist.
We denote by $H^{\tens n}$, $n\in \N$ the $n$-fold tensor product of $H$.  The composition of $n$ coproducts is the map $\Delta^{(n)}: H\rightarrow H^{\tens (n+1)}$  defined by $\Delta^{(n)}(h)=h_{\o} \tens h_{\t} \tens ... \tens h_{(n+1)}$. 
We denote by $H^*$ the dual Hopf algebra with dual pairing given by the non-degenerate bilinear map $ \langle \,,\,\rangle$ and by  $H^{\text{cop}}, H^{\text{op}}$ denote taking the opposite coproduct or opposite product in $H$ respectively. 
\\

A left action (or representation) of an algebra $H$ is a pair $(\la , V)$ where $V$ is a vector space and $\la$ is a linear map $H\tens V\rightarrow V$ such that 
\begin{equation}\label{module}
h\la v\in V,\quad (hg)\la v=h\la(g\la v), \quad 1\la v=v 
\end{equation}
In this case, we say that $V$ is a left $H$-module with respect to the action $\la$. There is an analogous notion for a right module algebra by the action $\ra$ so that for example,  if $\la$ is a left action, then 
\be \label{l to r action} v\ra h = (Sh)\la v\ee
 is a right action since $S$ is an antialgebra map. If $H$ acts on vector spaces $V,W$ then it also acts on $V\tens W $ by $h\la(v\tens w) = h\o\la v \tens h\t \la w$ for all $h\in H$, $v\in V$ and $w\in W$.
An algebra $A$ is said to be an $H$-module \textit{algebra} if $A$  is a left $H$-module and this action is covariant,  i.e. 
\begin{equation}\label{module algb}
h\la (ab)= (h\o\la a)(h\t\la b),\hspace{1cm} h\la 1=\eps(h),\;\; a\in A, \, h\in H.
\end{equation}
We say that  $(H,A)$ is a  \textit{covariant system}. 

\subsection{Quantum double and Semidualisation}

A  double crossproduct Hopf algebra $H_1\dcross H_2$ can be viewed  as a Hopf algebra $H$ which factorizes into two sub-Hopf algebras  and built on $H_1\tens H_2$ as a vector space.  One can then  extract the actions 
$
\la: H_2\tens H_1\rightarrow H_1 \quad \mbox{and} \quad \ra: H_2\tens H_1\rightarrow H_2
$
of each Hopf algebra on the vector space of the other satisfying certain  compatibility conditions.  The algebra in  $H_1\dcross H_2$ is constructed from the given actions as a double (both left and right) cross product.  The coproduct of $H_1\dcross H_2$ is the tensor one given by the coproduct of each factor and there is  a canonical  covariant left action  of $H_1\dcross H_2$ on $H_2^* $ as an algebra which leads to   a covariant system $(H_1\dcross H_2, H_2^*)$.
\\

The  {\em semidual} of the above  matched pair of data is constructed by dualising half of the  data. If the  $H_2$ data is dualised, then one obtains a left-right  bicrossproduct Hopf algebra  $H_2^*\lrbicross H_1$ which then acts covariantly on $H_2$ or a semidual covariant system $(H_2^*\lrbicross H_1,H_2)$. This is called the  $B$-model semidualisation \cite{MaSch}. There is also an $A$-model semidualisation where $H_1$  is dualised to obtain the right-left bicrossproduct $ H_2\rlbicross H_1^*$ acting on the left on $H_1$.
These ideas were  proposed in the context of  quantum gravity in a sense  that one can swap position and momentum generators in the algebraic setting \cite{Ma:pla, SM}. See also \cite{cath-bernd, OseiSchroers1, prince-bernd,PKO1,MO} where this has been well explored. 
\\

The Drinfeld quantum double can be viewed as a double crossproduct  $D(H)=H\dcross H^{*\text{op}}$ and  left covariant system $(D(H), H^{\text{cop}})$ \cite{SM}.  The $B$-model semidualisation maps this to a right covariant system $(H^{\text{cop}}\lrbicross H, H^{*\text{op}})$, which is $(H^{\text{cop}}\lrbicross H, H^*)$ as a left covariant system.
Here, we refer  to $H^{\text{cop}}\lrbicross H=M(H)$ as the `mirror product' bicrossproduct, similar to that in  \cite{MaSch}. We refer to \cite{SM} for a detailed account.
  The Hopf algebra  $M(H)$ has an  algebra 
\begin{eqnarray}
\label{sd4}
(a\otimes h)(b\otimes g) = a (h_{\o}b Sh_{\t})\otimes h_{\thr}g,\hspace{0.5cm}h,g\in H,\hspace{0.3cm}a,b\in H^{\text{cop}}.
\end{eqnarray}
in which  $H^{\text{cop}}\tens 1$ and $1\tens H$ appear as subalgebras but with commutation relation  fully determined by 
\be
\label{strengthing mirror}
hb :=  (1\otimes h)(b\otimes 1)= (h_{\o}b Sh_{\t})h_{\thr},
\ee
The coproduct is 
\be
\label{coproduct of M(H)}
\Delta(a\otimes h) =\, a_{\t}\otimes h_{\t}\otimes a_{\o}h_{\o}Sh_{\thr}\otimes h_{\fo}, \ee
and antipode is given by
\be
\label{antipode of M(H)}
S(a\tens h) =\, (1\tens Sh\t)(S(ah\o Sh\thr)\tens 1).
\ee
 This Hopf algebra acts covariantly on   $H^{*\text{op}}$ from the right according to 
  \be\label{mirroraction}
  \phi \ra (a \tens h)=\<a h\o,\phi\o\>\phi\t\<S h\t, \phi\thr\>,\; h\in H,\; a \in H^{\text{cop}},\;\phi \in H^{*}.
  \ee
  This gives rise to a covariant  left action on $H^*$   according to
\begin{equation}
\label{left covariant action on H*}
(a\otimes h)\triangleright \phi =\langle Sh_{\o}Sa ,\phi_{\o}\rangle\langle h_{\t},\phi_{\thr}\rangle \phi_{\t}
\end{equation}
leading to the covariant system $(H^{\text{cop}}\lrbicross H, H^*)$.
We refer to \cite{MO}  and to \cite{SM} the recent work for more details.

\section{Semidual lattice  code  model 
}
\label{sec:semidual_kitaev_model}
We provide a brief review of the semidual  lattice code model and refer to \cite{FOM} for details. Given the data on a directed graph $\Gamma$ of a 2d compact oriented surface $\Sigma$ and the Hopf algebras in the previous section, we describe the semidual lattice code model. For a finite dimensional semi-simple Hopf algebra $H$, this  is constructed from  the $M(H)$-module data in the covariant system $(M(H), H^{*})$ and  provides a graph representation of the mirror bicrossproduct quantum group $M(H)=H^{\text{cop}}\lrbicross H$.
 \\
 
Each edge of $\Gamma$ is assigned a Hilbert space $\mathcal{H}_{e} := H^{*}$. The total Hilbert space for $\Gamma$ is then given by 
\begin{equation}
  \calH = \mathcal{H}_{\Gamma} := \bigotimes_{e\in E} H^*. 
\end{equation}
 the $|E|$-fold tensor product of  $H^*$.

To each edge $e\in E$, we assign a family of linear operators $(L^{h}_{\pm})_e$, $(T^{a}_{\pm})_e$, indexed by elements of the Hopf algebras $h\in H$ and  $a\in H^{\text{cop}}$ respectively.  They only act on $H^*$ associated to the edge in question and trivially on the copies associated to other edges. They are obtained from  \eqref{left covariant action on H*}  as
\begin{equation}
    \label{triangle operators}
    \arraycolsep=1.2pt\def\arraystretch{1.2}
    \begin{array}{ccc}
        L^h_{+}(\phi) = \langle h,S\phi\o\phi\thr\rangle \phi\t, & \qquad & L^h_{-}(\phi) = \langle h,\phi\thr S^{-1}\phi\o\rangle\phi\t,\\
        T^{a}_{+}(\phi) = \langle Sa, \phi\o\rangle \phi\t, & \qquad & T^{a}_{-}(\phi) = \langle a, \phi\t\rangle \phi\o.
    \end{array}
\end{equation}
Here, the operators $L_+$ and $T_+$ are the canonical left action  \eqref{left covariant action on H*}  of the bicrossproduct $ M(H)$ on $H^*$. 
The $L_-$ and $T_-$ are also left actions obtained using the relations 
\begin{equation}\label{relations}
{S\circ L^{h}_{-}(\phi) =  L^{h}_{+}\circ S(\phi), \quad  S\circ T^{a}_{-}(\phi) = T^{a}_{+}\circ S(\phi).}
\end{equation}

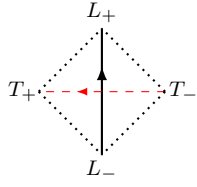
\begin{figure}[h!]
    \centering
    \begin{tikzpicture}[scale=0.7]
    
    \begin{scope}[decoration={markings,mark=at position 0.7 with {\arrow{latex}}}] 
        \draw[thick, postaction={decorate}] (4*\rcs, 1*\rcs) --  (4*\rcs, 5*\rcs);
        \draw[red, thin, dashed, postaction={decorate}] (6*\rcs, 3*\rcs) --  (2*\rcs, 3*\rcs);
    \end{scope}

    \draw[thick, dotted] (2*\rcs, 3*\rcs) --  (4*\rcs, 5*\rcs);
    \draw[thick, dotted] (2*\rcs, 3*\rcs) --  (4*\rcs, 1*\rcs);

    \draw[thick, dotted] (6*\rcs, 3*\rcs) --  (4*\rcs, 5*\rcs);
    \draw[thick, dotted] (6*\rcs, 3*\rcs) --  (4*\rcs, 1*\rcs);

    \node (LM) at (4*\rcs, 0.5*\rcs) {\footnotesize $L_{-}$};
    \node (LP) at (4*\rcs, 5.5*\rcs) {\footnotesize $L_{+}$};

    \node (TM) at (6.6*\rcs, 3*\rcs) {\footnotesize $T_{-}$};
    \node (TP) at (1.5*\rcs, 3*\rcs) {\footnotesize $T_{+}$};

            
            
            
    
\end{tikzpicture}
    \caption{An illustration of the action of the edge/triangle operators on an edge.}
    \label{fig:kitaev_convention}
\end{figure}

Since $H^*$ is finite-dimensional, $S^2 = \id$ and we can well identify $\phi\mapsto S(\phi)$, $\phi\in H^{*}$ with the reverse of the edge direction.
The $H^*$ is a left $H$-module with respect to the following two actions $L_\pm$ and also a left $H^{\text{cop}}$-module with respect to the actions $T_\pm$. 

A particular feature of this semidual model is that one requires another edge operator which is defined via the full  $M(H)$-module 
 as  $M^{a\tens h}_{\pm}= T^{a}_{\pm}\circ L^{h}_{\pm}$.  For a given directed edge $e$, we assign $L_-$ and $L_+$ to its starting and ending vertices respectively. Similarly, for the dual edge $\tilde{e}$, we assign $T_-$ and $T_+$ to its starting and ending faces respectively or
 \begin{equation}
        \arraycolsep=1.2pt\def\arraystretch{1.2}
        T^{a}(p, \psi) =
        \left\{\begin{array}{l}
            T^{a}_{+}(\psi), \quad \text {if $e$ points in the clockwise direction around $p$,} \\
            T^{a}_{-}(\psi), \quad \text {if $e$ points in the anti-clockwise direction around $p$.}
        \end{array}\right.
    \end{equation}
\\
Next, we define vertex and face operators  $A^h(v,p)$ and $B^a(v,p)$  for the semidual model on the Hilbert space $\mathcal{H}$. They require linear ordering of edges at each vertex and in each face which is specified by a site  $s=(v,p)$ \cite{AYK}.  
The vertex operators $A^{h}(v,p) : \mathcal{H} \rightarrow \mathcal{H}$ are restricted to the site $s=(v,p)$ and indexed by $h\in H$. They act as follows:
        \begin{equation}
            A^{h}(v,p)(\Gamma) = \cdots \otimes M^{h_{(\bar{1})}}(v, \psi^{1}) \otimes M^{h_{(\bar{2})}}(v, \psi^{2}) \otimes \cdots \otimes M^{h_{(\bar{n})}}(v, \psi^{n}) \otimes \cdots,
            \label{eqn:vertex_operator}
        \end{equation}
        where $\triangle^{(n-1)}_{M(H)}(1\otimes h) = h_{(\bar{1})} \otimes h_{(\bar{2})} \otimes \cdots h_{(\bar{n})}$ and $\{ \psi^{i} \}_{i=1}^{i=n}$ also called $star(v)$, is the set of edges that originate from $v$ or terminate at $v$ listed in a counter-clockwise order starting from and ending at $p$. 
        The vertex operator acts trivially on edges that do not belong to $star(v)$, this is represented with the ellipses on the extreme ends of the expression on the right hand side of equation (\ref{eqn:vertex_operator}). 
        An example of the action of the vertex operator is illustrated below,
        \begin{center}
            \begin{tikzpicture}

    \node at (1*\rcs, 5*\rcs) {$A^{h}(v,p)$};
    \node at (5*\rcs, 7.5*\rcs) {$\psi^{1}$};
    \node at (3*\rcs, 5.5*\rcs) {$\psi^{2}$};
    \node at (5*\rcs, 2.5*\rcs) {$\psi^{3}$};
    \node at (7*\rcs, 4.5*\rcs) {$\psi^{4}$};
    
    \begin{scope}[thick, decoration={markings, mark=at position 0.7 with {\arrow{latex}}}] 
        \draw[postaction={decorate}] (5*\rcs, 5*\rcs)--(3*\rcs, 5*\rcs);
        \draw[postaction={decorate}] (5*\rcs, 5*\rcs)--(7*\rcs, 5*\rcs);
        \draw[postaction={decorate}] (5*\rcs, 5*\rcs)--(5*\rcs, 3*\rcs);
        \draw[postaction={decorate}] (5*\rcs, 5*\rcs)--(5*\rcs, 7*\rcs);
    \end{scope}

    \node at (8*\rcs, 5*\rcs) {$:=$};
    \node at (12*\rcs, 7.5*\rcs) {$M_{-}^{h_{(\bar{1})}}(\psi^{1})$};
    \node at (10*\rcs, 5.5*\rcs) {$M_{-}^{h_{(\bar{2})}}(\psi^{2})$};
    \node at (12*\rcs, 2.5*\rcs) {$M_{-}^{h_{(\bar{3})}}(\psi^{3})$};
    \node at (14*\rcs, 4.5*\rcs) {$M_{-}^{h_{(\bar{4})}}(\psi^{4})$};
    
    \begin{scope}[thick, decoration={markings, mark=at position 0.7 with {\arrow{latex}}}] 
        \draw[postaction={decorate}] (12*\rcs, 5*\rcs)--(10*\rcs, 5*\rcs);
        \draw[postaction={decorate}] (12*\rcs, 5*\rcs)--(14*\rcs, 5*\rcs);
        \draw[postaction={decorate}] (12*\rcs, 5*\rcs)--(12*\rcs, 3*\rcs);
        \draw[postaction={decorate}] (12*\rcs, 5*\rcs)--(12*\rcs, 7*\rcs);
    \end{scope}

    

    \draw[thick, dotted] (5*\rcs, 5*\rcs)--(6*\rcs, 6*\rcs);
    \draw[thick, dotted] (12*\rcs, 5*\rcs)--(13*\rcs, 6*\rcs);

    \draw[white, fill=white] (5*\rcs, 5*\rcs) circle (3.5pt);
    \node at (5*\rcs, 5*\rcs) {$v$};
    \node at (6.3*\rcs, 6.3*\rcs) {$p$};

    \draw[white, fill=white] (12*\rcs, 5*\rcs) circle (3.5pt);
    \node at (12*\rcs, 5*\rcs) {$v$};
    \node at (13.3*\rcs, 6.3*\rcs) {$p$};






\end{tikzpicture}
        \end{center}

        The face operators $B^{a}(v,p) : \mathcal{H}_{\Gamma} \rightarrow \mathcal{H}_{\Gamma}$, also restricted to the site $s=(v,p)$ and indexed by $a\in H^{cop}$  act according to
        \begin{equation}
            B^{a}(v,p)(\Gamma) = \cdots \otimes T^{a_{(m)}}(p, \psi^{1}) \otimes T^{a_{(m-1)}}(p, \psi^{2}) \otimes \cdots \otimes T^{a_{(1)}}(p, \psi^{m}) \otimes \cdots,
            \label{eqn:face_operator}
        \end{equation}
        where $\triangle^{(m-1)}_{H^{cop}}(a) = a_{(m)} \otimes a_{(m-1)} \otimes \cdots \otimes a_{(1)}$ and $\{ \psi^{i} \}_{i=1}^{i=m}$ also called $bound(p)$, is the set of edges edges that bound $p$ listed in a counter-clockwise order starting from and ending at $v$. It also acts trivially on edges that do not belong to $bound(p)$ and this is also represented with the ellipses on the extreme ends of the expression on the right hand side on equation (\ref{eqn:face_operator}). An example of the action of the face operator is illustrated below:
        \begin{center}
            \begin{tikzpicture}

    \node at (0.5*\rcs, 5.5*\rcs) {$B^{a}(v,p)$};
    \node at (6.5*\rcs, 5.5*\rcs) {$\psi^{1}$};
    \node at (4.5*\rcs, 3.5*\rcs) {$\psi^{2}$};
    \node at (2.5*\rcs, 5.5*\rcs) {$\psi^{3}$};
    \node at (5*\rcs, 7.5*\rcs) {$\psi^{4}$};
    
    \begin{scope}[thick, decoration={markings, mark=at position 0.7 with {\arrow{latex}}}] 
        \draw[postaction={decorate}] (6*\rcs, 7*\rcs)--(6*\rcs, 4*\rcs);
        \draw[postaction={decorate}] (6*\rcs, 4*\rcs)--(3*\rcs, 4*\rcs);
        \draw[postaction={decorate}] (3*\rcs, 4*\rcs)--(3*\rcs, 7*\rcs);
        \draw[postaction={decorate}] (3*\rcs, 7*\rcs)--(6*\rcs, 7*\rcs);
    \end{scope}

    \node at (7.5*\rcs, 5.5*\rcs) {$:=$};
    \node at (15.5*\rcs, 5.5*\rcs) {$T_{+}^{a_{(4)}}(\psi^{1})$};
    \node at (12.5*\rcs, 3.5*\rcs) {$T_{+}^{a_{(3)}}(\psi^{2})$};
    \node at (9.5*\rcs, 5.5*\rcs) {$T_{+}^{a_{(2)}}(\psi^{3})$};
    \node at (12.5*\rcs, 7.5*\rcs) {$T_{+}^{a_{(1)}}(\psi^{4})$};
    
    \begin{scope}[thick, decoration={markings, mark=at position 0.7 with {\arrow{latex}}}] 
        \draw[postaction={decorate}] (14*\rcs, 7*\rcs)--(14*\rcs, 4*\rcs);
        \draw[postaction={decorate}] (14*\rcs, 4*\rcs)--(11*\rcs, 4*\rcs);
        \draw[postaction={decorate}] (11*\rcs, 4*\rcs)--(11*\rcs, 7*\rcs);
        \draw[postaction={decorate}] (11*\rcs, 7*\rcs)--(14*\rcs, 7*\rcs);
    \end{scope}

    

    \draw[thick, dotted] (4.5*\rcs, 5.5*\rcs)--(6*\rcs, 7*\rcs);
    \draw[thick, dotted] (12.5*\rcs, 5.5*\rcs)--(14*\rcs, 7*\rcs);

    \draw[white, fill=white] (6*\rcs, 7*\rcs) circle (3.5pt);
    \node at (6*\rcs, 7*\rcs) {$v$};
    \draw[white, fill=white] (4.5*\rcs, 5.5*\rcs) circle (3.5pt);
    \node at (4.5*\rcs, 5.5*\rcs) {$p$};

    \draw[white, fill=white] (14*\rcs, 7*\rcs) circle (3.5pt);
    \node at (14*\rcs, 7*\rcs) {$v$};
    \draw[white, fill=white] (12.5*\rcs, 5.5*\rcs) circle (3.5pt);
    \node at (12.5*\rcs, 5.5*\rcs) {$p$};






\end{tikzpicture}
        \end{center}
        
One can show that the directed graph $\Gamma$, equipped with these operators admits an $M(H)$-module  implemented by the 
relations \cite{FOM}
\begin{thm}{\cite{FOM}}
Let $H$ be a finite-dimensional Hopf algebra  satisfying $S^2 = \id$ with dual $H^*$ and the graph $\Gamma$  a square lattice as above.
 Then the operators  $A^h(v,p)$ and $ B^a(v,p) $ define an  $M(H)$ representation on $ H^{*\tens |E|}$ associated to each site $(v,p)$. Here $(a\tens h)$ acts by $A^h(v,p)\circ B^a(v,p) $, i.e. these enjoy the commutation relations 
\bes\label{AB-bicross}
 A^{h}(v,p)\circ B^{a} (v,p)&=& B^{(h\o aSh\t)}(v,p)\circ A^{h\thr}(v,p), \quad \forall h\in H,\;a\in H^{\text{cop}},\\[0.5\baselineskip]
 \label{AB-bicross1}
 A^{h}(v,p)\circ  A^{g}(v,p)&=& A^{hg}(v,p),\quad B^{a} (v,p)\circ  B^{b} (v,p) = B^{ab} (v,p).
\ees
\end{thm}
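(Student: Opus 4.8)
The plan is to derive both \eqref{AB-bicross} and \eqref{AB-bicross1} from \emph{local} (single-edge) statements about the operators \eqref{triangle operators}, and then propagate them over the site $(v,p)$ using the coproduct-distribution structure of \eqref{eqn:vertex_operator}--\eqref{eqn:face_operator} together with coassociativity. First I would record the local input, obtained by direct computation from \eqref{triangle operators} using the Hopf axioms and $S^{2}=\id$: each $L_{\pm}$ is a left $H$-action and each $T_{\pm}$ a left $H^{\text{cop}}$-action on $H^{*}$, and --- writing $M^{a\tens h}_{\pm}=T^{a}_{\pm}\circ L^{h}_{\pm}$ as in the text --- the composites $M_{\pm}$ are left $M(H)$-module structures on $H^{*}$ (the $M_{+}$ one being the canonical covariant action \eqref{left covariant action on H*}, and $M_{-}=S\circ M_{+}\circ S$ by \eqref{relations}). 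In particular $T^{a}_{\pm}=M^{a\tens 1}_{\pm}$, $L^{h}_{\pm}=M^{1\tens h}_{\pm}$ and $T^{1}_{\pm}=L^{1}_{\pm}=\id$, so that $A^{1}(v,p)=B^{1}(v,p)=\id$ and \emph{both} vertex and face operators are assembled out of the single family $M_{\pm}$.

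Given this, \eqref{AB-bicross1} is immediate: composing two vertex operators multiplies, edge by edge, the $M_{\pm}$-exponents (by the module property), and since $\Delta^{(n-1)}_{M(H)}$ is an algebra homomorphism while $(1\tens h)(1\tens g)=1\tens hg$ in $M(H)$ by \eqref{sd4}, the edgewise products reassemble into the distribution of $\Delta^{(n-1)}_{M(H)}(1\tens hg)$, giving $A^{h}\circ A^{g}=A^{hg}$. The identical argument with $\Delta^{(m-1)}_{H^{\text{cop}}}$, $T^{a}_{\pm}\circ T^{b}_{\pm}=T^{ab}_{\pm}$ and $(a\tens 1)(b\tens 1)=ab\tens 1$ gives $B^{a}\circ B^{b}=B^{ab}$.

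For the cross relation \eqref{AB-bicross} the geometry enters first. At the site $(v,p)$ of a square lattice the sets $\mathrm{star}(v)$ and $\mathrm{bound}(p)$ overlap in exactly the two edges of $\partial p$ incident to $v$, and with the counter-clockwise orderings of \eqref{eqn:vertex_operator}--\eqref{eqn:face_operator} these two edges occupy the two extreme slots of both lists (in opposite cyclic order in the two lists). On every other edge $A^{h}(v,p)$ and $B^{a}(v,p)$ act on disjoint tensor factors and commute. On each of the two shared edges one meets a composite of an $M_{\pm}$-factor (from $A^{h}$) with a $T_{\pm}$-factor (from $B^{a}$); the local claim is that this composite can be rewritten, via \eqref{triangle operators}, \eqref{relations}, coassociativity and $S^{2}=\id$, with the $T_{\pm}$-factor acting first, at the cost of conjugating its $H^{\text{cop}}$-argument --- this is precisely the defining cross relation $hb=(h\o b Sh\t)h\thr$ of $M(H)$ from \eqref{strengthing mirror} realised on one copy of $H^{*}$, the orientation-dependent cases being related by \eqref{relations}. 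Carrying this out on both shared edges and then collecting every $T_{\pm}$-factor to the left and every $M_{\pm}$-factor to the right, the two local conjugations --- one at each end of $\partial p$, of opposite handedness because the shared edges appear in opposite order in the two lists --- combine, against $\Delta^{(m-1)}_{H^{\text{cop}}}$ and against the genuinely non-cocommutative coproduct \eqref{coproduct of M(H)} of $M(H)$, into the single replacement $a\mapsto h\o a Sh\t$ with $h\thr$ surviving on the vertex side, which is \eqref{AB-bicross}. Since \eqref{AB-bicross} together with \eqref{AB-bicross1} are exactly the subalgebra and cross relations presenting the algebra $M(H)$ through \eqref{sd4}--\eqref{strengthing mirror}, the assignment of the statement furnishes the asserted $M(H)$-representation at $(v,p)$.

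The hard part is this last step: tracking all the Sweedler legs of $h$ and of $a$ simultaneously as they are distributed over the (four) edges of $\mathrm{star}(v)$ and the (four) edges of $\mathrm{bound}(p)$, and checking that the two local conjugations assemble into \emph{exactly} $h\o(-)Sh\t$ with $h\thr$ left over --- rather than some other arrangement of legs --- notwithstanding that $A^{h}$ is governed by \eqref{coproduct of M(H)} rather than by $\Delta_{H}$, that $B^{a}$ uses the reversed $H^{\text{cop}}$-coproduct, and that the $M_{\pm}/T_{\pm}$ choices on the shared edges are dictated case-by-case by the local edge orientations. This is the Sweedler-notation computation the appendices carry out. (Alternatively one could transport the corresponding $D(H)$ Kitaev-model statement of \cite{AYK,yan2022ribbon} across the semidualisation \cite{MaSch} relating $(D(H),H^{\text{cop}})$ to $(M(H),H^{*})$, but identifying the operators \eqref{eqn:vertex_operator}--\eqref{eqn:face_operator} with the transported ones amounts to the same bookkeeping.)
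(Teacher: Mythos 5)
First, a point of comparison: this theorem is imported from \cite{FOM} and is not proved anywhere in the present paper --- the appendices establish analogous statements for ribbon operators against $A$ and $B$, not the theorem itself --- so your proposal has to stand on its own rather than be matched against an in-paper argument. On its own terms, the first half is sound and essentially complete: once one grants that each edge operator $M^{a\tens h}_{\pm}=T^{a}_{\pm}\circ L^{h}_{\pm}$ is a left $M(H)$-module structure on the single copy of $H^{*}$ (which you correctly reduce to a direct check of \eqref{triangle operators} and \eqref{relations}, and which for $M_{+}$ is just the covariance of \eqref{left covariant action on H*}), the relations \eqref{AB-bicross1} do follow edgewise from the facts that $\Delta^{(n-1)}_{M(H)}$ and $\Delta^{(m-1)}_{H^{\text{cop}}}$ are algebra maps and that $1\tens H$ and $H^{\text{cop}}\tens 1$ are subalgebras of $M(H)$.

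The gap is in \eqref{AB-bicross}, which is the actual content of the theorem. You correctly locate the mechanism --- the two edges shared by $\mathrm{star}(v)$ and $\mathrm{bound}(p)$, sitting in the extreme slots of both ordered lists, and a single-edge exchange relation realising $hb=(h\o bSh\t)h\thr$ --- but you then \emph{assert} that the two local conjugations reassemble into exactly $B^{h\o aSh\t}\circ A^{h\thr}$ and defer the verification to ``the Sweedler-notation computation the appendices carry out.'' The appendices do not carry out this computation (they treat ribbon operators), and the step is not routine. On a shared edge the sign of the vertex factor $M_{\pm}$ and the sign of the face factor $T_{\pm}$ are fixed by two \emph{independent} orientation conventions (edge pointing into or out of $v$, versus edge running clockwise or anticlockwise around $p$), so one needs four mixed single-edge exchange relations, not the one you quote, and the ``related by \eqref{relations}'' remark does not by itself produce them. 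One must then check that the legs generated at the first shared edge --- where the iterated coproduct \eqref{coproduct of M(H)} of $1\tens h$ places a pure $1\tens h_{(\cdot)}$ factor --- and at the last shared edge --- where it places a factor of the form $h_{(\cdot)}Sh_{(\cdot)}\tens h_{(\cdot)}$ --- recombine with the reversed $H^{\text{cop}}$-legs distributed by $B^{a}$ into precisely $h\o aSh\t$ with $h\thr$ surviving on the vertex side. Until that bookkeeping is exhibited (or the semidualisation transport you mention in passing is made precise, which is a comparable amount of work), the proof of \eqref{AB-bicross}, and hence of the representation statement, is incomplete.
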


Let  $\ell\in H$, $k\in H^{\text{cop}}$ be normalized Haar integrals of the finite-dimensional Hopf algebras $H$ and $H^{\text{cop}}$ respectively. One can show that $A^{l}(v,p)$ depends only on $v$ and $B^{k}(v,p)$ depends only on $p$. Thus we set
\begin{equation}
    A(v) := A^{l}(v,p) \qquad \text{and} \qquad B(p) := B^{k}(v,p).
\end{equation}
The Hamiltonian for the system is defined by
\begin{equation}
    \mathfrak{H} = \sum_{v\in V} \big( \text{id} - A(v) \big) + \sum_{p\in F} \big( \text{id} - B(p) \big).
\end{equation}
The associated ground state is given as
\begin{equation}
    \mathcal{L} = \Big\{ \ket{\xi} \in \mathcal{H}_{\Gamma} : A(v)\ket{\xi} = \ket{\xi} ~ \text{and} ~ B(p)\ket{\xi} = \ket{\xi}, ~ \forall ~ v\in V ~ \text{and} ~ p\in F \Big\}.
\end{equation}

Similar to the left module actions, we  can introduce a right-module construction. To this end, we  define four sets of right-module edge operators. 
These are obtained using the relation \eqref{l to r action}, the right module edge operators are a family of linear operators $\tilde{L}^{h}_{\pm} : \mathcal{H}_{e} \rightarrow \mathcal{H}_{e}$ and  $\tilde{T}^{a}_{\pm} : \mathcal{H}_{e} \rightarrow \mathcal{H}_{e}$ with
\be
\tilde{L}^h_\pm = L_\pm^{Sh}, \quad \tilde{T}^a_\pm = T_ \pm^{S^{-1}(a)}, \quad  a\in H^{\text{cop}}, h\in H
\ee
More explicitly, we have 
  \begin{equation}\label{right triangle operators}
        \arraycolsep=1.2pt\def\arraystretch{1.2}
        \begin{array}{ccc}
            \tilde{L}_{+}^{h}(\psi) = \langle h, S\psi_{(3)}\psi_{(1)}\rangle \psi_{(2)}, &\qquad & \tilde{L}_{-}^{h}(\psi) = \langle h, \psi_{(1)}S^{-1}\psi_{(3)}\rangle \psi_{(2)}\\
            \tilde{T}_{+}^{a}(\psi) = \langle a, \psi_{(1)}\rangle \psi_{(2)}, &\qquad & \tilde{T}_{-}^{a}(\psi) = \langle S^{-1}a, \psi_{(2)}\rangle \psi_{(1)}.
        \end{array}
    \end{equation}
The $\tilde{L}^{h}$'s and $\tilde{T}^{a}$'s are composes to form another linear operator $\tilde{M}^{(a,h)} : \mathcal{H}_{e} \rightarrow \mathcal{H}_{e}$, which is given by
    \begin{equation}
        \tilde{M}^{(a,h)}_{\pm}(\psi) = \tilde{T}^{a}_{\pm} \circ \tilde{L}^{h}_{\pm}(\psi).
    \end{equation}
Here, $H^*$ is   right $H$-module with respect to  $\tilde{L}^{h}_{\pm}$  as well as a right $H^{\text{cop}}$-module with respect to  $\tilde{T}^{a}_{\pm} $.
Right module face and vertex operators $\tilde{A}^{h}(v,p)$ and $\tilde{B}^{a}(v,p)$  can be constructed in a similar way from the above edge operators.
One can easily check that they satisfy the relation 
\begin{equation}
    \arraycolsep=1.2pt\def\arraystretch{1.2}
    \begin{array}{c}
        \tilde{A}^{h}(v,p) \circ \tilde{B}^{a}(v,p) = \tilde{B}^{h_{(1)}aSh_{(2)}}(v,p) \circ \tilde{A}^{h_{(3)}}(v,p),
    \end{array}
    \label{eqn:relationship_between_right_module_vertex_and_face_operators_at_a_site}
\end{equation}
for all $a\in H^{\text{cop}}$ and $h\in H$. 
These generate  a representation of  $M(H)^{\text{cop}}$. 

\section{Ribbon Operators in the Semidual Kitaev Model}
\label{sec:ribbon_operators}
In this section, we will construct the ribbon operators for the semidual model on closed surfaces and explore some of their algebraic properties.  
In the case of the semidual model based on $M(H)$, the ribbon operators reveal themselves as representations of  $M(H)$.
The fundamental building blocks of a ribbon are direct and dual triangles. A direct triangle is a triangle on the original lattice, consisting of a directed edge and two adjacent sites such that the edge connects the two sites. The dual triangle is a direct triangle on the dual lattice.

\subsection{Directed ribbons and ribbon types}

   Let $\Gamma $ be a directed graph, a direct path $\calP$ of length $n$ is a list of successive direct edges $\calP := \left(e_{1}, e_{2}, \ldots e_{n}  \right)$ such that $\partial_{1}e_{i} = \partial_{0}e_{i+1}$ or $\partial_{1}e_{i} = \partial_{0}\bar{e}_{i+1}$ for $i=1, 2, \ldots, n-1$. Similarly, a dual path $\calP^{*}$ of length $n$ is a list of successive dual edges $\calP^{*} := \left(e_{1}^{*}, e_{2}^{*}, \ldots, e_{n}^{*}\right)$ such that $\partial_{1}e_{i}^{*} = \partial_{0}e_{i+1}^{*}$ or $\partial_{1}e_{i}^{*} = \partial_{0}\bar{e}_{i+1}^{*}$ for $i=1, 2, \ldots, n-1$.

\begin{defn}[Triangles]
    Let $\Gamma = (E,V,F)$ be a directed graph, a triangle is made up of two sites  which forms the legs of the triangle and either a dual edge or a direct edge which forms the base of the triangle. A direct triangle consists of a direct edge $e$ representing the base and two face adjacent sites $s_{0}$ and $s_{1}$ representing the legs such that $\partial_{0}e = v_{s_{0}}$ and $\partial_{1}e = v_{s_{1}}$ or $\partial_{0}\bar{e} = v_{s_{0}}$ and $\partial_{1}\bar{e} = v_{s_{1}}$. We denote a direct triangle by $\tau := (s_{0}, e, s_{1})$ or $\tau := (\partial_{0}\tau, e_{\tau}, \partial_{1}\tau)$. A dual triangle consists of a dual edge $e^{*}$ representing the base and two vertex adjacent sites $s_{0}$ and $s_{1}$ representing the legs such that $\partial_{0}e^{*} = p_{s_{0}}$ and $\partial_{1}e^{*} = p_{s_{1}}$ or $\partial_{0}\bar{e}^{*} = p_{s_{0}}$ and $\partial_{1}\bar{e}^{*} = p_{s_{1}}$; we will denote a dual triangle by $\tau := (s_{0}, e^{*}, s_{1})$ or $\tau := (\partial_{0}\tau, e^{*}_{\tau}, \partial_{1}\tau)$. Using this notation, we mean $s_{i}$ and $\partial_{i}\tau$ represent the initial   site for $i=0$ and the terminal site of $\tau$ $i=1$. This imposes a lateral orientation on a triangle, that is to say triangles point from their initial sites to their terminal sites. 
\end{defn}
Triangles  have local orientation or  chirality.  A direct or dual triangle is called a left-handed (right-handed) triangle if the edge of the triangle is on the left-hand (right-hand) side when one passes through the triangle along its positive direction. Left-handed (right-handed) triangles are denoted by $\tau_L$ and $\tilde{\tau}_L$  ($\tau_R$ and $\tilde{\tau}_R$). This is described  in  Figure \ref{fig:graph_with_triangles_and_ribbons}.

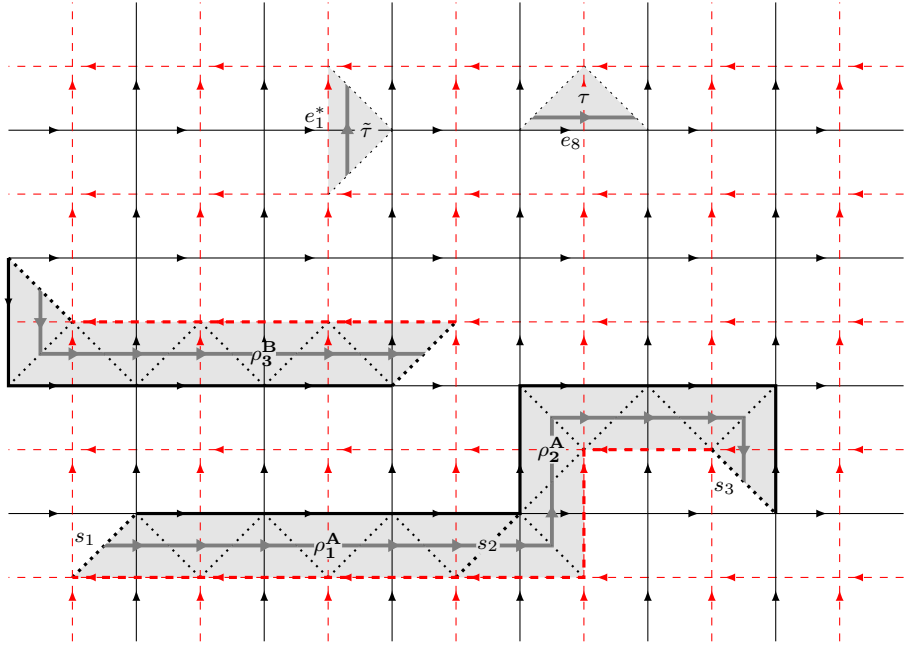
\begin{figure}[h!]
    \centering
    \newcommand\gghgs{6} 
\newcommand\ggvgs{4} 

\newcommand\gs{2}
\newcommand\gss{2*\gs}

\begin{tikzpicture}[scale = 0.85]
    	
	\draw[thick, dotted] (2.5*\gs, 3.5*\gs) -- (3*\gs, 4*\gs);
    \draw[thick, dotted] (3*\gs, 4*\gs) -- (2.5*\gs, 4.5*\gs);
    \fill [gray!20] (2.5*\gs, 3.5*\gs) -- (3*\gs, 4*\gs) -- (2.5*\gs, 4.5*\gs) -- (2.5*\gs, 3.5*\gs);
    	
    \draw[thick, dotted] (4*\gs, 4*\gs) -- (4.5*\gs, 4.5*\gs);
    \draw[thick, dotted] (4.5*\gs, 4.5*\gs) -- (5*\gs, 4*\gs);
    \fill [gray!20] (4*\gs, 4*\gs) -- (4.5*\gs, 4.5*\gs) -- (5*\gs, 4*\gs) -- (4*\gs, 4*\gs);
    	
    	
    	
    \fill [gray!20] (0*\gs, 3*\gs) -- (0.5*\gs, 2.5*\gs) -- (3.5*\gs, 2.5*\gs) -- (3*\gs, 2*\gs)
    	 -- (0*\gs, 2*\gs) -- (0*\gs, 3*\gs);
      
    \draw[very thick, dotted] (0*\gs, 3*\gs) -- (0.5*\gs, 2.5*\gs);
    \draw[thick, dotted] (0.5*\gs, 2.5*\gs) -- (0*\gs, 2*\gs);
    \draw[thick, dotted] (0.5*\gs, 2.5*\gs) -- (1*\gs, 2*\gs);
    \draw[thick, dotted] (1*\gs, 2*\gs) -- (1.5*\gs, 2.5*\gs);
    \draw[thick, dotted] (1.5*\gs, 2.5*\gs) -- (2*\gs, 2*\gs);
    \draw[thick, dotted] (2*\gs, 2*\gs) -- (2.5*\gs, 2.5*\gs);
    \draw[thick, dotted] (2.5*\gs, 2.5*\gs) -- (3*\gs, 2*\gs);
    \draw[very thick, dotted] (3*\gs, 2*\gs) -- (3.5*\gs, 2.5*\gs);
    	
    \fill [gray!20] (0.5*\gs, 0.5*\gs) -- (4.5*\gs, 0.5*\gs) -- (4.5*\gs, 1.5*\gs) -- (5.5*\gs, 1.5*\gs)
    	 -- (6*\gs, 1*\gs)  -- (6*\gs, 2*\gs) -- (4*\gs, 2*\gs) -- (4*\gs, 1*\gs) --  (1*\gs, 1*\gs) 
    	 -- (0.5*\gs, 0.5*\gs);
      
    \draw[very thick, dotted] (0.5*\gs, 0.5*\gs) -- (1*\gs, 1*\gs);
    \draw[thick, dotted] (1.5*\gs, 0.5*\gs) -- (1*\gs, 1*\gs);
    \draw[thick, dotted] (1.5*\gs, 0.5*\gs) -- (2*\gs, 1*\gs);
    \draw[thick, dotted] (2.5*\gs, 0.5*\gs) -- (2*\gs, 1*\gs);
    \draw[thick, dotted] (2.5*\gs, 0.5*\gs) -- (3*\gs, 1*\gs);
    \draw[thick, dotted] (3.5*\gs, 0.5*\gs) -- (3*\gs, 1*\gs);
    \draw[very thick, dotted] (3.5*\gs, 0.5*\gs) -- (4*\gs, 1*\gs);
    \draw[thick, dotted] (4.5*\gs, 0.5*\gs) -- (4*\gs, 1*\gs);
    \draw[thick, dotted] (4.5*\gs, 1.5*\gs) -- (4*\gs, 1*\gs);
    \draw[thick, dotted] (4.5*\gs, 1.5*\gs) -- (4*\gs, 2*\gs);
    \draw[thick, dotted] (4.5*\gs, 1.5*\gs) -- (5*\gs, 2*\gs);
    \draw[thick, dotted] (5.5*\gs, 1.5*\gs) -- (5*\gs, 2*\gs);
    \draw[thick, dotted] (5.5*\gs, 1.5*\gs) -- (6*\gs, 2*\gs);
    \draw[very thick, dotted] (5.5*\gs, 1.5*\gs) -- (6*\gs, 1*\gs);
    	
	\begin{scope}[decoration={markings,mark=at position 0.4 with {\arrow{latex}}}]
    	\foreach \i in {0,...,\gghgs}{
        	\foreach \j in {1,...,\ggvgs}{
        	    \draw[postaction={decorate}] (\i*\gs, \j*\gs) -- (\i*\gs + \gs, \j*\gs);
        		\draw[red, thin, dashed, postaction={decorate}] (\i*\gs + \gs, \j*\gs + 0.5*\gs) -- (\i*\gs, \j*\gs + 0.5*\gs);
            }
            \draw[red, thin, dashed, postaction={decorate}] (\i*\gs + \gs, 0.5*\gs) -- (\i*\gs, 0.5*\gs);
    	}
    		
    	\foreach \j in {0,...,\ggvgs}{
        	\foreach \i in {1,...,\gghgs}{
        		\draw[postaction={decorate}] (\i*\gs, \j*\gs) -- (\i*\gs, \j*\gs + \gs);
        		\draw[red, thin, dashed, postaction={decorate}] (\i*\gs + 0.5*\gs, \j*\gs) -- (\i*\gs + 0.5*\gs, \j*\gs + \gs);
        	}
        	\draw[red, thin, dashed, postaction={decorate}] (0.5*\gs, \j*\gs) -- (0.5*\gs, \j*\gs + \gs);
    	}

        \draw[postaction={decorate}] (0*\gs, 3*\gs) -- (0*\gs, 2*\gs);
    \end{scope}

    \begin{scope}
    	\draw[very thick] (1*\gs, 1*\gs) -- (4*\gs, 1*\gs) -- (4*\gs, 2*\gs) -- (6*\gs, 2*\gs) -- (6*\gs, 1*\gs);
        \draw[red, very thick, dashed] (5.5*\gs, 1.5*\gs) -- (4.5*\gs, 1.5*\gs) -- (4.5*\gs, 0.5*\gs) -- (3.5*\gs, 0.5*\gs) -- (2.5*\gs, 0.5*\gs) -- (1.5*\gs, 0.5*\gs) -- (0.5*\gs, 0.5*\gs);
    \end{scope}

    \begin{scope}
    	\draw[red, very thick, dashed] (3.5*\gs, 2.5*\gs) -- (0.5*\gs, 2.5*\gs);
        \draw[very thick] (0*\gs, 3*\gs) -- (0*\gs, 2*\gs) -- (3*\gs, 2*\gs);
    \end{scope}

    \varrow{\gs}{(2.65*\gs, 3.65*\gs)}{(2.65*\gs, 4.35*\gs)}{0.015}{1}
    \harrow{\gs}{(4.1*\gs, 4.1*\gs)}{(4.9*\gs, 4.1*\gs)}{0.015}{1}

    \varrow{\gs}{(0.25*\gs, 2.75*\gs)}{(0.25*\gs, 2.25*\gs)}{0.015}{0}
    \harrow{\gs}{(0.25*\gs, 2.25*\gs)}{(0.75*\gs, 2.25*\gs)}{0.015}{1}
    \harrow{\gs}{(0.75*\gs, 2.25*\gs)}{(1.25*\gs, 2.25*\gs)}{0.015}{0}
    \harrow{\gs}{(1.25*\gs, 2.25*\gs)}{(1.75*\gs, 2.25*\gs)}{0.015}{1}
    \harrow{\gs}{(1.75*\gs, 2.25*\gs)}{(2.25*\gs, 2.25*\gs)}{0.015}{0}
    \harrow{\gs}{(2.25*\gs, 2.25*\gs)}{(2.75*\gs, 2.25*\gs)}{0.015}{1}
    \harrow{\gs}{(2.75*\gs, 2.25*\gs)}{(3.25*\gs, 2.25*\gs)}{0.015}{0}

    \harrow{\gs}{(0.75*\gs, 0.75*\gs)}{(1.25*\gs, 0.75*\gs)}{0.015}{1}
    \harrow{\gs}{(1.25*\gs, 0.75*\gs)}{(1.75*\gs, 0.75*\gs)}{0.015}{0}
    \harrow{\gs}{(1.75*\gs, 0.75*\gs)}{(2.25*\gs, 0.75*\gs)}{0.015}{1}
    \harrow{\gs}{(2.25*\gs, 0.75*\gs)}{(2.75*\gs, 0.75*\gs)}{0.015}{0}
    \harrow{\gs}{(2.75*\gs, 0.75*\gs)}{(3.25*\gs, 0.75*\gs)}{0.015}{1}
    \harrow{\gs}{(3.25*\gs, 0.75*\gs)}{(3.75*\gs, 0.75*\gs)}{0.015}{0}
    \harrow{\gs}{(3.75*\gs, 0.75*\gs)}{(4.25*\gs, 0.75*\gs)}{0.015}{1}
    \varrow{\gs}{(4.25*\gs, 0.75*\gs)}{(4.25*\gs, 1.25*\gs)}{0.015}{0}
    \varrow{\gs}{(4.25*\gs, 1.25*\gs)}{(4.25*\gs, 1.75*\gs)}{0.015}{1}
    \harrow{\gs}{(4.25*\gs, 1.75*\gs)}{(4.75*\gs, 1.75*\gs)}{0.015}{0}
    \harrow{\gs}{(4.75*\gs, 1.75*\gs)}{(5.25*\gs, 1.75*\gs)}{0.015}{1}
    \harrow{\gs}{(5.25*\gs, 1.75*\gs)}{(5.75*\gs, 1.75*\gs)}{0.015}{0}
    \varrow{\gs}{(5.75*\gs, 1.75*\gs)}{(5.75*\gs, 1.25*\gs)}{0.015}{1}
    	
    \draw[text=black, draw=white, fill=white] (2.4*\gs, 4.1*\gs) circle (5pt) node {\footnotesize $e^{*}_{1}$};
    \draw[text=black, draw=gray!20, fill=gray!20] (2.8*\gs, 4*\gs) circle (5pt) node {\footnotesize $\tilde{\tau}$};
    	
    \draw[text=black, draw=gray!20, fill=gray!20] (4.5*\gs, 4.25*\gs) circle (5pt) node {\footnotesize $\tau$};
    \draw[text=black, draw=white, fill=white, line width=0mm] (4.4*\gs, 3.9*\gs) circle (5pt) node {\footnotesize $e_{8}$};
    	
    	
    \draw[text=black, draw=gray!20, fill=gray!20] (2*\gs, 2.25*\gs) circle (5pt) node {\footnotesize $\mathbf{\rho^{B}_{3}}$};
    	
    \draw[text=black, draw=white, fill=white] (0.6*\gs, 0.8*\gs) circle (5pt) node {\footnotesize $s_{1}$};
    \draw[text=black, draw=gray!20, fill=gray!20] (2.5*\gs, 0.75*\gs) circle (6pt) node {\footnotesize $\mathbf{\rho^{A}_{1}}$};
    \draw[text=black, draw=gray!20, fill=gray!20] (3.75*\gs, 0.75*\gs) circle (5pt) node {\footnotesize $s_{2}$};
    \draw[text=black, draw=gray!20, fill=gray!20] (4.25*\gs, 1.5*\gs) circle (6pt) node {\footnotesize $\mathbf{\rho^{A}_{2}}$};
    \draw[text=black, draw=white, fill=white] (5.6*\gs, 1.2*\gs) circle (5pt) node {\footnotesize $s_{3}$};

            
    	
\end{tikzpicture}
    \caption{An example of an arbitrary shaped oriented 2d-graph with examples of a direct (dual) triangle $\tau_{i}$($\tilde{\tau}_{i}$) and type-$A$ (type-$B$) ribbon $\rho^{A}_{i}$($\rho^{B}_{i}$). The dual ribbon paths are denoted by the thick dashed lines and the direct ribbon paths are denoted by the thick continuous lines.}
    \label{fig:graph_with_triangles_and_ribbons}
\end{figure}

A ribbon $\rho$ of length $n\in\mathbb{N}$ is a sequence of $n$ mutually non overlapping consecutive triangles such that the ending site of a triangle is the starting site of the next triangle. A ribbon is called direct (dual) if it is made up of only direct (dual) triangles, otherwise it is called proper. We will denote the starting site (end site) of $\rho$ by $\partial_{0}\rho$ ($\partial_{1}\rho$). A ribbon $\rho$ is closed if the starting and ending site are the same (that is $\partial_{0}\rho = \partial_{1}\rho$), else it is said to be open. There are two types of ribbons that are classified by the chirality or local orientation of the triangles composing them. A directed ribbon is called type-A (type-B) if all the direct triangles in it are left-handed (right-handed). Thus the type-A ribbons denoted by $\rho_{A}$, consist of left-handed direct triangles and right-handed dual triangles and the type-B ribbons denoted by $\rho_{B}$, consist of right-handed direct triangles and left-handed dual triangles. These are illustrated in \ref{fig:graph_with_triangles_and_ribbons}. The different types are considered separately.

\subsection{Ribbon operators for the semidual model}

For an element $a\otimes h \in H^{\text{cop}}\otimes H$ and a directed triangle $\tau$, we denote the ribbon operators by $F^{(a, h)}(\tau)$ or $F^{a\otimes h}(\tau)$.  To establish these operators in the semidual lattice code model, we begin by  defining the triangle operators and subsequently introduce a recursive relation for determining the ribbon operator.  We refer to the triangle operators as fundamental or elementary ribbons.
The operators will act on the whole Hilbert space $\calH$, but the action is non-trivial only on the edges contained in $\rho$.

Given the left and right module edge operators defined in  \eqref{triangle operators} and \eqref{right triangle operators}, we consider the different cases for triangle operators separately following conventions outlined in \cite{yan2022ribbon} to define the triangle operators as follows:
\begin{equation}
    \label{rdual_rib_plus}
    \begin{minipage}[c][2cm][c]{0.2\textwidth}
        \begin{tikzpicture}[scale=0.7]

    \fill [gray!40] (4*\rcs, 1*\rcs)--(1.5*\rcs, 3.5*\rcs)--(6.5*\rcs, 3.5*\rcs)--(4*\rcs,1*\rcs);
    
    \node (A) at (4.5*\rcs, 5*\rcs) {$\psi$};
    \begin{scope}[thick, decoration={markings, mark=at position 0.95 with {\arrow{latex}}}] 
        \draw[postaction={decorate}] (4*\rcs, 1*\rcs) -- (4*\rcs, 5*\rcs);
    \end{scope}
    
    \draw[thick, dotted] (4*\rcs,1*\rcs) -- (1.5*\rcs, 3.5*\rcs);
    \draw[thick, dotted] (4*\rcs,1*\rcs) -- (6.5*\rcs, 3.5*\rcs);
    \begin{scope}[red, thin, dashed, decoration={markings, mark=at position 0.4 with {\arrow{latex}}}] 
        \draw[postaction={decorate}] (6.5*\rcs, 3.5*\rcs) -- (1.5*\rcs, 3.5*\rcs);
    \end{scope}
    
    \draw[text=black, draw=white, fill=white] (2*\rcs, 2.25*\rcs) circle (5pt) node {\footnotesize $s_{0}$};
    \draw[text=black, draw=white, fill=white] (6*\rcs, 2.25*\rcs) circle (5pt) node {\footnotesize $s_{1}$};

    \draw[double, -latex] (3.5*\rcs, 2.5*\rcs) -- (4.5*\rcs, 2.5*\rcs);
    \harrow{\rcs}{(2.5*\rcs, 2.5*\rcs)}{(5.5*\rcs, 2.5*\rcs)}{0.1}{0}
    
            
            
            
    
\end{tikzpicture}
    \end{minipage}
    \hspace{0.02\textwidth}
    \begin{minipage}[c][2cm][t]{0.6\textwidth}
        $$F^{(a,h)}\left(\tilde{\tau}_{L}\right)\ket{\psi} =
        \epsilon(a)L^{h}_{+}\ket{\psi} 
        =\epsilon(a)\langle h, S\psi_{(1)}\psi_{(3)}\rangle\ket{\psi_{(2)}}
        $$
    \end{minipage}
\end{equation}
\begin{equation}
    \label{rdual_rib_minus}
    \begin{minipage}[c][2cm][c]{0.2\textwidth}
        \begin{tikzpicture}[scale=0.7]

    \fill [gray!20] (4*\rcs, 1*\rcs)--(1.5*\rcs, 3.5*\rcs)--(6.5*\rcs, 3.5*\rcs)--(4*\rcs,1*\rcs);
    
    \node (A) at (4.5*\rcs, 5*\rcs) {$\psi$};
    \begin{scope}[thick, decoration={markings, mark=at position 0.25 with {\arrow{latex}}}] 
        \draw[postaction={decorate}] (4*\rcs, 5*\rcs) -- (4*\rcs, 1*\rcs);
    \end{scope}
    
    \draw[thick, dotted] (4*\rcs,1*\rcs) -- (1.5*\rcs, 3.5*\rcs);
    \draw[thick, dotted] (4*\rcs,1*\rcs) -- (6.5*\rcs, 3.5*\rcs);
    \begin{scope}[red, thin, dashed, decoration={markings, mark=at position 0.4 with {\arrow{latex}}}] 
        \draw[postaction={decorate}] (1.5*\rcs, 3.5*\rcs) -- (6.5*\rcs, 3.5*\rcs);
    \end{scope}
    
    \draw[text=black, draw=white, fill=white] (2*\rcs, 2.25*\rcs) circle (5pt) node {\footnotesize $s_{0}$};
    \draw[text=black, draw=white, fill=white] (6*\rcs, 2.25*\rcs) circle (5pt) node {\footnotesize $s_{1}$};

    \harrow{\rcs}{(2.5*\rcs, 2.5*\rcs)}{(5.5*\rcs, 2.5*\rcs)}{0.1}{0}

            
            
            

\end{tikzpicture}
    \end{minipage}
    \hspace{0.02\textwidth}
    \begin{minipage}[c][2cm][t]{0.6\textwidth}
        $$F^{(a,h)}\left(\tilde{\tau}_{L}\right)\ket{\psi} =
        \epsilon(a)L^{h}_{-}\ket{\psi} =
        \epsilon(a)\langle h, \psi_{(3)}S^{-1}\psi_{(1)}\rangle\ket{\psi_{(2)}}$$
    \end{minipage}
\end{equation}
\begin{equation}
    \label{ldual_rib_plus}
    \begin{minipage}[c][2cm][c]{0.2\textwidth}
        \begin{tikzpicture}[scale=0.7]

    \fill [gray!20] (4*\rcs, 1*\rcs)--(1.5*\rcs, 3.5*\rcs)--(6.5*\rcs, 3.5*\rcs)--(4*\rcs,1*\rcs);
    
    \node (A) at (4.5*\rcs, 5*\rcs) {$\psi$};
    \begin{scope}[thick, decoration={markings, mark=at position 0.95 with {\arrow{latex}}}] 
        \draw[postaction={decorate}] (4*\rcs, 1*\rcs) -- (4*\rcs, 5*\rcs);
    \end{scope}
    
    \draw[thick, dotted] (4*\rcs,1*\rcs) -- (1.5*\rcs, 3.5*\rcs);
    \draw[thick, dotted] (4*\rcs,1*\rcs) -- (6.5*\rcs, 3.5*\rcs);
    \begin{scope}[red, thin, dashed, decoration={markings, mark=at position 0.4 with {\arrow{latex}}}] 
        \draw[postaction={decorate}] (6.5*\rcs, 3.5*\rcs) -- (1.5*\rcs, 3.5*\rcs);
    \end{scope}
    
    \draw[text=black, draw=white, fill=white] (2*\rcs, 2.25*\rcs) circle (5pt) node {\footnotesize $s_{1}$};
    \draw[text=black, draw=white, fill=white] (6*\rcs, 2.25*\rcs) circle (5pt) node {\footnotesize $s_{0}$};

    \harrow{\rcs}{(5.5*\rcs, 2.5*\rcs)}{(2.5*\rcs, 2.5*\rcs)}{0.1}{1}
    
            
            
            
    
\end{tikzpicture}
    \end{minipage}
    \hspace{0.02\textwidth}
    \begin{minipage}[c][2cm][t]{0.6\textwidth}
        $$F^{(a,h)}\left(\tilde{\tau}_{R}\right)\ket{\psi} =
        \epsilon(a)\tilde{L}^{h}_{+}\ket{\psi} =
        \epsilon(a) \langle h, S\psi_{(3)}\psi_{(1)}\rangle \ket{\psi_{(2)}}$$
    \end{minipage}
\end{equation}
\begin{equation}
    \label{ldual_rib_minus}
    \begin{minipage}[c][2cm][c]{0.2\textwidth}
        \begin{tikzpicture}[scale=0.7]

    \fill [gray!20] (4*\rcs, 1*\rcs)--(1.5*\rcs, 3.5*\rcs)--(6.5*\rcs, 3.5*\rcs)--(4*\rcs,1*\rcs);
    
    \node (A) at (4.5*\rcs, 5*\rcs) {$\psi$};
    \begin{scope}[thick, decoration={markings, mark=at position 0.25 with {\arrow{latex}}}] 
        \draw[postaction={decorate}] (4*\rcs, 5*\rcs) -- (4*\rcs, 1*\rcs);
    \end{scope}
    
    \draw[thick, dotted] (4*\rcs,1*\rcs) -- (1.5*\rcs, 3.5*\rcs);
    \draw[thick, dotted] (4*\rcs,1*\rcs) -- (6.5*\rcs, 3.5*\rcs);
    \begin{scope}[red, thin, dashed, decoration={markings, mark=at position 0.4 with {\arrow{latex}}}] 
        \draw[postaction={decorate}] (1.5*\rcs, 3.5*\rcs) -- (6.5*\rcs, 3.5*\rcs);
    \end{scope}

    \draw[text=black, draw=white, fill=white] (2*\rcs, 2.25*\rcs) circle (5pt) node {\footnotesize $s_{1}$};
    \draw[text=black, draw=white, fill=white] (6*\rcs, 2.25*\rcs) circle (5pt) node {\footnotesize $s_{0}$};

    \harrow{\rcs}{(5.5*\rcs, 2.5*\rcs)}{(2.5*\rcs, 2.5*\rcs)}{0.1}{1}
    
            
            
            
    
\end{tikzpicture}
    \end{minipage}
    \hspace{0.02\textwidth}
    \begin{minipage}[c][2cm][t]{0.6\textwidth}
        $$F^{(a,h)}\left(\tilde{\tau}_{R}\right)\ket{\psi} =
        \epsilon(a)\tilde{L}^{h}_{-}\ket{\psi} =
        \epsilon(a)\langle h, \psi_{(1)}S^{-1}\psi_{(3)}\rangle\ket{\psi_{(2)}}$$
    \end{minipage}
\end{equation}
\begin{equation}
    \label{rdir_rib_plus}
    \begin{minipage}[c][2cm][c]{0.2\textwidth}
        \begin{tikzpicture}[scale=0.7]

    \fill [gray!20] (4*\rcs, 1*\rcs)--(1.5*\rcs, 3.5*\rcs)--(6.5*\rcs, 3.5*\rcs)--(4*\rcs,1*\rcs);
    
    \draw[thick, dotted] (4*\rcs,1*\rcs) -- (1.5*\rcs, 3.5*\rcs);
    \draw[thick, dotted] (6.5*\rcs, 3.5*\rcs) -- (4*\rcs,1*\rcs);
    \begin{scope}[thick, decoration={markings,mark=at position 0.5 with {\arrow{latex}}}] 
        \draw[postaction={decorate}] (1.5*\rcs, 3.5*\rcs) to node[midway, above]{$\psi$} (6.5*\rcs, 3.5*\rcs) ;
    \end{scope}

    \draw[text=black, draw=white, fill=white] (2*\rcs, 2.25*\rcs) circle (5pt) node {\footnotesize $s_{1}$};
    \draw[text=black, draw=white, fill=white] (6*\rcs, 2.25*\rcs) circle (5pt) node {\footnotesize $s_{0}$};

    \harrow{\rcs}{(5.5*\rcs, 2.5*\rcs)}{(2.5*\rcs, 2.5*\rcs)}{0.1}{1}
    
            
            
            
    
\end{tikzpicture}
    \end{minipage}
    \hspace{0.02\textwidth}
    \begin{minipage}[c][2cm][c]{0.6\textwidth}
        $$F^{(a,h)}\left(\tau_{R}\right)\ket{\psi} =
        \epsilon(h)T^{a}_{+}\ket{\psi} =
        \epsilon(h)\langle Sa, \psi_{(1)} \rangle \ket{\psi_{(2)}}$$
    \end{minipage}
\end{equation}
\begin{equation}
    \label{rdir_rib_minus}
    \begin{minipage}[c][2cm][c]{0.2\textwidth}
        \begin{tikzpicture}[scale=0.7]

    \fill [gray!20] (4*\rcs, 1*\rcs)--(1.5*\rcs, 3.5*\rcs)--(6.5*\rcs, 3.5*\rcs)--(4*\rcs,1*\rcs);
    
    \draw[thick, dotted] (4*\rcs,1*\rcs) -- (1.5*\rcs, 3.5*\rcs);
    \draw[thick, dotted] (6.5*\rcs, 3.5*\rcs) -- (4*\rcs,1*\rcs);
    \begin{scope}[thick, decoration={markings,mark=at position 0.5 with {\arrow{latex}}}] 
        \draw[postaction={decorate}] (6.5*\rcs, 3.5*\rcs) to node[midway, above]{$\psi$} (1.5*\rcs, 3.5*\rcs);
    \end{scope}

    \draw[text=black, draw=white, fill=white] (2*\rcs, 2.25*\rcs) circle (5pt) node {\footnotesize $s_{1}$};
    \draw[text=black, draw=white, fill=white] (6*\rcs, 2.25*\rcs) circle (5pt) node {\footnotesize $s_{0}$};

    \draw[double,-latex] (4.5*\rcs, 2.5*\rcs) -- (3.5*\rcs, 2.5*\rcs);
    \harrow{\rcs}{(5.5*\rcs, 2.5*\rcs)}{(2.5*\rcs, 2.5*\rcs)}{0.1}{1}

            
            
            
    
\end{tikzpicture}
    \end{minipage}
    \hspace{0.02\textwidth}
    \begin{minipage}[c][2cm][c]{0.6\textwidth}
        $$F^{(a,h)}\left(\tau_{R}\right)\ket{\psi} =
        \epsilon(h)T^{a}_{-}\ket{\psi} = 
        \epsilon(h)\langle a, \psi_{(2)} \rangle \ket{\psi_{(1)}}$$
    \end{minipage}
\end{equation}
\begin{equation}
    \label{ldir_rib_plus}
    \begin{minipage}[c][2cm][c]{0.2\textwidth}
        \begin{tikzpicture}[scale=0.7]

    \fill [gray!20] (4*\rcs, 1*\rcs)--(1.5*\rcs, 3.5*\rcs)--(6.5*\rcs, 3.5*\rcs)--(4*\rcs,1*\rcs);
    
    \draw[thick, dotted] (4*\rcs,1*\rcs) -- (1.5*\rcs, 3.5*\rcs);
    \draw[thick, dotted] (6.5*\rcs, 3.5*\rcs) -- (4*\rcs,1*\rcs);
    \begin{scope}[thick, decoration={markings,mark=at position 0.5 with {\arrow{latex}}}] 
        \draw[postaction={decorate}] (1.5*\rcs, 3.5*\rcs) to node[midway, above]{$\psi$} (6.5*\rcs, 3.5*\rcs);
    \end{scope}

    \draw[text=black, draw=white, fill=white] (2*\rcs, 2.25*\rcs) circle (5pt) node {\footnotesize $s_{0}$};
    \draw[text=black, draw=white, fill=white] (6*\rcs, 2.25*\rcs) circle (5pt) node {\footnotesize $s_{1}$};

    \harrow{\rcs}{(2.5*\rcs, 2.5*\rcs)}{(5.5*\rcs, 2.5*\rcs)}{0.1}{0}

            
            
            
    
\end{tikzpicture}
    \end{minipage}
    \hspace{0.02\textwidth}
    \begin{minipage}[c][2cm][c]{0.6\textwidth}
        $$F^{(a,h)}\left(\tau_{L}\right)\ket{\psi} =
        \epsilon(h)\tilde{T}^{a}_{+}\ket{\psi} = 
        \epsilon(h)\langle a, \psi_{(1)} \rangle \ket{\psi_{(2)}}$$
    \end{minipage}
\end{equation}
\begin{equation}
    \label{ldir_rib_minus}
    \begin{minipage}[c][2cm][c]{0.2\textwidth}
        \begin{tikzpicture}[scale=0.7]

    \fill [gray!20] (4*\rcs, 1*\rcs)--(1.5*\rcs, 3.5*\rcs)--(6.5*\rcs, 3.5*\rcs)--(4*\rcs,1*\rcs);
    
    \draw[thick, dotted] (4*\rcs,1*\rcs) -- (1.5*\rcs, 3.5*\rcs);
    \draw[thick, dotted] (4*\rcs,1*\rcs) -- (6.5*\rcs, 3.5*\rcs);
    \begin{scope}[thick, decoration={markings,mark=at position 0.5 with {\arrow{latex}}}] 
        \draw[postaction={decorate}] (6.5*\rcs, 3.5*\rcs) to node[midway, above]{$\psi$} (1.5*\rcs, 3.5*\rcs);
    \end{scope}

    \draw[text=black, draw=white, fill=white] (2*\rcs, 2.25*\rcs) circle (5pt) node {\footnotesize $s_{0}$};
    \draw[text=black, draw=white, fill=white] (6*\rcs, 2.25*\rcs) circle (5pt) node {\footnotesize $s_{1}$};

    \harrow{\rcs}{(2.5*\rcs, 2.5*\rcs)}{(5.5*\rcs, 2.5*\rcs)}{0.1}{0}

            
            
            
    
\end{tikzpicture}
    \end{minipage}
    \hspace{0.02\textwidth}
    \begin{minipage}[c][2cm][c]{0.6\textwidth}
        $$F^{(a,h)}\left(\tau_{L}\right)\ket{\psi} =
            \epsilon(h)\tilde{T}^{a}_{-}\ket{\psi} = 
            \epsilon(h)\langle S^{-1}a, \psi_{(2)} \rangle \ket{\psi_{(1)}}$$
    \end{minipage}
\end{equation}
Here,  the ribbons in \eqref{rdual_rib_plus} and \eqref{rdual_rib_minus}  are left dual, those in \eqref{ldual_rib_plus} and \eqref{ldual_rib_minus} are right dual, the ones in \eqref{rdir_rib_plus} and \eqref{rdir_rib_minus} are right direct and those in \eqref{ldir_rib_plus} and \eqref{ldir_rib_minus} are left direct. The convention for the local left and right orientation of the fundamental ribbons is based on \cite{jia2023boundary}.
For the ribbons other than the fundamental ones or triangles, the ribbon operators are define inductively.

\begin{prop} 
The ribbon operators satisfy the following multiplication relations
    \begin{equation}
        F^{(a,h)}(\tau_{R}) \circ F^{(a',h')}(\tau_{R}) = F^{(aa',hh')}(\tau_{R}), \quad  F^{(a,h)}(\tilde{\tau}_{L}) \circ F^{(a',h')}(\tilde{\tau}_{L}) = F^{(aa',hh')}(\tilde{\tau}_{L})
    \end{equation}
    \begin{equation}
        F^{(a,h)}(\tau_{L}) \circ F^{(a',h')}(\tau_{L}) = F^{(a'a,h'h)}(\tau_{L}), \quad F^{(a,h)}(\tilde{\tau}_{R}) \circ F^{(a',h')}(\tilde{\tau}_{R}) = F^{(a'a,h'h)}(\tilde{\tau}_{R})
    \end{equation}
\end{prop}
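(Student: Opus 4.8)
The plan is to reduce all four identities to the module laws of the edge operators, since on a single fundamental triangle the ribbon operator collapses to one edge operator up to a counit factor. Reading off the definitions: \eqref{rdir_rib_plus}--\eqref{rdir_rib_minus} give $F^{(a,h)}(\tau_{R}) = \epsilon(h)\,T^{a}_{\pm}$; \eqref{rdual_rib_plus}--\eqref{rdual_rib_minus} give $F^{(a,h)}(\tilde\tau_{L}) = \epsilon(a)\,L^{h}_{\pm}$; \eqref{ldir_rib_plus}--\eqref{ldir_rib_minus} give $F^{(a,h)}(\tau_{L}) = \epsilon(h)\,\tilde T^{a}_{\pm}$; and \eqref{ldual_rib_plus}--\eqref{ldual_rib_minus} give $F^{(a,h)}(\tilde\tau_{R}) = \epsilon(a)\,\tilde L^{h}_{\pm}$. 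In each case the sign is fixed by the local edge orientation of the (single, fixed) triangle, so it is the same for both factors of a composition, and since $\epsilon$ is an algebra map the scalar prefactors combine as $\epsilon(h)\epsilon(h') = \epsilon(hh')$ and $\epsilon(a)\epsilon(a') = \epsilon(aa')$. Thus it suffices to establish the corresponding composition laws for $L_{\pm}, T_{\pm}, \tilde L_{\pm}, \tilde T_{\pm}$.

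For the $\tau_{R}$ and $\tilde\tau_{L}$ relations I would invoke the statements already recorded in Section~\ref{sec:semidual_kitaev_model}, namely that $H^{*}$ is a left $H$-module under $L^{h}_{\pm}$ and a left $H^{\text{cop}}$-module under $T^{a}_{\pm}$; this is precisely $L^{h}_{\pm}\circ L^{h'}_{\pm} = L^{hh'}_{\pm}$ and $T^{a}_{\pm}\circ T^{a'}_{\pm} = T^{aa'}_{\pm}$. If a self-contained check is wanted, the $+$ cases drop straight out of the Sweedler formulas in \eqref{triangle operators} using coassociativity of $\Delta$ and anti-multiplicativity of $S$, while the $-$ cases follow from \eqref{relations}, which rewrites $L^{h}_{-} = S\circ L^{h}_{+}\circ S$ and $T^{a}_{-} = S\circ T^{a}_{+}\circ S$; since $S^{2} = \id$, $L^{h}_{-}\circ L^{h'}_{-} = S\,L^{h}_{+}L^{h'}_{+}\,S = S\,L^{hh'}_{+}\,S = L^{hh'}_{-}$, and identically for $T_{-}$. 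Multiplying by the counit prefactors then gives the first two identities with the stated order $aa', hh'$.

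For the $\tau_{L}$ and $\tilde\tau_{R}$ relations the key point is that $\tilde L^{h}_{\pm}$ and $\tilde T^{a}_{\pm}$ implement \emph{right} module actions of $H$ and $H^{\text{cop}}$, so operator composition reverses the algebra order. Using $\tilde L^{h}_{\pm} = L^{Sh}_{\pm}$ and $\tilde T^{a}_{\pm} = T^{S^{-1}a}_{\pm}$ from \eqref{right triangle operators}, the left-module laws above, and the anti-multiplicativity of $S$ and of $S^{-1}$ (the latter available since $S^{2}=\id$), one finds $\tilde L^{h}_{\pm}\circ\tilde L^{h'}_{\pm} = L^{(Sh)(Sh')}_{\pm} = L^{S(h'h)}_{\pm} = \tilde L^{h'h}_{\pm}$ and likewise $\tilde T^{a}_{\pm}\circ\tilde T^{a'}_{\pm} = \tilde T^{a'a}_{\pm}$; restoring the counit prefactors yields $F^{(a,h)}(\tau_{L})\circ F^{(a',h')}(\tau_{L}) = \epsilon(h'h)\,\tilde T^{a'a}_{\pm} = F^{(a'a,h'h)}(\tau_{L})$, and the same for $\tilde\tau_{R}$. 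The computation is entirely routine; the only thing requiring care is the bookkeeping — pairing each triangle type with the correct edge operator and the correct handedness of its action (left for $L,T$, right for $\tilde L,\tilde T$), so that the answer comes out with order $aa',hh'$ in the first two cases and $a'a,h'h$ in the last two, and confirming that the $S$, $S^{-1}$ dressings in \eqref{relations} and \eqref{right triangle operators} do not spoil multiplicativity, which they do not precisely because $S$ (hence $S^{-1}$) is an anti-algebra map.
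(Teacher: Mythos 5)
Your proof is correct, and it reaches the same endpoint as the paper's Appendix~\ref{appendix:multiplication_of_ribbon_operators} — namely the four edge-operator composition laws $L^{h}_{\pm}L^{h'}_{\pm}=L^{hh'}_{\pm}$, $T^{a}_{\pm}T^{a'}_{\pm}=T^{aa'}_{\pm}$, $\tilde L^{h}_{\pm}\tilde L^{h'}_{\pm}=\tilde L^{h'h}_{\pm}$, $\tilde T^{a}_{\pm}\tilde T^{a'}_{\pm}=\tilde T^{a'a}_{\pm}$ — but by a genuinely leaner route. The paper states these laws and then verifies all eight triangle cases by separate explicit Sweedler computations (coassociativity plus duality of product and coproduct in each instance). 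You instead verify only the left-module $+$ cases (or simply cite the module structure already asserted in Section~\ref{sec:semidual_kitaev_model}) and derive the remaining six cases structurally: the $-$ cases from $L^{h}_{-}=S\circ L^{h}_{+}\circ S$ and $T^{a}_{-}=S\circ T^{a}_{+}\circ S$ using $S^{2}=\id$, and the tilde cases from $\tilde L^{h}_{\pm}=L^{Sh}_{\pm}$, $\tilde T^{a}_{\pm}=T^{S^{-1}a}_{\pm}$ together with anti-multiplicativity of $S$ and $S^{-1}$, which is exactly what produces the order reversal $a'a,\,h'h$ for $\tau_{L}$ and $\tilde\tau_{R}$. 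This buys brevity and makes the \emph{reason} for the opposite-algebra structure of $\calB_{\tau_L}$ and $\calB_{\tilde\tau_R}$ transparent (it is forced by the antipode twist in the right-module operators), at the small cost of leaning on the unproved-in-the-appendix identifications $\tilde L^{h}_{\pm}=L^{Sh}_{\pm}$ and $\tilde T^{a}_{\pm}=T^{S^{-1}a}_{\pm}$ against the explicit formulas \eqref{right triangle operators} — a one-line check using $\langle Sh,\phi\rangle=\langle h,S\phi\rangle$ and $S^{2}=\id$ that is worth recording. One cosmetic remark: $S^{-1}$ is anti-multiplicative whenever it exists, independently of $S^{2}=\id$, so that parenthetical justification is unnecessary though harmless.
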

\begin{proof}
    See appendix \ref{appendix:multiplication_of_ribbon_operators}
\end{proof}
The above proposition means that, as algebras, the triangle operator algebras satisfy $\calB_{\tau_R}\cong \calB_{\tilde{\tau}_L}= H^{\text{cop}}\otimes H\cong H^{\text{cop}}\lrbicross H $ and $\calB_{\tau_L}\cong \calB_{\tilde{\tau}_R}\cong (H^{\text{cop}}\otimes H)^{\text{op}}= H\tens H^{\text{op}} \cong M(H)^{\text{op}}$. 
\\

Next, we define the  general ribbon $\rho$ in the semidual lattice code model. These are defined recursively for the different ribbon types.
To define $F^{(a,h)}(\rho)$, we consider the  decomposition  $\rho = \tau_{1}\cup \tau_{2}$ such that both $\tau_{1}$ and $\tau_{2}$ have the same direction with $\rho$ and the terminal site of $\tau_{1}$ is the initial site of  $\tau_{2}$.
 We begin with the definition of type-B ribbon operators. The ribbon operators for type-B ribbons  $\rho=\rho^B$ are
parameterized by $(a\tens h)\in H^{\text{cop}}\tens H $.  These operators form the ribbon operator algebra
$\calB_{\rho_B}=  \{ F^{(a,h)}(\rho^B) : a\in H^{\text{cop}} , h  \in H \}$.
Since $H^{\text{cop}}\otimes H\cong H^{\text{cop}}\lrbicross H $, 
 the ribbon operator $F^{(a,h)}(\rho)= F^{a\tens h }(\rho^B)$ for the composite ribbon  $\rho^B = \tau_{1}\cup \tau_{2}$   is defined recursively  via the coproduct \eqref{coproduct of M(H)} of $a\tens h\in H^{\text{cop}}\lrbicross H$ as
\begin{equation}
\label{gluing-operators}
    F^{(a,h)}(\rho^B) := \sum_{(a, h)} F^{a_{(2)}, h_{(2)}}(\tau_{1})F^{a_{(1)}h_{(1)}Sh_{(3)}, h_{(4)}}(\tau_{2}).
\end{equation}
It is independent of the choice of the decomposition $\rho =\tau_1 \cup \tau_2$ due to the coassociativity.  Similarly,  the type-A ribbon is  constructed  from the Hopf algebra $H\tens H^{\text{op}}=M(H)^{\text{op}}$.
 The construction uses the recursive relation  (\ref{gluing-operators}) since $M(H)^{\text{op}}$ has the same coproduct as $M(H)$.
For a closed ribbon, there is only one end $\partial \rho =\partial_0\rho  = \partial_1\rho,$  where the starting site of a ribbon $\rho$ is denoted by $\partial_{0}\rho$ and the ending site is denoted by $\partial_{1}\rho$. The vertex and face operators can therefore be considered  as special cases of closed ribbon operators.
\\

The properties of ribbon operators are essential for understanding  topological excitations of the model.
Since the topological excitations are generated at two ends of the ribbon, the commutation relations between the vertex and face operators and ribbon operators play a key role.

\begin{lemma}
Let $\rho=\rho_A$ or $\rho_B$ be an open ribbon with starting and ending sites $s_0,s_1$. Then the ribbon operator $F^{(a,h)}(\rho)$ do not commute  with the local vertex and face operators $A^h$ and $B^a$ operators.

\end{lemma}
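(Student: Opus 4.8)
The plan is to exhibit, for each end of the ribbon, a choice of parameter for which the ribbon operator fails to commute with the local vertex or face operator sitting at that site, thereby showing non-commutativity in general. I would argue for type-$B$ ribbons first (type-$A$ being identical after replacing $M(H)$ by $M(H)^{\text{op}}$). By the recursive gluing relation \eqref{gluing-operators} and coassociativity, it suffices to analyze the interaction of a \emph{single} fundamental triangle at $\partial_0\rho = s_0$ with the operators $A^h(s_0)$, $B^a(s_0)$, and similarly at $\partial_1\rho = s_1$; the operators $A,B$ at the other ($n-1$) interior and exterior sites act on disjoint edges or act only through a counit factor and hence pose no issue, so the whole commutator reduces to the elementary one.

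The key computational step is then the elementary one: take, say, the fundamental left-direct triangle with operator $F^{(a,h)}(\tau_L)\ket\psi = \epsilon(h)\,\langle a,\psi_{(1)}\rangle\,\ket{\psi_{(2)}}$ as in \eqref{ldir_rib_plus}, sitting at the site $s_0=(v_{s_0},p_{s_0})$, and compare $F^{(a,h)}(\tau_L)\circ B^{b}(s_0)$ with $B^{b}(s_0)\circ F^{(a,h)}(\tau_L)$ on a single edge state $\ket\psi$, using the definition \eqref{eqn:face_operator} of $B^b$ in terms of the $T^{b}_{\pm}$ of \eqref{triangle operators}. Both operators are, on the relevant edge, built from the left $H^{\text{cop}}$-action $T_\pm$ on $H^*$ and thus multiply the bra-pairing by successive coproduct components of $b$ and of $a$; commuting them past each other costs a factor governed by the non-cocommutativity of $H^{\text{cop}}$ (equivalently non-commutativity of $H$), exactly as encoded in the bicrossproduct cross-relation \eqref{strengthing mirror} $hb=(h_{\o}bSh_{\t})h_{\thr}$. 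Concretely, one finds $F^{(a,h)}(\tau_L)B^b(s_0)$ and $B^b(s_0)F^{(a,h)}(\tau_L)$ differ unless $a$ and $b$ commute in $H^{\text{cop}}$; so for a noncommutative $H$ one picks $a,b$ with $ab\ne ba$ (and $h=1$) and the two sides disagree on some $\ket\psi$. The same argument with $A^h(s_0)$ in place of $B^b(s_0)$, using \eqref{eqn:vertex_operator} and the $M(H)$-relation \eqref{AB-bicross}, produces a discrepancy whenever $h$ acts nontrivially. An entirely parallel computation at $s_1$ with the fundamental triangle whose terminal site is $s_1$ handles the other end.

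Two points need care. First, one must verify that the local operators at $s_0$ and $s_1$ really are the \emph{only} Hamiltonian-type terms that can fail to commute — this is where the counit factors $\epsilon(a)$ or $\epsilon(h)$ in the fundamental triangle operators \eqref{rdual_rib_plus}--\eqref{ldir_rib_minus} and the trivial action on edges outside $star(v)$ or $bound(p)$ are invoked, so that the recursion \eqref{gluing-operators} telescopes and only the two boundary triangles survive; I would state this as the reduction lemma and prove it by the same induction used to define $F^{(a,h)}(\rho)$. Second, one must be honest that the statement is ``do not commute'' in the generic sense: for $H$ commutative (e.g.\ $H=\C[G]^*$ with $G$ abelian) the commutator vanishes, so the proof should explicitly produce, for a fixed noncommutative $H$, the witnessing elements $a,b\in H^{\text{cop}}$ (or $h\in H$) and a witnessing state $\ket\psi\in H^*$ on which the two composite operators give different vectors. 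The main obstacle is bookkeeping: correctly tracking which chirality of fundamental triangle and which of $T_\pm$, $L_\pm$ appears at each end, and lining up the coproduct legs of $a$, $h$, $b$ with the Sweedler indices in \eqref{triangle operators} so that the residual obstruction is visibly the cross-relation \eqref{strengthing mirror}; once that alignment is set up, the non-commutativity is immediate.
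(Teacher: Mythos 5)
Your strategy is in essence the paper's own: Appendix \ref{F commute AB relations} proves the lemma by writing out $G(s_i)\circ F^{(a,h)}(\rho)$ and $F^{(a,h)}(\rho)\circ G(s_i)$ in Sweedler notation on short ribbons $\ket{\Sigma_1},\dots,\ket{\Sigma_5}$ (covering both chiralities, both geometric operators, and both ends) and observing that the two expressions differ; the residual discrepancy is exactly the one you diagnose, namely a factor of the form $\<b_{(4)}h_{(2)},\psi^1_{(3)}\>$ versus $\<h_{(2)}b_{(4)},\psi^1_{(3)}\>$, i.e.\ the cross-relation \eqref{strengthing mirror}. Your added caveat is correct and worth keeping: in every one of those computations the two sides differ only by the order of a product in $H$, so for commutative $H$ the commutator genuinely vanishes and the lemma needs non-commutativity of $H$ as a hypothesis; turning ``the two Sweedler expressions look different'' into ``the operators are different'' via non-degeneracy of the pairing and a witnessing $\psi$ is a real improvement over what the appendix records. (Minor slip: you announce the type-$B$ case but your worked example is the left direct triangle \eqref{ldir_rib_plus}, a type-$A$ constituent; harmless since the two types are parallel.)

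The one step that would fail as stated is your reduction lemma. You claim the commutator with $A^g(s_0)$ or $B^b(s_0)$ localizes to a \emph{single} fundamental triangle at the end site. But the vertex operator acts on every edge of $star(v_{s_0})$, and when the ribbon begins with a dual triangle followed by a direct one — the configurations of \eqref{sigma_2:relation} and \eqref{sigma_3:relation} — the base edges of \emph{both} of the first two triangles lie in $star(v_{s_0})$, while the gluing coproduct \eqref{gluing-operators} distributes $a\tens h$ across them as $a_{(2)}\tens h_{(2)}$ on $\tau_1$ and $a_{(1)}h_{(1)}Sh_{(3)}\tens h_{(4)}$ on $\tau_2$. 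Dropping the second factor (or treating it as acting on a disjoint edge) gives the wrong commutator: the ``elementary'' computation for the vertex operator is intrinsically a length-two computation, which is precisely why the paper works with $\rho=\tau_1\cup\tau_2$ there. The repair is routine bookkeeping — localize to the set of edges shared between the ribbon and $star(v)$ or $bound(p)$, which may contain one or two ribbon edges — but it must be done before the recursion ``telescopes'' as you assert.
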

\begin{proof}
See Appendix \ref{F commute AB relations} for commutation relations between ribbon operators and vertex and face operators.
\end{proof}

We can also show that ribbon operators commute with all terms in the Hamiltonian except for those associated with the ends of the ribbon.
\begin{prop}
\label{F commute AB}
Let $\rho$ be an open ribbon and $s$ be a site on $\rho$ such that $s$ has no overlap with $\partial_i \rho$. Then
\be
A(s)F^{(a,h)}\left(\rho\right)=F^{(a,h)}\left(\rho\right)A(s), \quad B(s)F^{(a,h)}\left(\rho\right)=F^{(a,h)}\left(\rho \right)B(s)
\ee
where  $A(s) = A^{l}(v,p)$ and $B(s) = B^{k}(v,p)$ are the  terms associated to $s$ in the Hamiltonian.
\end{prop}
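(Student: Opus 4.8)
The plan is to reduce the claim to a local, single-triangle statement and then propagate it along the ribbon using the recursive definition \eqref{gluing-operators}. First I would fix an open ribbon $\rho$ and a site $s$ on $\rho$ disjoint from the two ends $\partial_0\rho,\partial_1\rho$, and decompose $\rho = \rho' \cup \tau \cup \rho''$ where $\tau$ is the triangle of $\rho$ one of whose legs is exactly $s$ (or, if $s$ sits strictly between two triangles, $\rho = \rho'\cup\rho''$ with $s = \partial_1\rho' = \partial_0\rho''$). By the coassociativity-independence of the decomposition noted after \eqref{gluing-operators}, it suffices to analyze the interaction of $A(s)=A^{l}(v_s,p_s)$ and $B(s)=B^{k}(v_s,p_s)$ with the factor(s) of $F^{(a,h)}(\rho)$ supported on the edges touching $s$, namely the triangle operator(s) attached at $s$.

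The key step is then the following local lemma: if $s$ is a \emph{middle} site of the ribbon (both the incoming and the outgoing triangle meet $s$), the two triangle operators at $s$ act on the star/boundary data of $s$ in a way that is \emph{dual-conjugate} to the vertex/face action, so that the Haar-integral operators $A(s)$ and $B(s)$ commute past them. Concretely I would use the explicit formulas \eqref{triangle operators}, \eqref{right triangle operators} and the fundamental triangle-operator definitions \eqref{rdual_rib_plus}--\eqref{ldir_rib_minus}, together with the vertex/face operator definitions \eqref{eqn:vertex_operator}, \eqref{eqn:face_operator}, to compute both orders of composition on an arbitrary basis state. The point is that a middle site always has \emph{one} incoming and \emph{one} outgoing triangle of the ribbon meeting it, and the corresponding two triangle operators are of opposite flavour ($L_+$ vs.\ $L_-$, or $T_+$ vs.\ $T_-$, depending on the local orientation dictated by type-$A$ or type-$B$); their composite is, by construction, an $M(H)$- (resp.\ $M(H)^{\mathrm{op}}$-) covariant combination, and $A(s)\circ B(s)$, built from Haar integrals $\ell, k$ via Theorem~1.1/\eqref{AB-bicross}, projects onto the invariant subspace and hence commutes with any covariant operator. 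This is the categorical reason the commutation holds at middle sites but fails at the ends, where only \emph{one} triangle of $\rho$ meets the site and the covariance is broken — consistent with Lemma~4.6.

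In carrying this out I would proceed as follows: (i) reduce to a single middle site $s$ and a single decomposition $\rho = \rho'\cup\rho''$ at $s$; (ii) using \eqref{gluing-operators}, peel off all triangles of $\rho$ not incident to $s$ — these act on edges disjoint from $\mathrm{star}(v_s)\cup\mathrm{bound}(p_s)$ and commute with $A(s),B(s)$ trivially — leaving only the one or two triangle operators at $s$; (iii) establish the covariance identity $A^h(v_s,p_s)\circ(\text{triangle ops at }s) = (\text{triangle ops at }s)'\circ A^{h'}(v_s,p_s)$ and its $B$-analogue by direct Sweedler-notation computation, using $S^2=\mathrm{id}$ and the pairing; (iv) specialize $h=\ell$, $a=k$ and invoke the defining properties of the Haar integrals (normalization, $h_{(1)}\ell\, Sh_{(2)} = \varepsilon(h)\ell$ etc.) to conclude $A(s)$ and $B(s)$ commute with the local factor, hence with $F^{(a,h)}(\rho)$.

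The main obstacle I anticipate is bookkeeping in step (iii): one must carefully track the cyclic ordering of edges in $\mathrm{star}(v_s)$ and $\mathrm{bound}(p_s)$ relative to where the ribbon enters and exits $s$, and check that for \emph{every} one of the eight fundamental-triangle cases (and both ribbon types) the incoming/outgoing pair is precisely the covariant combination that the Haar-integral operator annihilates the non-invariant part of. The cleanest route is probably to prove the covariance identity for a generic $h$ and $a$ first — essentially a local version of \eqref{AB-bicross} paired against the edge Hilbert space $H^*$ — and only at the very end substitute the integrals; this isolates the Hopf-algebraic content from the integral-projection step and keeps the casework uniform. I would defer the full computation to the appendix referenced for Lemma~4.6, citing it for the generic commutation relations and then adding the short integral-specialization argument here.
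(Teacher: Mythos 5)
Your overall skeleton --- restrict attention to the edges in $\mathrm{star}(v_{s})\cup\mathrm{bound}(p_{s})$, compute both orders of composition in Sweedler notation for generic $g\in H$, $b\in H^{\text{cop}}$, and only at the end specialize to the Haar integrals --- is exactly what the paper does in Appendix \ref{appendix:proof_cummutation_between_ribbon_and_geometric_operators}, where the four cases (type-$A$/type-$B$ against $A(s)$/$B(s)$) are worked out on length-three ribbons. Two of your intermediate claims, however, would fail as stated. First, the localization in your step (ii) undercounts: a triangle of $\rho$ need not have $s$ as one of its legs in order to act nontrivially on an edge of $\mathrm{bound}(p_{s})$ or $\mathrm{star}(v_{s})$. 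In the configuration \eqref{type_B_ribbon_with_face}, for instance, $B^{b}(s)$ acts on all four edges bounding $p_{s}$ and \emph{three} of them ($\psi^{4},\psi^{1},\psi^{2}$) are moved by the ribbon, via $\tau_{1},\tau_{2},\tau_{3}$ respectively, even though only $\tau_{1}$ and $\tau_{2}$ have $s$ as a leg. The local lemma must therefore be formulated for all triangles whose supporting (direct or dual) edge meets the star or boundary of $s$, which is precisely why the paper works with ribbons of length three rather than with a single pair of triangles at $s$.

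Second, the mechanism proposed in your middle paragraph --- that the incoming/outgoing pair of triangle operators forms a covariant combination and that the Haar projector ``commutes with any covariant operator'' --- is not what actually happens. If the relevant product of triangle operators were a module map at $s$, the commutation would hold for arbitrary $A^{g}(s)$ and $B^{b}(s)$; the appendix computations show explicitly that it does not: for generic $g$ and $b$ the two orders of composition pair $g$ (resp.\ $b$) against \emph{different} Sweedler legs of the edge states, and the two expressions coincide only after the Haar property is used to collapse those pairings. There is no invariant-subspace shortcut; the cancellation is a simultaneous property of the Haar element distributed over all incident edges. Your steps (iii)--(iv) (generic identity first, integral specialization last) are the right concrete plan and would recover the paper's argument, but the justification must rest on that computation rather than on the covariance/projection heuristic.
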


\begin{proof}
See Appendix \ref{appendix:proof_cummutation_between_ribbon_and_geometric_operators}.
\end{proof}

The results in Proposition \ref{F commute AB} above follow for closed ribbons as well.  Consequently,  on a sufficiently long ribbon $\rho$, the ribbon operators  commute with all terms in the Hamiltonian except those associated with the ends of $\rho$, creating excitations only at the end of the ribbon. Since the action of the local operators on $\partial_i \rho$ preserves $M(H)$,  the space of elementary excitations for the semidual Kitaev lattice model is therefore given by $M(H)$.
When acting on the ground states,
the ribbon operators $F^{(a,h)}(\rho)$ define a representation of $M(H)$ and $M(H)^{\text{op}}$ depending on the local orientation.
\section{Conclusion}
\label{sec:conclusion}

In this work, we have systematically defined the ribbon operators in the semidual Kitaev lattice model  and derived some important properties. The ribbon  operators are essential for understanding quasi-particle excitations within topologically ordered systems. We established that the 
  space of ribbon operators  in the semidual Kitaev model naturally identifies with the bicrossproduct quantum groups $M(H)=H^{\text{cop}}\lrbicross H$ and $M(H)^{\text{op}}$ depending on the chirality
and corresponds to the two types of ribbons described as the type-$B$ (containing right-handed direct and left-handed dual triangles) and type-$A$ (containing left-handed direct and right-handed dual triangles) ribbons respectively. 
\\

It will be good to understand the physical relation between the generalized Kitaev quantum double model and the semidual model. Both are based on quantum groups related by a Drinfeld and module-algebra twist \cite{MO}. We expect the twist to play a role in the algebraic relation between these models. It also remains to understand the physical meaning of semidualisation in the context of these lattice code models.
 Another  interesting direction is  to generalize the semidual formulation to include  Hamiltonian and ribbon operators  with gapped boundaries. We expect that these generalized lattice code models based on Hopf algebras may find applications in topological quantum computing.

\bibliographystyle{JHEP}
\bibliography{biblio-kitaev}

\newpage
\appendix
\renewcommand{\thesection}{\Alph{section}}

\section{Multiplication of the Ribbon Operators}
\label{appendix:multiplication_of_ribbon_operators}

\begin{eqnarray}
    L^{h}_{\pm}L^{h'}_{\pm} &=& L^{hh'}_{\pm},\qquad \tilde{L}^{h}_{\pm}\tilde{L}^{h'}_{\pm} = \tilde{L}^{h'h}_{\pm}\\
    T^{a}_{\pm}T^{a'}_{\pm} &=& T^{aa'}_{\pm},\qquad \tilde{T}^{a}_{\pm}\tilde{T}^{a'}_{\pm} = \tilde{T}^{a'a}_{\pm}
\end{eqnarray}

Multiplication of the ribbon operator in (\ref{rdual_rib_plus})
\begin{equation}
    \begin{split}
        \ldur{a}{h}\ldur{a'}{h'}\ket{\psi}
        & = \eps(a')\< h', S\cop{\psi}{1}\cop{\psi}{3}\> \ldur{a}{h}\ket{\cop{\psi}{2}},\\
        & = \eps(a')\epsilon(a) \< h', S\cop{\psi}{1}\cop{\psi}{3}\> \< h, S\copp{\psi}{2}{1}\copp{\psi}{2}{3}\> \ket{\copp{\psi}{2}{2}},\\
        & = \eps(a'a)\< hh', S\cop{\psi}{1}\cop{\psi}{3} \> \ket{\cop{\psi}{2}},\\
        & = \ldur{aa'}{hh'}\ket{\psi}.
    \end{split}
\end{equation}

Multiplication of the ribbon operator in (\ref{rdual_rib_minus})
\begin{equation}
    \begin{split}
        \ldur{a}{h}\ldur{\pr{a}}{\pr{h}}\ket{\psi}
        & = \eps(a') \< h', \cop{\psi}{3}S^{-1}\cop{\psi}{1} \> \ldur{a}{b}\ket{\cop{\psi}{2}},\\
        & = \eps(a')\eps(a) \< h', \cop{\psi}{3}S^{-1}\cop{\psi}{1} \> \< h, \copp{\psi}{2}{3}S^{-1}\copp{\psi}{2}{1} \> \ket{\copp{\psi}{2}{2}},\\
        & = \eps(aa')\< h'h, \cop{\psi}{3}S^{-1}\cop{\psi}{1} \> \ket{\psi_{(2)}},\\
        & = \ldur{aa'}{hh'}\ket{\psi}.
    \end{split}
\end{equation}

Multiplication of the ribbon operator in (\ref{ldual_rib_plus})
\begin{equation}
    \begin{split}
        \rdur{a}{h}\rdur{a'}{h'}\ket{\psi}
        & = \eps(a')\< h', S\cop{\psi}{3}\cop{\psi}{1} \> \rdur{a}{h} \ket{\cop{\psi}{2}},\\
        & = \eps(a')\eps(a) \< h', S\cop{\psi}{3}\cop{\psi}{1} \> \< h, S\copp{\psi}{2}{3}\copp{\psi}{2}{1} \> \ket{\copp{\psi}{2}{2}},\\
        & = \eps(a'a) \< h'h, S\cop{\psi}{3}\cop{\psi}{1} \> \ket{\cop{\psi}{2}},\\
        & = \rdur{a'a}{h'h}\ket{\psi}.
    \end{split}
\end{equation}

Multiplication of the ribbon operator in (\ref{ldual_rib_minus})
\begin{equation}
    \begin{split}
        \rdur{a}{h}\rdur{a'}{h'}\ket{\psi}
        & = \eps(a')\< h', \cop{\psi}{1}S^{-1}\cop{\psi}{3} \> \rdur{a}{h} \ket{\psi_{(2)}},\\
        & = \eps(a')\eps(a) \< h', \cop{\psi}{1}S^{-1}\cop{\psi}{3} \> \< h, \copp{\psi}{2}{1}S^{-1}\copp{\psi}{2}{3} \> \ket{\copp{\psi}{2}{2}},\\
        & = \epsilon(a'a) \< h'h, \cop{\psi}{1}S^{-1}\cop{\psi}{3} \> \ket{\cop{\psi}{2}},\\
        & = \rdur{a'a}{h'h}\ket{\psi}.
    \end{split}
\end{equation}

Multiplication of the ribbon operator in (\ref{rdir_rib_plus})
\begin{equation}
    \begin{split}
        \rdir{a}{h}\rdir{a'}{h'}\ket{\psi}
        & = \eps(h') \< Sa', \cop{\psi}{1} \> \rdir{a}{h}\ket{\cop{\psi}{2}},\\
        & = \eps(h')\eps(h) \< Sa', \cop{\psi}{1} \> \< Sa, \copp{\psi}{2}{1} \> \ket{\copp{\psi}{2}{2}},\\
        & = \eps(hh') \< S(aa'), \cop{\psi}{1} \> \ket{\cop{\psi}{1}},\\
        & = \rdir{aa'}{hh'}\ket{\psi}.
    \end{split}
\end{equation}

Multiplication of the ribbon operator in (\ref{rdir_rib_minus})
\begin{equation}
    \begin{split}
        \rdir{a}{h}\rdir{a'}{h'}\ket{\psi}
        & = \eps(h')\< a', \cop{\psi}{2} \> \rdir{a}{h}\ket{\cop{\psi}{1}},\\
        & = \eps(h')\eps(h) \< a', \cop{\psi}{2} \> \< a, \copp{\psi}{1}{2} \> \ket{\copp{\psi}{1}{1}},\\
        & = \eps(hh')\< aa', \cop{\psi}{2} \> \ket{\cop{\psi}{2}},\\
        & = \rdir{aa'}{hh'}\left(\tau_{R}\right)\ket{\psi}.
    \end{split}
\end{equation}

Multiplication of the ribbon operator in (\ref{ldir_rib_plus})
\begin{equation}
     \begin{split}
        \ldir{a}{h}\ldir{a'}{h'}\ket{\psi}
        & = \eps(h')\< a', \cop{\psi}{1} \> \ldir{a}{h}\ket{\cop{\psi}{2}},\\
        & = \eps(h')\eps(h) \<a', \cop{\psi}{1} \> \< a, \copp{\psi}{2}{1} \> \ket{\copp{\psi}{2}{2}},\\
        & = \eps(h'h) \< a'a, \cop{\psi}{1} \> \ket{\cop{\psi}{2}},\\
        & = \ldir{a'a}{h'h}\ket{\psi}.
    \end{split}
\end{equation}

Multiplication of the ribbon operator in (\ref{ldir_rib_minus})
\begin{equation}
\begin{split}
        \ldir{a}{h}\ldir{a'}{h'}\ket{\psi}
        &=\eps(h') \< \Si a', \cop{\psi}{2} \> \ldir{a}{h}\ket{\cop{\psi}{1}},\\
        &=\eps(h')\eps(h) \< \Si a', \cop{\psi}{2} \>  \< \Si a, \copp{\psi}{1}{2} \> \ket{\copp{\psi}{1}{1}},\\
        &=\eps(h'h) \< \Si (a'a), \cop{\psi}{2}\> \ket{\cop{\psi}{1}},\\
        &=\ldir{a'a}{h'h}\ket{\psi}.
    \end{split} 
\end{equation}

\section{Relationship between the Ribbon and Geometric Operators}
\label{F commute AB relations}

In this section, we compute the commutation relationship between the geometric operators and the ribbon operators at the beginning and end sites of the types of ribbons (type-$A$ and type-$B$). We will consider the ribbon lengths that are sufficient to obtain these relations, this will help us to compute the relationship between the ribbon operators and the geometric operators at the sites within longer ribbons. If $\hat{G}$ is a geometric operator and $\hat{R}$ is a ribbon operator, the idea is to compute $\hat{G}\circ\hat{R}$ and $\hat{R}\circ\hat{G}$ and compare the results in other to establish the relationship.\\

We begin with the type-$B$ ribbons, the ribbon in \eqref{sigma_1:relation} is a type-$B$ ribbon made up of a single left dual triangle $\tau$, so we have $\rho^{B}=\tau$. This will enable us to compute the relationship between the face operators and the type-$B$ ribbon operators at the starting and end sites in of type-$B$ ribbons.

\begin{equation}
    \begin{minipage}[c]{0.05\textwidth}
        $\ket{\Sigma_{1}}$
    \end{minipage}
    \begin{minipage}[c]{0.05\textwidth}
        $=$
    \end{minipage}
    \begin{minipage}[c]{0.5\textwidth}
        \begin{tikzpicture}[scale = 0.8]

    \begin{scope}[thick, decoration={markings, mark=at position 0.7 with {\arrow{latex}}}]
        \draw[postaction={decorate}] (5*\rcs,5*\rcs) -- (5*\rcs,9*\rcs);
        \draw[postaction={decorate}] (1*\rcs,9*\rcs) -- (5*\rcs,9*\rcs);
        \draw[postaction={decorate}] (1*\rcs,5*\rcs) -- (1*\rcs,9*\rcs);
        \draw[postaction={decorate}] (5*\rcs,5*\rcs) -- (1*\rcs,5*\rcs);
        \draw[postaction={decorate}] (5*\rcs,9*\rcs) -- (9*\rcs,9*\rcs);
        \draw[postaction={decorate}] (9*\rcs,9*\rcs) -- (9*\rcs,5*\rcs);
        \draw[postaction={decorate}] (9*\rcs,5*\rcs) -- (5*\rcs,5*\rcs);
    \end{scope}

    \begin{scope}[thin, red, dashed, decoration={markings, mark=at position 0.3 with {\arrow{latex}}}]
        \draw[postaction={decorate}] (7*\rcs,7*\rcs) -- (3*\rcs, 7*\rcs);
    \end{scope}

    \draw[dotted, very thick] (5*\rcs, 5*\rcs) -- (3*\rcs, 7*\rcs);
    \draw[dotted, thick] (5*\rcs, 5*\rcs) -- (7*\rcs, 7*\rcs);

    \harrow{\rcs}{(4*\rcs, 6*\rcs)}{(6*\rcs, 6*\rcs)}{0.07}{0}
    
    \draw[text=black, draw=white, fill=white] (5.5*\rcs, 8.0*\rcs) circle (5pt) node {\footnotesize $\psi^{1}$};
    \draw[text=black, draw=white, fill=white] (2.5*\rcs, 9.5*\rcs) circle (5pt) node {\footnotesize $\psi^{2}$};
    \draw[text=black, draw=white, fill=white] (0.5*\rcs, 7*\rcs) circle (5pt) node {\footnotesize $\psi^{3}$};
    \draw[text=black, draw=white, fill=white] (3.0*\rcs, 4.5*\rcs) circle (5pt) node {\footnotesize $\psi^{4}$};
    \draw[text=black, draw=white, fill=white] (6.5*\rcs, 4.5*\rcs) circle (5pt) node {\footnotesize $\psi^{5}$};
    \draw[text=black, draw=white, fill=white] (9.5*\rcs, 7*\rcs) circle (5pt) node {\footnotesize $\psi^{6}$};
    \draw[text=black, draw=white, fill=white] (7*\rcs, 9.5*\rcs) circle (5pt) node {\footnotesize $\psi^{7}$};

    \draw[text=black, draw=white, fill=white] (4.0*\rcs, 6.0*\rcs) circle (5pt) node {\footnotesize $s_{0}$};
    \draw[text=black, draw=white, fill=white] (6.0*\rcs, 6.0*\rcs) circle (5pt) node {\footnotesize $s_{1}$};
    \draw[text=black, draw=white, fill=white] (5.0*\rcs, 6.5*\rcs) circle (5pt) node {\footnotesize $\tau$};

        



\end{tikzpicture}
    \end{minipage}
    \label{sigma_1:relation}
\end{equation}

At $s_{0}$, $B^{b}\(s_{0}\)F^{a\tens h}\(\rho^{B}\)$ gives
\begin{align}
    B^{b}\(s_{0}\)F^{a\tens h}\(\rho^{B}\)\ket{\Sigma_{1}} 
    &= B^{b}\(s_{0}\)\left[ \rop{a}{h}\(\psi^1\) \tens \psi^2 \tens \psi^3 \tens \psi^4 \right] \nonumber\\
    &= B^{b}\(s_{0}\)\left[ \eps(a)\<h, S\cop{\psi^1}{1}\cop{\psi^1}{3}\>\(\cop{\psi^1}{2}\) \tens \psi^2 \tens \psi^3 \tens \psi^4 \right] \nonumber\\
    &= \eps(a)\<h, S\cop{\psi^1}{1}\cop{\psi^1}{3}\>\TM{\cop{b}{4}}\(\cop{\psi^1}{2}\) \tens \TP{\cop{b}{3}}\(\psi^2\) \tens \TP{\cop{b}{2}}\(\psi^3\) \tens \TP{\cop{b}{1}}\(\psi^4\) \nonumber\\
    &= \eps(a)\<h, S\cop{\psi^1}{1}\cop{\psi^1}{3}\> \<\cop{b}{4}, \copp{\psi^1}{2}{2}\>\copp{\psi^1}{2}{1} \tens \<S\cop{b}{3}, \cop{\psi^2}{1}\>\cop{\psi^2}{2} \nonumber \\ 
    &\qquad \tens \<S\cop{b}{2}, \cop{\psi^3}{1}\>\cop{\psi^3}{2} \tens \<S\cop{b}{1}, \cop{\psi^4}{1}\>\cop{\psi^4}{2} \nonumber\\
    & = \eps(a)\<h, S\cop{\psi^1}{1}\cop{\psi^1}{4}\> \<\cop{b}{4}, \cop{\psi^1}{3}\>\cop{\psi^1}{2} \tens \<S\cop{b}{3}, \cop{\psi^2}{1}\>\cop{\psi^2}{2} \nonumber\\
    &\qquad \tens \<S\cop{b}{2}, \cop{\psi^3}{1}\>\cop{\psi^3}{2} \tens \<S\cop{b}{1}, \cop{\psi^4}{1}\>\cop{\psi^4}{2}
\end{align}
and $F^{a\tens h}\(\rho^{B}\)B^{b}\(s_{0}\)$ gives
\begin{align}
    F^{a\tens h}\(\rho^{B}\)B^{b}\(s_{0}\)\ket{\Sigma_{1}}
    &= F^{a\tens h}\(\rho^{B}\)\left[\TM{\cop{b}{4}}\(\psi^1\) \tens \TP{\cop{b}{3}}\(\psi^2\) \tens \TP{\cop{b}{2}}\(\psi^3\) \tens \TP{\cop{b}{1}}\(\psi^4\)\right] \nonumber \\
    &= F^{a\tens h}\(\rho^{B}\)\left[\<\cop{b}{4}, \cop{\psi^1}{2}\>\cop{\psi^1}{1} \tens \<S\cop{b}{3}, \cop{\psi^2}{1}\>\cop{\psi^2}{2} \tens \<S\cop{b}{2}, \cop{\psi^3}{1}\>\cop{\psi^3}{2} \tens \<S\cop{b}{1}, \cop{\psi^4}{1}\>\cop{\psi^4}{2}\right]\nonumber \\
    &=\<\cop{b}{4}, \cop{\psi^1}{2}\>\rop{a}{h}\(\cop{\psi^1}{1}\) \tens \<S\cop{b}{3}, \cop{\psi^2}{1}\>\cop{\psi^2}{2} \tens \<S\cop{b}{2}, \cop{\psi^3}{1}\>\cop{\psi^3}{2} \tens \<S\cop{b}{1}, \cop{\psi^4}{1}\>\cop{\psi^4}{2} \nonumber \\
    &=\<\cop{b}{4}, \cop{\psi^1}{2}\> \eps(a)\<h, S\copp{\psi^1}{1}{1}\copp{\psi^1}{1}{3}\>\copp{\psi^1}{1}{2} \tens \<S\cop{b}{3}, \cop{\psi^2}{1}\>\cop{\psi^2}{2} \nonumber \\
    &\qquad \tens \<S\cop{b}{2}, \cop{\psi^3}{1}\>\cop{\psi^3}{2} \tens \<S\cop{b}{1}, \cop{\psi^4}{1}\>\cop{\psi^4}{2} \nonumber \\
    &= \eps(a)\< h, S\cop{\psi^1}{1}\cop{\psi^1}{3}\>\<\cop{b}{4}, \cop{\psi^1}{4} \> \cop{\psi^1}{2} \tens \<S\cop{b}{3}, \cop{\psi^2}{1}\> \cop{\psi^2}{2} \nonumber\\
    & \qquad \tens \<S\cop{b}{2}, \cop{\psi^3}{1}\> \cop{\psi^3}{2} \tens \<S\cop{b}{1}, \cop{\psi^4}{1}\> \cop{\psi^4}{2}
\end{align}
At the site $s_{1}$, $B^{b}\(s_{1}\)F^{a\tens h}\(\rho^{A}\)$, gives
\begin{align}
    B^{b}\(s_{1}\)F^{a\tens h}\(\rho^{B}\)\ket{\Sigma_{1}} 
    &= B^{b}\(s_{0}\)\left[ \psi^{5} \tens \psi^{6} \tens \psi^{7} \tens \rop{a}{h}\(\psi^{1}\) \right] \nonumber \\
    &= B^{b}\(s_{0}\)\left[ \psi^{5} \tens \psi^{6} \tens \psi^{7} \tens \eps(a)\<h, S\cop{\psi^1}{1}\cop{\psi^1}{3} \>\cop{\psi^1}{2} \right] \nonumber \\
    &= \TP{\cop{b}{4}}\(\psi^{5}\) \tens \TP{\cop{b}{3}}\(\psi^{6}\) \tens \TP{\cop{b}{2}}\(\psi^{7}\) \tens \eps(a)\<h, S\cop{\psi^1}{1}\cop{\psi^1}{3} \>\TP{\cop{b}{1}}\(\cop{\psi^1}{2}\) \nonumber \\
    &= \<S\cop{b}{4}, \cop{\psi^5}{1}\>\cop{\psi^5}{2} \tens \<S\cop{b}{3}, \cop{\psi^6}{1}\>\cop{\psi^6}{2} \tens \<S\cop{b}{2}, \cop{\psi^7}{1}\>\cop{\psi^7}{2} \nonumber \\
    &\qquad \tens \eps(a)\<h, S\cop{\psi^1}{1}\cop{\psi^1}{3} \>\<S\cop{b}{1}, \copp{\psi^1}{2}{1}\>\copp{\psi^1}{2}{2} \nonumber \\
    &= \<S\cop{b}{4}, \cop{\psi^5}{1}\>\cop{\psi^5}{2} \tens \<S\cop{b}{3}, \cop{\psi^6}{1}\>\cop{\psi^6}{2} \tens \<S\cop{b}{2}, \cop{\psi^7}{1}\>\cop{\psi^7}{2} \nonumber \\
    &\qquad \tens \eps(a)\<h, S\cop{\psi^1}{1}\cop{\psi^1}{4} \>\<S\cop{b}{1}, \cop{\psi^1}{2}\>\cop{\psi^1}{3}
\end{align}
and $F^{a\tens h}\(\rho^{B}\)B^{b}\(s_{1}\)$
\begin{align}
    F^{a\tens h}\(\rho^{B}\)B^{b}\(s_{1}\)\ket{\Sigma_1}
    &= F^{a\tens h}\(\rho^{B}\)\left[\TP{\cop{b}{4}}\(\psi^5\) \tens \TP{\cop{b}{3}}\(\psi^6\) \tens \TP{\cop{b}{2}}\(\psi^7\) \tens \TP{\cop{b}{1}}\(\psi^1\)\right] \nonumber \\
    &= F^{a\tens h}\(\rho^{B}\)\left[\<S\cop{b}{4}, \cop{\psi^5}{1}\>\cop{\psi^5}{2} \tens \<S\cop{b}{3}, \cop{\psi^6}{1}\>\cop{\psi^6}{2} \tens \<S\cop{b}{2}, \cop{\psi^7}{1}\>\cop{\psi^7}{2} \tens \<S\cop{b}{1}, \cop{\psi^1}{1}\>\cop{\psi^1}{2} \right] \nonumber \\
    &= \<S\cop{b}{4}, \cop{\psi^5}{1}\>\cop{\psi^5}{2} \tens \<S\cop{b}{3}, \cop{\psi^6}{1}\>\cop{\psi^6}{2} \tens \<S\cop{b}{2}, \cop{\psi^7}{1}\>\cop{\psi^7}{2} \tens \<S\cop{b}{1}, \cop{\psi^1}{1}\>\rop{a}{h}\(\cop{\psi^1}{2}\) \nonumber \\
    &= \<S\cop{b}{4}, \cop{\psi^5}{1}\>\cop{\psi^5}{2} \tens \<S\cop{b}{3}, \cop{\psi^6}{1}\>\cop{\psi^6}{2} \tens \<S\cop{b}{2}, \cop{\psi^7}{1}\>\cop{\psi^7}{2} \nonumber \\
    &\qquad \tens \<S\cop{b}{1}, \cop{\psi^1}{1}\>\eps(a)\<h, S\copp{\psi^1}{2}{1}\copp{\psi^1}{2}{3}\>\copp{\psi^1}{2}{2} \nonumber \\
    &= \<S\cop{b}{4}, \cop{\psi^5}{1}\>\cop{\psi^5}{2} \tens \<S\cop{b}{3}, \cop{\psi^6}{1}\>\cop{\psi^6}{2} \tens \<S\cop{b}{2}, \cop{\psi^7}{1}\>\cop{\psi^7}{2} \nonumber \\
    &\qquad \tens \eps(a)\<S\cop{b}{1}, \cop{\psi^1}{1}\>\<h, S\cop{\psi^1}{2}\cop{\psi^1}{4}\>\cop{\psi^1}{3} 
\end{align}

We consider another type-$B$ in \eqref{sigma_2:relation}, it is made up of a left dual triangle ($\tau_1$) and a right direct triangle ($\tau_2$) so we have $\rho^{B} = \tau_{1} \cup \tau_{2}$. This will help us to compute the commutation relations between the vertex operator and the type-$B$ ribbon operator at the starting site of a type-$B$ ribbon.
\begin{equation}
    \begin{minipage}[c]{0.05\textwidth}
        $\ket{\Sigma_{2}}$
    \end{minipage}
    \begin{minipage}[c]{0.05\textwidth}
        $=$
    \end{minipage}
    \begin{minipage}[c]{0.5\textwidth}
        \begin{tikzpicture}[scale = 0.8]

    \begin{scope}[thick, decoration={markings, mark=at position 0.7 with {\arrow{latex}}}]
        \draw[postaction={decorate}] (5*\rcs,5*\rcs) -- (5*\rcs,9*\rcs);
        \draw[postaction={decorate}] (5*\rcs,5*\rcs) -- (1*\rcs,5*\rcs);
        \draw[postaction={decorate}] (9*\rcs,5*\rcs) -- (5*\rcs,5*\rcs);
        \draw[postaction={decorate}] (5*\rcs,5*\rcs) -- (5*\rcs,1*\rcs);
    \end{scope}

    \begin{scope}[thin, red, dashed, decoration={markings, mark=at position 0.3 with {\arrow{latex}}}]
        \draw[postaction={decorate}] (7*\rcs,7*\rcs) -- (3*\rcs, 7*\rcs);
    \end{scope}

    \draw[dotted, very thick] (5*\rcs, 5*\rcs) -- (3*\rcs, 7*\rcs);
    \draw[dotted, thick] (5*\rcs, 5*\rcs) -- (7*\rcs, 7*\rcs);
    \draw[dotted, thick] (7*\rcs, 7*\rcs) -- (9*\rcs, 5*\rcs);

    \harrow{\rcs}{(4*\rcs, 6*\rcs)}{(6*\rcs, 6*\rcs)}{0.07}{0}
    \harrow{\rcs}{(6*\rcs, 6*\rcs)}{(8*\rcs, 6*\rcs)}{0.07}{1}
    
    \draw[text=black, draw=white, fill=white] (5.5*\rcs, 8.0*\rcs) circle (5pt) node {\footnotesize $\psi^{1}$};
    \draw[text=black, draw=white, fill=white] (6.5*\rcs, 4.5*\rcs) circle (5pt) node {\footnotesize $\psi^{2}$};
    \draw[text=black, draw=white, fill=white] (5.5*\rcs, 3*\rcs) circle (5pt) node {\footnotesize $\psi^{3}$};
    \draw[text=black, draw=white, fill=white] (3.0*\rcs, 4.5*\rcs) circle (5pt) node {\footnotesize $\psi^{4}$};

    \draw[text=black, draw=white, fill=white] (4.0*\rcs, 6.0*\rcs) circle (5pt) node {\footnotesize $s_{0}$};
    \draw[text=black, draw=white, fill=white] (5.0*\rcs, 6.5*\rcs) circle (5pt) node {\footnotesize $\tau_{1}$};
    \draw[text=black, draw=white, fill=white] (7.0*\rcs, 5.5*\rcs) circle (5pt) node {\footnotesize $\tau_{2}$};

        



\end{tikzpicture}
    \end{minipage}
    \label{sigma_2:relation}
\end{equation}
At $s_{0}$, and $A^{g}\(s_{0}\)F^{a\tens h}\(\rho^{B}\)$ gives
\begin{align}
    A^{g}\(s_{0}\)F^{a\tens h}\(\rho^{B}\)\ket{\Sigma_{2}}
    &= A^{g}\(s_{0}\)\left[\rop[1]{\cop{a}{2}}{\cop{h}{2}}\(\psi^1\) \tens \rop[2]{\cop{a}{1}\cop{h}{1}S\cop{h}{3}}{\cop{h}{4}}\(\psi^2\) \tens \psi^3 \tens \psi^4 \right] \nonumber \\
    &= A^{g}\(s_{0}\)\left[\eps({\cop{a}{2}})\<\cop{h}{2}, S\cop{\psi^1}{1}\cop{\psi^1}{3}\>\cop{\psi^1}{2} \tens \eps(\cop{h}{4})\<S\(\cop{a}{1}\cop{h}{1}S\cop{h}{3}\), \cop{\psi^2}{1}\>\cop{\psi^2}{2} \otimes \psi^{3} \otimes \psi^{4}\right] \nonumber \\
    &= \eps({\cop{a}{2}})\<\cop{h}{2}, S\cop{\psi^1}{1}\cop{\psi^1}{3}\>\PM{1}{\cop{g}{4}}\(\cop{\psi^1}{2}\) \nonumber \\
    &\qquad \tens \eps(\cop{h}{4})\<S\(\cop{a}{1}\cop{h}{1}S\cop{h}{3}\), \cop{\psi^2}{1}\>\PP{\cop{g}{3}S\cop{g}{5}}{\cop{g}{6}}\(\cop{\psi^2}{2}\) \nonumber \\
    &\qquad \tens \PM{\cop{g}{2}S\cop{g}{7}}{\cop{g}{8}}\(\psi^{3}\) \tens \PM{\cop{g}{1}S\cop{g}{9}}{\cop{g}{10}}\(\psi^{4}\) \nonumber \\
    &= \eps({\cop{a}{2}})\<\cop{h}{2}, S\cop{\psi^1}{1}\cop{\psi^1}{3}\>\<\copp{g}{4}{1}, \copp{\psi^1}{2}{3}\>\<\copp{g}{4}{2}, S^{-1}\copp{\psi^1}{2}{1}\> \copp{\psi^1}{2}{2} \nonumber \\
    &\qquad \tens \eps(\cop{h}{4})\<S\(\cop{a}{1}\cop{h}{1}S\cop{h}{3}\), \cop{\psi^2}{1}\>\<S\copp{g}{6}{1}S\(\cop{g}{3}S\cop{g}{5}\), \copp{\psi^2}{2}{1}\>\<\copp{g}{6}{2}, \copp{\psi^2}{2}{3} \>\copp{\psi^2}{2}{2}  \nonumber \\
    &\qquad \tens \<\(\cop{g}{2}S\cop{g}{7}\)\copp{g}{8}{1}, \cop{\psi^3}{3}\>\<\copp{g}{8}{2}, S^{-1}\cop{\psi^3}{1}\>\cop{\psi^3}{2} \nonumber \\ 
    &\qquad \tens \<\(\cop{g}{1}S\cop{g}{9}\)\copp{g}{10}{1}, \cop{\psi^4}{3}\>\<\copp{g}{10}{2}, S^{-1}\cop{\psi^4}{1}\>\cop{\psi^4}{2} \nonumber \\
    &= \eps({\cop{a}{2}})\<\cop{h}{2}, S\cop{\psi^1}{1}\cop{\psi^1}{5}\>\<\cop{g}{4}, \cop{\psi^1}{4}\>\<\cop{g}{5}, S^{-1}\cop{\psi^1}{2}\> \cop{\psi^1}{3} \nonumber \\
    &\qquad \tens \eps(\cop{h}{4})\<S\(\cop{a}{1}\cop{h}{1}S\cop{h}{3}\), \cop{\psi^2}{1}\>\<S\cop{g}{3}, \cop{\psi^2}{2}\>\<\cop{g}{6}, \cop{\psi^2}{4}\>\cop{\psi^2}{3}  \nonumber \\
    &\qquad \tens \<\cop{g}{2}, \cop{\psi^3}{3}\>\<\cop{g}{7}, S^{-1}\cop{\psi^3}{1}\>\cop{\psi^3}{2} \tens \<\cop{g}{1}, \cop{\psi^4}{3}\>\<\cop{g}{8}, S^{-1}\cop{\psi^4}{1}\>\cop{\psi^4}{2} 
\end{align}
and $F^{a\tens h}\(\rho^{B}\)A^{g}\(s_{0}\)$ gives
\begin{align}
    F^{a\tens h}\(\rho^{B}\)A^{g}\(s_{0}\)\ket{\Sigma_{2}}
    &= F^{a\tens h}\(\rho^{B}\)\left[\PM{1}{\cop{g}{4}}\(\psi^1\) \tens \PP{\cop{g}{3}S\cop{g}{5}}{\cop{g}{6}}\(\psi^2\) \tens \PM{\cop{g}{2}S\cop{g}{7}}{\cop{g}{8}}\(\psi^3\) \right. \nonumber \\
    &\qquad \left. \tens \PM{\cop{g}{1}S\cop{g}{9}}{\cop{g}{10}}\(\psi^2\) \right] \nonumber \\
    &= F^{a\tens h}\(\rho^{B}\)\left[\<\copp{g}{4}{1}, \cop{\psi^1}{3}\>\<\copp{g}{4}{2}, S^{-1}\cop{\psi^1}{1}\>\cop{\psi^1}{2} \right. \nonumber \\
    &\qquad \left. \tens \<S\copp{g}{6}{1}S\(\cop{g}{3}S\cop{g}{5}\), \cop{\psi^2}{1}\>\<\copp{g}{6}{2}, \cop{\psi^2}{3}\>\cop{\psi^2}{2} \right. \nonumber\\
    &\qquad \left. \tens \<\(\cop{g}{2}S\cop{g}{7}\)\copp{g}{8}{1}, \cop{\psi^3}{3} \>\<\copp{g}{8}{2}, S^{-1}\cop{\psi^3}{1}\>\cop{\psi^3}{2} \right. \nonumber \\
    &\qquad \left. \tens \<\(\cop{g}{1}S\cop{g}{9}\)\copp{g}{10}{1}, \cop{\psi^4}{3}\>\<\copp{g}{10}{2}, S^{-1}\cop{\psi^4}{1}\>\cop{\psi^4}{2} \right] \nonumber \\
    &= \<\cop{g}{4}, \cop{\psi^1}{3}\>\<\cop{g}{5}, S^{-1}\cop{\psi^1}{1}\>\rop[1]{\cop{a}{2}}{\cop{h}{2}}\(\cop{\psi^1}{2}\) \nonumber \\
    &\qquad \tens \<S\(\cop{g}{3}\), \cop{\psi^2}{1}\>\<\cop{g}{6}, \cop{\psi^2}{3}\>\rop[2]{\cop{a}{1}\cop{h}{1}S\cop{h}{3}}{\cop{h}{4}}\(\cop{\psi^2}{2}\) \nonumber\\
    &\qquad \tens \<\cop{g}{2}, \cop{\psi^3}{3} \>\<\cop{g}{7}, S^{-1}\cop{\psi^3}{1}\>\cop{\psi^3}{2} \tens \<\cop{g}{1}, \cop{\psi^4}{3}\>\<\cop{g}{8}, S^{-1}\cop{\psi^4}{1}\>\cop{\psi^4}{2} \nonumber \\
    &= \<\cop{g}{4}, \cop{\psi^1}{3}\>\<\cop{g}{5}, S^{-1}\cop{\psi^1}{1}\>\eps\(\cop{a}{2}\)\<\cop{h}{2}, S\copp{\psi^1}{2}{1}\copp{\psi^1}{2}{3} \>\copp{\psi^1}{2}{2} \nonumber \\
    &\qquad \tens \<S\(\cop{g}{3}\), \cop{\psi^2}{1}\>\<\cop{g}{6}, \cop{\psi^2}{3}\>\eps\(\cop{h}{4}\)\<S\(\cop{a}{1}\cop{h}{1}S\cop{h}{3}\), \copp{\psi^2}{2}{1}\>\copp{\psi^2}{2}{2} \nonumber\\
    &\qquad \tens \<\cop{g}{2}, \cop{\psi^3}{3} \>\<\cop{g}{7}, S^{-1}\cop{\psi^3}{1}\>\cop{\psi^3}{2} \tens \<\cop{g}{1}, \cop{\psi^4}{3}\>\<\cop{g}{8}, S^{-1}\cop{\psi^4}{1}\>\cop{\psi^4}{2} \nonumber \\
    &= \eps\(\cop{a}{2}\)\<\cop{g}{4}, \cop{\psi^1}{5}\>\<\cop{g}{5}, S^{-1}\cop{\psi^1}{1}\>\<\cop{h}{2}, S\cop{\psi^1}{2}\cop{\psi^1}{4} \>\cop{\psi^1}{3} \nonumber \\
    &\qquad \tens \eps\(\cop{h}{4}\)\<S\(\cop{g}{3}\), \cop{\psi^2}{1}\>\<\cop{g}{6}, \cop{\psi^2}{4}\>\<S\(\cop{a}{1}\cop{h}{1}S\cop{h}{3}\), \cop{\psi^2}{2}\>\cop{\psi^2}{3} \nonumber\\
    &\qquad \tens \<\cop{g}{2}, \cop{\psi^3}{3} \>\<\cop{g}{7}, S^{-1}\cop{\psi^3}{1}\>\cop{\psi^3}{2} \tens \<\cop{g}{1}, \cop{\psi^4}{3}\>\<\cop{g}{8}, S^{-1}\cop{\psi^4}{1}\>\cop{\psi^4}{2} 
\end{align}

We consider another type-$B$ ribbon $\rho^{A} = \tau_{1} \cup \tau_{2}$ in \eqref{sigma_3:relation} which is also made up of a right direct triangle ($\tau_1$) and a left dual triangle ($\tau_2$). This will enable us to compute the commutation relation between the vertex operator and the type-$A$ ribbon operators at the end sites of type-$A$ ribbons.

\begin{equation}
    \begin{minipage}[c]{0.05\textwidth}
        $\ket{\Sigma_{3}}$
    \end{minipage}
    \begin{minipage}[c]{0.05\textwidth}
        $=$
    \end{minipage}
    \begin{minipage}[c]{0.5\textwidth}
        \begin{tikzpicture}[scale = 0.8]

    \begin{scope}[thick, decoration={markings, mark=at position 0.7 with {\arrow{latex}}}]
        \draw[postaction={decorate}] (5*\rcs,5*\rcs) -- (5*\rcs,9*\rcs);
        \draw[postaction={decorate}] (5*\rcs,5*\rcs) -- (1*\rcs,5*\rcs);
        \draw[postaction={decorate}] (9*\rcs,5*\rcs) -- (5*\rcs,5*\rcs);
        \draw[postaction={decorate}] (5*\rcs,5*\rcs) -- (5*\rcs,1*\rcs);
    \end{scope}

    \begin{scope}[thin, red, dashed, decoration={markings, mark=at position 0.3 with {\arrow{latex}}}]
        \draw[postaction={decorate}] (7*\rcs,7*\rcs) -- (3*\rcs, 7*\rcs);
    \end{scope}

    \draw[dotted, thick] (1*\rcs, 5*\rcs) -- (3*\rcs, 7*\rcs);
    \draw[dotted, thick] (5*\rcs, 5*\rcs) -- (3*\rcs, 7*\rcs);
    \draw[dotted, very thick] (5*\rcs, 5*\rcs) -- (7*\rcs, 7*\rcs);


    \harrow{\rcs}{(2*\rcs, 6*\rcs)}{(4*\rcs, 6*\rcs)}{0.07}{1}
    \harrow{\rcs}{(4*\rcs, 6*\rcs)}{(6*\rcs, 6*\rcs)}{0.07}{0}
    
    \draw[text=black, draw=white, fill=white] (6.5*\rcs, 4.5*\rcs) circle (5pt) node {\footnotesize $\psi^{1}$};
    \draw[text=black, draw=white, fill=white] (5.5*\rcs, 3*\rcs) circle (5pt) node {\footnotesize $\psi^{2}$};
    \draw[text=black, draw=white, fill=white] (3.0*\rcs, 4.5*\rcs) circle (5pt) node {\footnotesize $\psi^{3}$};
    \draw[text=black, draw=white, fill=white] (5.5*\rcs, 8.0*\rcs) circle (5pt) node {\footnotesize $\psi^{4}$};

    \draw[text=black, draw=white, fill=white] (6.0*\rcs, 6.0*\rcs) circle (5pt) node {\footnotesize $s_{1}$};
    \draw[text=black, draw=white, fill=white] (3.0*\rcs, 5.5*\rcs) circle (5pt) node {\footnotesize $\tau_{1}$};
    \draw[text=black, draw=white, fill=white] (5.0*\rcs, 6.5*\rcs) circle (5pt) node {\footnotesize $\tau_{2}$};

        



\end{tikzpicture}
    \end{minipage}
    \label{sigma_3:relation}
\end{equation}
At $s_{1}$, and $A^{g}\(s_{1}\)F^{a\tens h}\(\rho^{B}\)$ gives
\begin{align}
    A^{g}\(s_{1}\)F^{a\tens h}\(\rho^{B}\)\ket{\Sigma_3}
    &= A^{g}(s_{1})\left[\psi^1 \tens \psi^2 \tens \rop[1]{\cop{a}{2}}{\cop{h}{2}}\(\psi^3\) \tens \rop[2]{\cop{a}{1}\cop{h}{2}S\cop{h}{3}}{\cop{h}{4}}\(\psi^4\) \right] \nonumber\\
    &= A^{g}(s_{1})\left[ \psi^1 \tens \psi^2 \tens \eps\(\cop{h}{2}\)\<S\cop{a}{2}, \cop{\psi^3}{1}\>\cop{\psi^3}{2} \tens \eps\(\cop{a}{1}\cop{h}{1}S\cop{h}{3}\) \<\cop{h}{4}, S\cop{\psi^4}{1}\cop{\psi^4}{3} \>\cop{\psi^4}{2} \right] \nonumber\\
    &= \PP{1}{\cop{g}{4}}\(\psi^1\) \tens \PM{\cop{g}{3}S\cop{g}{5}}{\cop{g}{6}}\(\psi^2\) \tens \eps\(\cop{h}{2}\)\<S\cop{a}{2}, \cop{\psi^3}{1}\>\PM{\cop{g}{2}S\cop{g}{7}}{\cop{g}{8}}\(\cop{\psi^3}{2}\) \nonumber \\
    &\qquad \tens \eps\(\cop{a}{1}\cop{h}{1}S\cop{h}{3}\) \<\cop{h}{4}, S\cop{\psi^4}{1}\cop{\psi^4}{3}\>\PM{\cop{g}{1}S\cop{g}{9}}{\cop{g}{10}}\(\cop{\psi^4}{2}\) \nonumber\\
    &= \<S\copp{g}{4}{1}, \cop{\psi^1}{1}\>\<\copp{g}{4}{2}, \cop{\psi^1}{3}\>\cop{\psi^1}{2} \tens \<\(\cop{g}{3}S\cop{g}{5}\)\copp{g}{6}{1}, \cop{\psi^2}{3}\>\<\copp{g}{6}{2}, \cop{\psi^2}{1}\> \cop{\psi^2}{2} \nonumber \\
    &\qquad \tens \eps\(\cop{h}{2}\)\<S\cop{a}{2}, \cop{\psi^3}{1}\>\<\(\cop{g}{2}S\cop{g}{7}\)\copp{g}{8}{1}, \copp{\psi^3}{2}{3}\>\<\copp{g}{8}{2}, \copp{\psi^3}{2}{1}\>\copp{\psi^3}{2}{2} \nonumber \\          
    &\qquad \tens \eps\(\cop{a}{1}\cop{h}{1}S\cop{h}{3}\) \<\cop{h}{4}, S\cop{\psi^4}{1}\cop{\psi^4}{3}\>\<\(\cop{g}{1}S\cop{g}{9}\)\copp{g}{10}{1}, \copp{\psi^4}{2}{3}\>\<\copp{g}{10}{2}, \copp{\psi^4}{2}{1}\>\copp{\psi^4}{2}{2} \nonumber \\
    &= \<S\cop{g}{4}, \cop{\psi^1}{1}\>\<\cop{g}{5}, \cop{\psi^1}{3}\>\cop{\psi^1}{2} \tens \<\cop{g}{3}, \cop{\psi^2}{3}\>\<\cop{g}{6}, \cop{\psi^2}{1}\> \cop{\psi^2}{2} \nonumber \\
    &\qquad \tens \eps\(\cop{h}{2}\)\<S\cop{a}{2}, \cop{\psi^3}{1}\>\<\cop{g}{2}, \cop{\psi^3}{4}\>\<\cop{g}{7}, \cop{\psi^3}{2}\>\cop{\psi^3}{3} \nonumber \\          
    &\qquad \tens \eps\(\cop{a}{1}\cop{h}{1}S\cop{h}{3}\) \<\cop{h}{4}, S\cop{\psi^4}{1}\cop{\psi^4}{5}\>\<\cop{g}{1}, \cop{\psi^4}{4}\>\<\cop{g}{8}, \cop{\psi^4}{2}\>\cop{\psi^4}{3}
\end{align}
and $F^{a\tens h}\(\rho^{B}\)A^{g}\(s_{1}\)$ gives
\begin{align}
    F^{a\tens h}\(\rho^{B}\)A^{g}\(s_{1}\)\ket{\Sigma_{3}}
    &= F^{a\tens h}\(\rho^{B}\)\left[\PP{1}{\cop{g}{4}}\(\psi^1\) \tens \PM{\cop{g}{3}S\cop{g}{5}}{\cop{g}{6}}\(\psi^2\) \tens \PM{\cop{g}{2}S\cop{g}{7}}{\cop{g}{8}}\(\psi^3\) \right. \nonumber \\
    &\qquad \left. \tens \PM{\cop{g}{1}S\cop{g}{9}}{\cop{g}{10}}\(\psi^4\) \right] \nonumber \\
    &= F^{a\tens h}\(\rho^{B}\)\left[\<S\copp{g}{4}{1}, \cop{\psi^1}{1}\>\<\copp{g}{4}{2}, \cop{\psi^1}{3}\>\cop{\psi^1}{2} \right. \nonumber \\
    &\qquad \left. \tens \<\(\cop{g}{3}S\cop{g}{5}\)\copp{g}{6}{1}, \cop{\psi^2}{3}\>\<\copp{g}{6}{2}, \cop{\psi^2}{1}\>\cop{\psi^2}{2} \right. \nonumber \\
    &\qquad \left. \tens \<\(\cop{g}{2}S\cop{g}{7}\)\copp{g}{8}{1}, \cop{\psi^3}{3}\>\<\copp{g}{8}{2}, \cop{\psi^3}{1}\>\cop{\psi^3}{2} \right. \nonumber \\
    &\qquad \left. \tens \<\(\cop{g}{1}S\cop{g}{9}\)\copp{g}{10}{1}, \cop{\psi^4}{3}\>\<\copp{g}{10}{2}, \cop{\psi^4}{1}\>\cop{\psi^4}{2} \right] \nonumber \\
    &= \<S\cop{g}{4}, \cop{\psi^1}{1}\>\<\cop{g}{5}, \cop{\psi^1}{3}\>\cop{\psi^1}{2} \tens \<\cop{g}{3}, \cop{\psi^2}{3}\>\<\cop{g}{6}, \cop{\psi^2}{1}\>\cop{\psi^2}{2} \nonumber \\
    &\qquad \tens \<\cop{g}{2}, \cop{\psi^3}{3}\>\<\cop{g}{7}, \cop{\psi^3}{1}\>\rop[1]{\cop{a}{2}}{\cop{h}{2}}\(\cop{\psi^3}{2}\)  \nonumber \\ 
    &\qquad \tens \<\cop{g}{1}, \cop{\psi^4}{3}\>\<\cop{g}{8}, \cop{\psi^4}{1}\>\rop[2]{\cop{a}{1}\cop{h}{1}S\cop{h}{3}}{\cop{h}{4}}\(\cop{\psi^4}{2}\) \nonumber \\
    &= \<S\cop{g}{4}, \cop{\psi^1}{1}\>\<\cop{g}{5}, \cop{\psi^1}{3}\>\cop{\psi^1}{2} \tens \<\cop{g}{3}, \cop{\psi^2}{3}\>\<\cop{g}{6}, \cop{\psi^2}{1}\>\cop{\psi^2}{2} \nonumber \\
    &\qquad \tens \<\cop{g}{2}, \cop{\psi^3}{3}\>\<\cop{g}{7}, \cop{\psi^3}{1}\>\eps\(\cop{h}{2}\)\<S\cop{a}{2}, \copp{\psi^3}{2}{1}\>\copp{\psi^3}{2}{2}  \nonumber \\ 
    &\qquad \tens \<\cop{g}{1}, \cop{\psi^4}{3}\>\<\cop{g}{8}, \cop{\psi^4}{1}\>\eps\(\cop{a}{1}\cop{h}{1}S\cop{h}{3}\)\<\cop{h}{4}, S\copp{\psi^4}{2}{1}\copp{\psi^3}{2}{3}\>\copp{\psi^3}{2}{2} \nonumber \\
    &= \<S\cop{g}{4}, \cop{\psi^1}{1}\>\<\cop{g}{5}, \cop{\psi^1}{3}\>\cop{\psi^1}{2} \tens \<\cop{g}{3}, \cop{\psi^2}{3}\>\<\cop{g}{6}, \cop{\psi^2}{1}\>\cop{\psi^2}{2} \nonumber \\
    &\qquad \tens \eps\(\cop{h}{2}\)\<S\cop{a}{2}, \cop{\psi^3}{2}\>\<\cop{g}{2}, \cop{\psi^3}{4}\>\<\cop{g}{7}, \cop{\psi^3}{1}\>\cop{\psi^3}{3}  \nonumber \\ 
    &\qquad \tens \eps\(\cop{a}{1}\cop{h}{1}S\cop{h}{3}\)\<\cop{h}{4}, S\cop{\psi^4}{2}\cop{\psi^3}{4}\>\<\cop{g}{1}, \cop{\psi^4}{5}\>\<\cop{g}{8}, \cop{\psi^4}{1}\>\cop{\psi^3}{3}
\end{align}

Next we compute the commutation relations between the type-$A$ ribbon operators and the face operators at the starting and end sites of the type-$A$ ribbons. We begin by considering the type-$A$ ribbon illustrated in \eqref{sigma_4:relation}, it is made up of a single right dual triangle.

\begin{equation}
    \begin{minipage}[c]{0.05\textwidth}
        $\ket{\Sigma_{4}}$
    \end{minipage}
    \begin{minipage}[c]{0.05\textwidth}
        $=$
    \end{minipage}
    \begin{minipage}[c]{0.5\textwidth}
        \begin{tikzpicture}[scale = 0.8]

    \begin{scope}[thick, decoration={markings, mark=at position 0.3 with {\arrow{latex}}}]
        \draw[postaction={decorate}] (5*\rcs,1*\rcs) -- (5*\rcs,5*\rcs);
        \draw[postaction={decorate}] (5*\rcs,5*\rcs) -- (1*\rcs,5*\rcs);
        \draw[postaction={decorate}] (1*\rcs,5*\rcs) -- (1*\rcs,1*\rcs);
        \draw[postaction={decorate}] (1*\rcs,1*\rcs) -- (5*\rcs,1*\rcs);
        \draw[postaction={decorate}] (5*\rcs,1*\rcs) -- (9*\rcs,1*\rcs);
        \draw[postaction={decorate}] (9*\rcs,1*\rcs) -- (9*\rcs,5*\rcs);
        \draw[postaction={decorate}] (9*\rcs,5*\rcs) -- (5*\rcs,5*\rcs);
    \end{scope}

    \begin{scope}[thin, red, dashed, decoration={markings, mark=at position 0.3 with {\arrow{latex}}}]
        \draw[postaction={decorate}] (7*\rcs, 3*\rcs) -- (3*\rcs, 3*\rcs);
    \end{scope}

    \draw[dotted, thick] (5*\rcs, 5*\rcs) -- (3*\rcs, 3*\rcs);
    \draw[dotted, thick] (5*\rcs, 5*\rcs) -- (7*\rcs, 3*\rcs);


    \harrow{\rcs}{(4*\rcs, 4*\rcs)}{(6*\rcs, 4*\rcs)}{0.07}{1}

    \draw[text=black, draw=white, fill=white] (3.0*\rcs, 5.5*\rcs) circle (5pt) node {\footnotesize $\psi^{1}$};
    \draw[text=black, draw=white, fill=white] (0.5*\rcs, 3.0*\rcs) circle (5pt) node {\footnotesize $\psi^{2}$};
    \draw[text=black, draw=white, fill=white] (3.0*\rcs, 0.5*\rcs) circle (5pt) node {\footnotesize $\psi^{3}$};
    \draw[text=black, draw=white, fill=white] (5.5*\rcs, 2.0*\rcs) circle (5pt) node {\footnotesize $\psi^{4}$};
    \draw[text=black, draw=white, fill=white] (7.0*\rcs, 0.5*\rcs) circle (5pt) node {\footnotesize $\psi^{5}$};
    \draw[text=black, draw=white, fill=white] (9.5*\rcs, 3.0*\rcs) circle (5pt) node {\footnotesize $\psi^{6}$};
    \draw[text=black, draw=white, fill=white] (7.0*\rcs, 5.5*\rcs) circle (5pt) node {\footnotesize $\psi^{7}$};

    \draw[text=black, draw=white, fill=white] (4.0*\rcs, 4.0*\rcs) circle (5pt) node {\footnotesize $s_{0}$};
    \draw[text=black, draw=white, fill=white] (6.0*\rcs, 4.0*\rcs) circle (5pt) node {\footnotesize $s_{1}$};
    \draw[text=black, draw=white, fill=white] (5.0*\rcs, 3.5*\rcs) circle (5pt) node {\footnotesize $\tau$};

        



\end{tikzpicture}
    \end{minipage}
    \label{sigma_4:relation}
\end{equation}

At $s_{0}$, $F^{a\tens h}\(\rho^{A}\)B^{b}\(s_{0}\)$ gives
\begin{align}
    F^{a\tens h}\(\rho^{A}\)B^{b}\(s_{0}\)\ket{\Sigma_4} 
    &= F^{a\tens h}\(\rho^{A}\)\left[\TM{\cop{b}{4}}\(\psi^1\) \tens \TM{\cop{b}{3}}\(\psi^2\) \tens \TM{\cop{b}{2}}\(\psi^3\) \tens \TM{\cop{b}{1}}\(\psi^4\)\right] \nonumber \\
    &= F^{a\tens h}\(\rho^{A}\)\left[\<\cop{b}{4}, \cop{\psi^1}{2}\>\cop{\psi^1}{1} \tens \<\cop{b}{3}, \cop{\psi^2}{2}\>\cop{\psi^2}{1}  \tens \<\cop{b}{2}, \cop{\psi^3}{2}\>\cop{\psi^3}{1}  \tens \<\cop{b}{1}, \cop{\psi^4}{2}\>\cop{\psi^4}{1} \right] \nonumber \\
    &= \<\cop{b}{4}, \cop{\psi^1}{2}\>\cop{\psi^1}{1} \tens \<\cop{b}{3}, \cop{\psi^2}{2}\>\cop{\psi^2}{1}  \tens \<\cop{b}{2}, \cop{\psi^3}{2}\>\cop{\psi^3}{1} \tens \<\cop{b}{1}, \cop{\psi^4}{2}\>\rop{a}{h}\(\cop{\psi^4}{1}\)  \nonumber \\
    &= \<\cop{b}{4}, \cop{\psi^1}{2}\>\cop{\psi^1}{1} \tens \<\cop{b}{3}, \cop{\psi^2}{2}\>\cop{\psi^2}{1}  \tens \<\cop{b}{2}, \cop{\psi^3}{2}\>\cop{\psi^3}{1}  \nonumber \\
    &\qquad \tens \<\cop{b}{1}, \cop{\psi^4}{2}\>\eps(a)\<h, \copp{\psi^4}{1}{1}S^{-1}\copp{\psi^4}{1}{3}\>\copp{\psi^4}{1}{2} \nonumber \\
    &= \<\cop{b}{4}, \cop{\psi^1}{2}\>\cop{\psi^1}{1} \tens \<\cop{b}{3}, \cop{\psi^2}{2}\>\cop{\psi^2}{1}  \tens \<\cop{b}{2}, \cop{\psi^3}{2}\>\cop{\psi^3}{1}  \nonumber \\
    &\qquad \tens \eps(a)\<\cop{b}{1}, \cop{\psi^4}{4}\>\<h, \cop{\psi^4}{1}S^{-1}\cop{\psi^4}{3}\>\cop{\psi^4}{2}
\end{align}
and $B^{b}\(s_{0}\)F^{a\tens h}\(\rho^{A}\)$ gives
\begin{align}
    B^{b}\(s_{0}\)F^{a\tens h}\(\rho^{A}\)\ket{\Sigma_4}
    &= B^{b}\(s_{0}\)\left[ \psi^1 \tens \psi^2 \tens \psi^3 \tens F^{a\tens h}(\tau)\(\psi^4\)\right] \nonumber \\
    &= B^{b}\(s_{0}\)\left[ \psi^1 \tens \psi^2 \tens \psi^3 \tens \eps(a)\<h, \cop{\psi^4}{1}S^{-1}\cop{\psi^4}{3}\>\cop{\psi^4}{2}\right] \nonumber \\
    &= \TM{\cop{b}{4}}\(\psi^1\) \tens \TM{\cop{b}{3}}\(\psi^2\) \tens \TM{\cop{b}{2}}\(\psi^3\) \tens \eps(a)\<h, \cop{\psi^4}{1}S^{-1}\cop{\psi^4}{3}\>\TM{\cop{b}{1}}\(\cop{\psi^4}{2}\) \nonumber \\
    &= \<\cop{b}{4}, \cop{\psi^1}{2}\>\cop{\psi^1}{1} \tens \<\cop{b}{3}, \cop{\psi^3}{2}\>\cop{\psi^3}{1} \tens \<\cop{b}{2}, \cop{\psi^3}{2}\>\cop{\psi^3}{1} \nonumber \\
    &\qquad \tens \eps(a)\<h, \cop{\psi^4}{1}S^{-1}\cop{\psi^4}{3}\>\<\cop{b}{1}, \copp{\psi^4}{2}{2}\>\copp{\psi^4}{2}{1} \nonumber \\
    &= \<\cop{b}{4}, \cop{\psi^1}{2}\>\cop{\psi^1}{1} \tens \<\cop{b}{3}, \cop{\psi^3}{2}\>\cop{\psi^3}{1} \tens \<\cop{b}{2}, \cop{\psi^3}{2}\>\cop{\psi^3}{1} \nonumber \\
    &\qquad \tens \eps(a)\<h, \cop{\psi^4}{1}S^{-1}\cop{\psi^4}{4}\>\<\cop{b}{1}, \cop{\psi^4}{3}\>\cop{\psi^4}{2}
\end{align}

At $s_{1}$, $F^{a\tens h}\(\rho^{A}\)B^{b}\(s_{1}\)$ gives
\begin{align}
    F^{a\tens h}\(\rho^{A}\)B^{b}\(s_{1}\)\ket{\Sigma_4}
    &= F^{a\tens h}\(\rho^{A}\)\left[\TP{\cop{b}{4}}\(\psi^4\) \tens \TM{\cop{b}{3}}\(\psi^5\) \tens \TM{\cop{b}{2}}\(\psi^6\) \tens \TM{\cop{b}{1}}\(\psi^7\)\right] \nonumber \\
    &= F^{a\tens h}\(\rho^{A}\)\left[\<S\cop{b}{4}, \cop{\psi^4}{1}\>\cop{\psi^4}{2} \tens \<\cop{b}{3}, \cop{\psi^5}{2}\>\cop{\psi^5}{1} \tens \<\cop{b}{2}, \cop{\psi^6}{2}\>\cop{\psi^6}{1} \tens \<\cop{b}{1}, \cop{\psi^7}{2}\>\cop{\psi^7}{1}\right] \nonumber \\
    &= \<S\cop{b}{4}, \cop{\psi^4}{1}\>\rop{a}{h}\(\cop{\psi^4}{2}\) \tens \<\cop{b}{3}, \cop{\psi^5}{2}\>\cop{\psi^5}{1} \tens \<\cop{b}{2}, \cop{\psi^6}{2}\>\cop{\psi^6}{1} \tens \<\cop{b}{1}, \cop{\psi^7}{2}\>\cop{\psi^7}{1} \nonumber \\
    &= \eps(a)\<S\cop{b}{4}, \cop{\psi^4}{1}\>\<h, \copp{\psi^4}{2}{1}S^{-1}\copp{\psi^4}{2}{3}\>\copp{\psi^4}{2}{2} \nonumber \\
    &\qquad \tens \<\cop{b}{3}, \cop{\psi^5}{2}\>\cop{\psi^5}{1} \tens \<\cop{b}{2}, \cop{\psi^6}{2}\>\cop{\psi^6}{1} \tens \<\cop{b}{1}, \cop{\psi^7}{2}\>\cop{\psi^7}{1} \nonumber \\
    &= \eps(a)\<S\cop{b}{4}, \cop{\psi^4}{1}\>\<h, \cop{\psi^4}{2}S^{-1}\cop{\psi^4}{4}\>\cop{\psi^4}{3} \nonumber \\
    &\qquad \tens \<\cop{b}{3}, \cop{\psi^5}{2}\>\cop{\psi^5}{1} \tens \<\cop{b}{2}, \cop{\psi^6}{2}\>\cop{\psi^6}{1} \tens \<\cop{b}{1}, \cop{\psi^7}{2}\>\cop{\psi^7}{1}
\end{align}
and $B^{b}\(s_{1}\)F^{a\tens h}\(\rho^{A}\)$ gives
\begin{align}
    B^{b}\(s_{1}\)F^{a\tens h}\(\rho^{A}\)\ket{\Sigma_4}
    &= B^{b}\(s_{1}\)\left[\rop{a}{h}\(\psi^4\) \tens \psi^5 \tens \psi^6 \tens \psi^7 \right] \nonumber \\
    &= B^{b}\(s_{1}\)\left[\eps(a)\<h, \cop{\psi^4}{1}S^{-1}\cop{\psi^4}{3}\>\cop{\psi^4}{2} \tens \psi^5 \tens \psi^6 \tens \psi^7 \right] \nonumber \\
    &= \eps(a)\<h, \cop{\psi^4}{1}S^{-1}\cop{\psi^4}{3}\>\TP{\cop{b}{4}}\(\cop{\psi^4}{2}\) \tens \TM{\cop{b}{3}}\(\psi^5\) \tens \TM{\cop{b}{2}}\(\psi^6\) \tens \TM{\cop{b}{1}}\(\psi^7\) \nonumber \\
    &= \eps(a)\<h, \cop{\psi^4}{1}S^{-1}\cop{\psi^4}{3}\>\<S\cop{b}{4}, \copp{\psi^4}{2}{1}\>\copp{\psi^4}{2}{2} \nonumber \\
    &\qquad \tens \<\cop{b}{3}, \cop{\psi^5}{2}\>\cop{\psi^5}{1} \tens \<\cop{b}{2}, \cop{\psi^6}{2}\>\cop{\psi^6}{1} \tens \<\cop{b}{1}, \cop{\psi^7}{2}\>\cop{\psi^7}{1} \nonumber \\
    &= \eps(a)\<h, \cop{\psi^4}{1}S^{-1}\cop{\psi^4}{4}\>\<S\cop{b}{4}, \cop{\psi^4}{2}\>\cop{\psi^4}{3} \nonumber \\
    &\qquad \tens \<\cop{b}{3}, \cop{\psi^5}{2}\>\cop{\psi^5}{1} \tens \<\cop{b}{2}, \cop{\psi^6}{2}\>\cop{\psi^6}{1} \tens \<\cop{b}{1}, \cop{\psi^7}{2}\>\cop{\psi^7}{1} \nonumber \\
\end{align}

Finally, we compute the commutation relation between the type-$A$ ribbon operators and the vertex operators at the start and end sites of type-$A$ ribbons, we will do so by considering the type-$A$ ribbon illustrated in \eqref{sigma_5:relation}, it is made up of a single left direct triangle.

\begin{equation}
    \begin{minipage}[c]{0.05\textwidth}
        $\ket{\Sigma_{5}}$
    \end{minipage}
    \begin{minipage}[c]{0.05\textwidth}
        $=$
    \end{minipage}
    \begin{minipage}[c]{0.5\textwidth}
        \begin{tikzpicture}[scale = 0.8]

    \begin{scope}[thick, decoration={markings, mark=at position 0.3 with {\arrow{latex}}}]
        \draw[postaction={decorate}] (5*\rcs,5*\rcs) -- (5*\rcs,1*\rcs);
        \draw[postaction={decorate}] (5*\rcs,5*\rcs) -- (1*\rcs,5*\rcs);
        \draw[postaction={decorate}] (5*\rcs,5*\rcs) -- (5*\rcs,9*\rcs);
        \draw[postaction={decorate}] (9*\rcs,5*\rcs) -- (5*\rcs,5*\rcs);
        \draw[postaction={decorate}] (9*\rcs,5*\rcs) -- (13*\rcs,5*\rcs);
        \draw[postaction={decorate}] (9*\rcs,5*\rcs) -- (9*\rcs,9*\rcs);
        \draw[postaction={decorate}] (9*\rcs,5*\rcs) -- (9*\rcs,1*\rcs);
    \end{scope}


    \draw[dotted, thick] (5*\rcs, 5*\rcs) -- (7*\rcs, 3*\rcs);
    \draw[dotted, thick] (7*\rcs, 3*\rcs) -- (9*\rcs, 5*\rcs);


    \harrow{\rcs}{(6*\rcs, 4*\rcs)}{(8*\rcs, 4*\rcs)}{0.07}{1}
    
    \draw[text=black, draw=white, fill=white] (4.5*\rcs, 3.0*\rcs) circle (5pt) node {\footnotesize $\psi^{1}$};
    \draw[text=black, draw=white, fill=white] (3.0*\rcs, 5.5*\rcs) circle (5pt) node {\footnotesize $\psi^{2}$};
    \draw[text=black, draw=white, fill=white] (4.5*\rcs, 7.0*\rcs) circle (5pt) node {\footnotesize $\psi^{3}$};
    \draw[text=black, draw=white, fill=white] (7.0*\rcs, 5.5*\rcs) circle (5pt) node {\footnotesize $\psi^{4}$};
    \draw[text=black, draw=white, fill=white] (8.5*\rcs, 7.0*\rcs) circle (5pt) node {\footnotesize $\psi^{5}$};
    \draw[text=black, draw=white, fill=white] (11.0*\rcs, 5.5*\rcs) circle (5pt) node {\footnotesize $\psi^{6}$};
    \draw[text=black, draw=white, fill=white] (9.5*\rcs, 3.0*\rcs) circle (5pt) node {\footnotesize $\psi^{7}$};

    \draw[text=black, draw=white, fill=white] (6.0*\rcs, 4.0*\rcs) circle (5pt) node {\footnotesize $s_{0}$};
    \draw[text=black, draw=white, fill=white] (8.0*\rcs, 4.0*\rcs) circle (5pt) node {\footnotesize $s_{1}$};
    \draw[text=black, draw=white, fill=white] (7.0*\rcs, 4.5*\rcs) circle (5pt) node {\footnotesize $\tau$};

        



\end{tikzpicture}
    \end{minipage}
    \label{sigma_5:relation}
\end{equation}

At $s_{0}$, $F^{a\tens h}\(\rho^{A}\)A^{g}\(s_{0}\)$ gives
\begin{align}
    F^{a\tens h}\(\rho^{A}\)A^{g}\(s_{0}\)\ket{\Sigma_5}
    &= F^{a\tens h}\(\rho^{A}\)\left[\PM{1}{\cop{g}{4}}\(\psi^1\) \tens \PM{\cop{g}{3}S\cop{g}{5}}{\cop{g}{6}}\(\psi^2\) \right. \nonumber \\
    &\qquad \left. \tens \PM{\cop{g}{2}S\cop{g}{7}}{\cop{g}{8}}\(\psi^3\) \tens \PP{\cop{g}{1}S\cop{g}{9}}{\cop{g}{10}}\(\psi^4\)\right] \nonumber \\
    &= F^{a\tens h}\(\rho^{A}\)\left[\<\copp{g}{4}{1}, \cop{\psi^1}{3}\>\<\copp{g}{4}{2}, \cop{\psi^1}{1}\>\cop{\psi^1}{2} \right. \nonumber \\
    &\qquad \left. \tens \<\(\cop{g}{3}S\cop{g}{5}\)\copp{g}{6}{1}, \cop{\psi^2}{3}\>\<\copp{g}{6}{2}, \cop{\psi^2}{1}\>\cop{\psi^2}{2} \right. \nonumber \\
    &\qquad \left. \tens \<\(\cop{g}{2}S\cop{g}{7}\)\copp{g}{8}{1}, \cop{\psi^3}{3}\>\<\copp{g}{8}{2}, \cop{\psi^3}{1}\>\cop{\psi^3}{2} \right. \nonumber \\
    &\qquad \left. \tens \<S\copp{g}{10}{1}S\(\cop{g}{1}S\cop{g}{9}\), \cop{\psi^4}{1}\>\<\copp{g}{10}{2}, \cop{\psi^4}{3}\>\cop{\psi^4}{2} \right] \nonumber \\
    &= \<\cop{g}{4}, \cop{\psi^1}{3}\>\<\cop{g}{5}, \cop{\psi^1}{1}\>\cop{\psi^1}{2} \tens \<\cop{g}{3}, \cop{\psi^2}{3}\>\<\cop{g}{6}, \cop{\psi^2}{1}\>\cop{\psi^2}{2} \nonumber \\
    &\qquad \tens \<\cop{g}{2}, \cop{\psi^3}{3}\>\<\cop{g}{7}, \cop{\psi^3}{1}\>\cop{\psi^3}{2}  \tens \<S\cop{g}{1}, \cop{\psi^4}{1}\>\<\cop{g}{8}, \cop{\psi^4}{3}\>\rop{a}{h}\(\cop{\psi^4}{2}\) \nonumber \\
    &= \<\cop{g}{4}, \cop{\psi^1}{3}\>\<\cop{g}{5}, \cop{\psi^1}{1}\>\cop{\psi^1}{2} \tens \<\cop{g}{3}, \cop{\psi^2}{3}\>\<\cop{g}{6}, \cop{\psi^2}{1}\>\cop{\psi^2}{2} \nonumber \\
    &\qquad \tens \<\cop{g}{2}, \cop{\psi^3}{3}\>\<\cop{g}{7}, \cop{\psi^3}{1}\>\cop{\psi^3}{2}  \tens \<S\cop{g}{1}, \cop{\psi^4}{1}\>\<\cop{g}{8}, \cop{\psi^4}{3}\>\eps(h)\<\Si a, \copp{\psi^4}{2}{2}\>\copp{\psi^4}{2}{1} \nonumber \\
    &= \<\cop{g}{4}, \cop{\psi^1}{3}\>\<\cop{g}{5}, \cop{\psi^1}{1}\>\cop{\psi^1}{2} \tens \<\cop{g}{3}, \cop{\psi^2}{3}\>\<\cop{g}{6}, \cop{\psi^2}{1}\>\cop{\psi^2}{2} \nonumber \\
    &\qquad \tens \<\cop{g}{2}, \cop{\psi^3}{3}\>\<\cop{g}{7}, \cop{\psi^3}{1}\>\cop{\psi^3}{2}  \tens \eps(h)\<S\cop{g}{1}, \cop{\psi^4}{1}\>\<\cop{g}{8}, \cop{\psi^4}{4}\>\<\Si a, \cop{\psi^4}{3}\>\cop{\psi^4}{2}
\end{align}
and $A^{g}\(s_{0}\)F^{a\tens h}\(\rho^{A}\)$ gives
\begin{align}
    A^{g}\(s_{0}\)F^{a\tens h}\(\rho^{A}\)\ket{\Sigma_5}
    &= A^{g}\(s_{0}\)\left[ \psi^1 \tens \psi^2 \tens \psi^3 \tens \rop{a}{b}\(\psi^4\) \right] \nonumber \\
    &= A^{g}\(s_{0}\)\left[ \psi^1 \tens \psi^2 \tens \psi^3 \tens \eps(h)\<\Si a, \cop{\psi^4}{2}\>\cop{\psi^4}{1} \right] \nonumber \\
    &= \PM{1}{\cop{g}{4}}\(\psi^1\) \tens \PM{\cop{g}{3}S\cop{g}{5}}{\cop{g}{6}}\(\psi^2\) \nonumber \\
    &\qquad \tens \PM{\cop{g}{2}S\cop{g}{7}}{\cop{g}{8}}\(\psi^3\) \tens \eps(h)\<\Si a, \cop{\psi^4}{2}\>\PP{\cop{g}{1}S\cop{g}{9}}{\cop{g}{10}}\(\cop{\psi^4}{1}\) \nonumber \\
    &= \<\copp{g}{4}{1}, \cop{\psi^1}{3}\>\<\copp{g}{4}{2}, \cop{\psi^1}{1}\>\cop{\psi^1}{2} \tens \<\(\cop{g}{3}S\cop{g}{5}\)\copp{g}{6}{1}, \cop{\psi^2}{3}\>\<\copp{g}{6}{2}, \cop{\psi^2}{1}\>\cop{\psi^2}{2} \nonumber \\
    &\qquad \tens \<\(\cop{g}{2}S\cop{g}{7}\)\copp{g}{8}{1}, \cop{\psi^3}{3}\>\<\copp{g}{8}{2}, \cop{\psi^3}{1}\>\cop{\psi^3}{2} \nonumber \\
    &\qquad \tens \eps(h)\<\Si a, \cop{\psi^4}{2}\>\<S\copp{g}{10}{1}S\(\cop{g}{1}S\cop{g}{9}\), \copp{\psi^4}{1}{1}\>\<\copp{g}{10}{2}, \copp{\psi^4}{1}{3}\>\copp{\psi^4}{1}{2} \nonumber \\
    &= \<\cop{g}{4}, \cop{\psi^1}{3}\>\<\cop{g}{5}, \cop{\psi^1}{1}\>\cop{\psi^1}{2} \tens \<\cop{g}{3}, \cop{\psi^2}{3}\>\<\cop{g}{6}, \cop{\psi^2}{1}\>\cop{\psi^2}{2} \nonumber \\
    &\qquad \tens \<\cop{g}{2}, \cop{\psi^3}{3}\>\<\cop{g}{7}, \cop{\psi^3}{1}\>\cop{\psi^3}{2}  \tens \eps(h)\<S\cop{g}{1}, \cop{\psi^4}{1}\>\<\cop{g}{8}, \cop{\psi^4}{3}\>\<\Si a, \cop{\psi^4}{4}\>\cop{\psi^4}{2}
\end{align}

At $s_{1}$, $F^{a\tens h}\(\rho^{A}\)A^{g}\(s_{1}\)$ gives
\begin{align}
    F^{a\tens h}\(\rho^{A}\)A^{g}\(s_{1}\)\ket{\Sigma_5}
    &= F^{a\tens h}\(\rho^{A}\)\left[\PM{1}{\cop{g}{4}}\(\psi^4\) \tens \PM{\cop{g}{3}S\cop{g}{5}}{\cop{g}{6}}\(\psi^5\) \right. \nonumber \\
    &\qquad \left. \tens \PM{\cop{g}{2}S\cop{g}{7}}{\cop{g}{8}}\(\psi^6\) \tens \PM{\cop{g}{1}S\cop{g}{9}}{\cop{g}{10}}\(\psi^7\)\right] \nonumber \\
    &= F^{a\tens h}\(\rho^{A}\)\left[\<\copp{g}{4}{1}, \cop{\psi^4}{3}\>\<\copp{g}{4}{2}, \cop{\psi^4}{1}\>\cop{\psi^4}{2} \right. \nonumber \\
    &\qquad \left. \tens \<\(\cop{g}{3}S\cop{g}{5}\)\copp{g}{6}{1}, \cop{\psi^5}{3}\>\<\copp{g}{6}{2}, \cop{\psi^5}{1}\>\cop{\psi^5}{2} \right. \nonumber \\
    &\qquad \left. \tens \<\(\cop{g}{2}S\cop{g}{7}\)\copp{g}{8}{1}, \cop{\psi^6}{3}\>\<\copp{g}{8}{2}, \cop{\psi^6}{1}\>\cop{\psi^6}{2} \right. \nonumber \\
    &\qquad \left. \tens \<\(\cop{g}{1}S\cop{g}{9}\)\copp{g}{10}{1}, \cop{\psi^7}{1}\>\<\copp{g}{10}{2}, \cop{\psi^7}{3}\>\cop{\psi^7}{2} \right] \nonumber \\
    &= \<\cop{g}{4}, \cop{\psi^4}{3}\>\<\cop{g}{5}, \cop{\psi^4}{1}\>\rop{a}{h}\(\cop{\psi^4}{2}\) \tens \<\cop{g}{3}, \cop{\psi^5}{3}\>\<\cop{g}{6}, \cop{\psi^5}{1}\>\cop{\psi^5}{2} \nonumber \\
    &\qquad \tens \<\cop{g}{2}, \cop{\psi^6}{3}\>\<\cop{g}{7}, \cop{\psi^6}{1}\>\cop{\psi^6}{2}  \tens \<\cop{g}{1}, \cop{\psi^7}{1}\>\<\cop{g}{8}, \cop{\psi^7}{3}\>\cop{\psi^7}{2} \nonumber \\
    &= \<\cop{g}{4}, \cop{\psi^4}{3}\>\<\cop{g}{5}, \cop{\psi^4}{1}\>\eps(h)\<\Si a, \copp{\psi^4}{2}{2}\>\copp{\psi^4}{2}{1} \tens \<\cop{g}{3}, \cop{\psi^5}{3}\>\<\cop{g}{6}, \cop{\psi^5}{1}\>\cop{\psi^5}{2} \nonumber \\
    &\qquad \tens \<\cop{g}{2}, \cop{\psi^6}{3}\>\<\cop{g}{7}, \cop{\psi^6}{1}\>\cop{\psi^6}{2}  \tens \<\cop{g}{1}, \cop{\psi^7}{1}\>\<\cop{g}{8}, \cop{\psi^7}{3}\>\cop{\psi^7}{2} \nonumber \\
    &= \eps(h)\<\cop{g}{4}, \cop{\psi^4}{4}\>\<\cop{g}{5}, \cop{\psi^4}{1}\>\<\Si a, \cop{\psi^4}{3}\>\cop{\psi^4}{2} \tens \<\cop{g}{3}, \cop{\psi^5}{3}\>\<\cop{g}{6}, \cop{\psi^5}{1}\>\cop{\psi^5}{2} \nonumber \\
    &\qquad \tens \<\cop{g}{2}, \cop{\psi^6}{3}\>\<\cop{g}{7}, \cop{\psi^6}{1}\>\cop{\psi^6}{2}  \tens \<\cop{g}{1}, \cop{\psi^7}{1}\>\<\cop{g}{8}, \cop{\psi^7}{3}\>\cop{\psi^7}{2}
\end{align}
and $A^{g}\(s_{1}\)F^{a\tens h}\(\rho^{A}\)$ gives
\begin{align}
    A^{g}\(s_{1}\)F^{a\tens h}\(\rho^{A}\)\ket{\Sigma_5}
    &= A^{g}\(s_{1}\)\left[\rop{a}{b}\(\psi^4\) \tens \psi^5 \tens \psi^6 \tens \psi^7 \right] \nonumber \\
    &= A^{g}\(s_{1}\)\left[\eps(h)\<\Si a, \cop{\psi^4}{2}\>\cop{\psi^4}{1} \tens \psi^5 \tens \psi^6 \tens \psi^7 \right] \nonumber \\
    &= \eps(h)\<\Si a, \cop{\psi^4}{2}\>\PM{1}{\cop{g}{4}}\(\cop{\psi^4}{1}\) \tens \PM{\cop{g}{3}S\cop{g}{5}}{\cop{g}{6}}\(\psi^5\) \nonumber \\
    &\qquad \tens \PM{\cop{g}{2}S\cop{g}{7}}{\cop{g}{8}}\(\psi^6\) \tens \PM{\cop{g}{1}S\cop{g}{9}}{\cop{g}{10}}\(\psi^7\) \nonumber \\
    &= \eps(h)\<\Si a, \cop{\psi^4}{2}\>\<\copp{g}{4}{1}, \copp{\psi^4}{1}{3}\>\<\copp{g}{4}{2}, \copp{\psi^4}{1}{1}\>\copp{\psi^4}{1}{2} \nonumber \\
    &\qquad \tens \<\(\cop{g}{3}S\cop{g}{5}\)\copp{g}{6}{1}, \cop{\psi^5}{3}\>\<\copp{g}{6}{2}, \cop{\psi^5}{1}\>\cop{\psi^5}{2} \nonumber \\
    &\qquad \tens \<\(\cop{g}{2}S\cop{g}{7}\)\copp{g}{8}{1}, \cop{\psi^6}{3}\>\<\copp{g}{8}{2}, \cop{\psi^6}{1}\>\cop{\psi^6}{2} \nonumber \\
    &\qquad \tens \<\(\cop{g}{1}S\cop{g}{9}\)\copp{g}{10}{1}, \cop{\psi^7}{3}\>\<\copp{g}{10}{2}, \cop{\psi^7}{1}\>\cop{\psi^7}{2} \nonumber \\
    &= \eps(h)\<\Si a, \cop{\psi^4}{4}\>\<\cop{g}{4}, \cop{\psi^4}{3}\>\<\cop{g}{5}, \cop{\psi^4}{1}\>\cop{\psi^4}{2} \tens \<\cop{g}{3}, \cop{\psi^5}{3}\>\<\cop{g}{6}, \cop{\psi^5}{1}\>\cop{\psi^5}{2} \nonumber \\
    &\qquad \tens \<\cop{g}{2}, \cop{\psi^6}{3}\>\<\cop{g}{7}, \cop{\psi^6}{1}\>\cop{\psi^6}{2}  \tens \<\cop{g}{1}, \cop{\psi^7}{3}\>\<\cop{g}{8}, \cop{\psi^7}{1}\>\cop{\psi^7}{2}
\end{align}

\section{Proof that geometric operators commute with ribbon operators at sites within a ribbon}
\label{appendix:proof_cummutation_between_ribbon_and_geometric_operators}

Here we want to show that the geometric operators commute with the ribbon operators at sites within a ribbon. We will consider the two types of ribbons, that is type A and type B ribbons. The ribbons in \eqref{type_B_ribbon_with_face} and \eqref{type_B_ribbon_with_vertex} are of type B and the ribbons in \eqref{type_A_ribbon_with_face} and \eqref{type_A_ribbon_with_vertex} are of type A. For simplicity, we will consider ribbons of length three i.e $\rho^{\alpha} = \tau_{1} \cup \tau_{2} \cup \tau_{3}$, where $\alpha = A \text{ or } B$. For each case if $G(s)$ (i.e., $B^{b}(s)$ or $ A^{g}(s)$) is a geometric operator acting on a site in a ribbon, the aim is to compute $G(s)\circ \rop{a}{b}$ and compare it to $\rop{a}{b} \circ G(s)$.
\begin{equation}
    \begin{minipage}[c]{0.05\textwidth}
        $\ket{\Sigma}$
    \end{minipage}
    \begin{minipage}[c]{0.05\textwidth}
        $=$
    \end{minipage}
    \begin{minipage}[c]{0.5\textwidth}
        \begin{tikzpicture}[scale = 0.8]

    \begin{scope}[thick, decoration={markings, mark=at position 0.7 with {\arrow{latex}}}]
        \draw[postaction={decorate}] (5*\rcs, 4*\rcs) -- (9*\rcs, 4*\rcs);
        \draw[postaction={decorate}] (9*\rcs, 4*\rcs) -- (9*\rcs, 8*\rcs);
        \draw[postaction={decorate}] (9*\rcs, 8*\rcs) -- (5*\rcs, 8*\rcs);
        \draw[postaction={decorate}] (5*\rcs, 8*\rcs) -- (5*\rcs, 4*\rcs);
    \end{scope}

    \begin{scope}[thin, red, dashed, decoration={markings, mark=at position 0.3 with {\arrow{latex}}}]
        \draw[postaction={decorate}] (3*\rcs, 6*\rcs) -- (7*\rcs, 6*\rcs);
        \draw[postaction={decorate}] (11*\rcs, 6*\rcs) -- (7*\rcs, 6*\rcs);
    \end{scope}

    \draw[dotted, very thick] (5*\rcs, 4*\rcs) -- (3*\rcs, 6*\rcs);
    \draw[dotted, very thick] (5*\rcs, 4*\rcs) -- (7*\rcs, 6*\rcs);
    \draw[dotted, very thick] (7*\rcs, 6*\rcs) -- (9*\rcs, 4*\rcs);
    \draw[dotted, very thick] (9*\rcs, 4*\rcs) -- (11*\rcs, 6*\rcs);
    

    \harrow{\rcs}{(4*\rcs, 5*\rcs)}{(6*\rcs, 5*\rcs)}{0.06}{0}
    \harrow{\rcs}{(6*\rcs, 5*\rcs)}{(8*\rcs, 5*\rcs)}{0.06}{1}
    \harrow{\rcs}{(8*\rcs, 5*\rcs)}{(10*\rcs, 5*\rcs)}{0.06}{0}
    
    \draw[text=black, draw=white, fill=white] (7*\rcs, 3.5*\rcs) circle (5pt) node {\footnotesize $\psi^{1}$};
    \draw[text=black, draw=white, fill=white] (9.5*\rcs, 7*\rcs) circle (5pt) node {\footnotesize $\psi^{2}$};
    \draw[text=black, draw=white, fill=white] (7*\rcs, 8.5*\rcs) circle (5pt) node {\footnotesize $\psi^{3}$};
    \draw[text=black, draw=white, fill=white] (4.5*\rcs, 7*\rcs) circle (5pt) node {\footnotesize $\psi^{4}$};

    \draw[text=black, draw=white, fill=white] (6*\rcs, 5*\rcs) circle (5pt) node {\footnotesize $s$};
    \draw[text=black, draw=white, fill=white] (4.5*\rcs, 5.5*\rcs) circle (5pt) node {\footnotesize $\tau_{1}$};
    \draw[text=black, draw=white, fill=white] (7.0*\rcs, 4.5*\rcs) circle (5pt) node {\footnotesize $\tau_{2}$};
    \draw[text=black, draw=white, fill=white] (9.0*\rcs, 5.5*\rcs) circle (5pt) node {\footnotesize $\tau_{3}$};





\end{tikzpicture}
    \end{minipage}
    \label{type_B_ribbon_with_face}
\end{equation}

For the above equation \eqref{type_B_ribbon_with_face}, $\rho^{B} = \tau_{1} \cup \tau_{2} \cup \tau_{3}$ we compute 
\begin{align}
    B^{b}(s)F^{a\tens h}\(\rho^{B}\) &= B^{b}(s) \left[ \rop[2]{\cop{a}{2}\cop{h}{2}S\cop{h}{4}}{\cop{h}{5}}\(\psi^1\) \tens \rop[3]{\cop{a}{1}\cop{h}{1}S\cop{h}{6}}{\cop{h}{7}}\(\psi^2\) \tens \psi^3 \tens \rop[1]{\cop{a}{3}}{\cop{h}{3}}\(\psi^4\) \right] \nonumber\\
    & =  B^{b}(s) \left[\epsilon \(\cop{h}{5}\) T^{\cop{a}{2}\cop{h}{2}S\cop{h}{4}}_{-}\(\psi^1\) \tens \eps\(\cop{a}{1}\cop{h}{1}S\cop{h}{6}\) L^{\cop{h}{7}}_{+}(\psi^2)\tens \psi^3\tens\eps(\cop{a}{3}) L^{\cop{h}{3}}_{-}(\psi^4)\right] \nonumber\\
    &= B^{b}(s)\left[ \eps\(\cop{h}{5}\)\< \cop{a}{2}\cop{h}{2}S\cop{h}{4}, \cop{\psi^1}{2} \>\cop{\psi^1}{1} \tens \eps\(\cop{a}{1}\cop{h}{1}S\cop{h}{6}\)\< \cop{h}{7}, S\cop{\psi^2}{1}\cop{\psi^2}{3} \>\cop{\psi^2}{2} \right. \nonumber \\
    &\qquad \left. \tens \psi^3 \tens \eps\(\cop{a}{3}\)\<\cop{h}{3}, \cop{\psi^4}{3}S^{-1}\cop{\psi^4}{1}\> \cop{\psi^4}{2} \right] \nonumber\\
    &= \eps\(\cop{h}{5}\)\< \cop{a}{2}\cop{h}{2}S\cop{h}{4}, \cop{\psi^{1}}{2} \>\TM{\cop{b}{4}}\(\cop{\psi^1}{1}\) \tens  \eps\(\cop{a}{1}\cop{h}{1}S\cop{h}{6}\)\< \cop{h}{7}, S\cop{\psi^2}{1}\cop{\psi^2}{3} \>\TM{\cop{b}{3}}\(\cop{\psi^2}{2}\)  \nonumber \\
    &\qquad \tens \TM{\cop{b}{2}}\(\psi^3\) \tens \eps(\cop{a}{3}\< \cop{h}{3}, \cop{\psi^4}{3}S^{-1}\cop{\psi^4}{1}\>\TM{\cop{b}{1}}\(\cop{\psi^4}{2}\) \nonumber\\
    &= \eps\(\cop{h}{5}\)\< \cop{a}{2}\cop{h}{2}S\cop{h}{4}, \cop{\psi^{1}}{2} \> \<\cop{b}{4}, \copp{\psi^1}{1}{2}\>\copp{\psi^1}{1}{1} \nonumber\\
    &\qquad \tens \eps\(\cop{a}{1}\cop{h}{1}S\cop{h}{6}\)\<\cop{h}{7}, S\cop{\psi^2}{1}\cop{\psi^2}{3}\> \<\cop{b}{3}, \copp{\psi^2}{2}{2}\> \copp{\psi^2}{2}{1}   \nonumber \\
    &\qquad \tens \<\cop{b}{2}, \cop{\psi^3}{2}\>\cop{\psi^3}{1} \tens \eps\(\cop{a}{3}\)\<\cop{h}{3}, \cop{\psi^4}{3}S^{-1}\cop{\psi^4}{1}\> \<\cop{b}{1}, \copp{\psi^4}{2}{2}\>\copp{\psi^4}{2}{1} \nonumber\\
    &= \eps\(\cop{h}{5}\)\< \cop{a}{2}\cop{h}{2}S\cop{h}{4}, \cop{\psi^1}{3} \> \<\cop{b}{4}, \cop{\psi^1}{2}\>\cop{\psi^1}{1} \nonumber\\
    &\qquad \tens \eps\(\cop{a}{1}\cop{h}{1}S\cop{h}{6}\)\<\cop{h}{7}, S\cop{\psi^2}{1}\cop{\psi^2}{4}\> \<\cop{b}{3}, \cop{\psi^2}{3}\> \cop{\psi^2}{2}   \nonumber \\
    &\qquad \tens \<\cop{b}{2}, \cop{\psi^3}{2}\>\cop{\psi^3}{1} \tens \eps\(\cop{a}{3}\)\<\cop{h}{3}, \cop{\psi^4}{4}S^{-1}\cop{\psi^4}{1}\> \<\cop{b}{1}, \cop{\psi^4}{3}\>\cop{\psi^4}{2} \nonumber\\
    &= \<b, \cop{\psi^1}{2}\cop{\psi^2}{3}\cop{\psi^3}{2}\cop{\psi^4}{3}\> \eps\(\cop{h}{5}\)\< \cop{a}{2}\cop{h}{2}S\cop{h}{4}, \cop{\psi^1}{3} \> \cop{\psi^1}{1} \nonumber\\
    &\qquad \tens \eps\(\cop{a}{1}\cop{h}{1}S\cop{h}{6}\)\<\cop{h}{7}, S\cop{\psi^2}{1}\cop{\psi^2}{4}\> \cop{\psi^2}{2} 
    \tens \cop{\psi^3}{1} \tens \eps\(\cop{a}{3}\)\<\cop{h}{3}, \cop{\psi^4}{4}S^{-1}\cop{\psi^4}{1}\> \cop{\psi^4}{2} \nonumber \\
    &= \eps\(\cop{h}{5}\)\< \cop{a}{2}\cop{h}{2}S\cop{h}{4}, \cop{\psi^1}{2} \> \cop{\psi^1}{1}
    \tens \eps\(\cop{a}{1}\cop{h}{1}S\cop{h}{6}\)\<\cop{h}{7}, S\cop{\psi^2}{1}\cop{\psi^2}{3}\> \cop{\psi^2}{2} \nonumber\\
    &\qquad \tens \psi^3 \tens \eps\(\cop{a}{3}\)\<\cop{h}{3}, \cop{\psi^4}{3}S^{-1}\cop{\psi^4}{1}\> \cop{\psi^4}{2}
\end{align}
In the first equation, we employ the gluing expression \eqref{gluing-operators} to the three ribbons. In the second equation, we spelt out the individual ribbon operators expression for each ribbon (i.e., $\tau_1$, $\tau_2$, $\tau_3$). For the fourth equation, the $B(s)$ operator is applied to the square lattice, with edges labeled $\psi\o^{1}$, $\psi\t^2$, $\psi^3$ and $\psi\t^4$. In the seventh equation we applied the fact that if $b$ is the Haar element of $H^{cop}$, then $\<b, \psi\> = \eps\(\psi\)$ and then simplified to get the last equaion.

For the same equation \eqref{type_B_ribbon_with_face}, we compute the following
\begin{align}
    F^{a\tens h}\(\rho^{B}\)B^{b}(s)\ket{\Sigma} &= \rop{a}{h}\left[ \TM{\cop{b}{4}}\(\psi^1\) \tens \TM{\cop{b}{3}}\(\psi^2\) \tens \TM{\cop{b}{2}}\(\psi^3\)\ \tens \TM{\cop{b}{1}}\(\psi^4\) \right] \nonumber\\
    &= \rop{a}{h}\left[ \<\cop{b}{4}, \cop{\psi^1}{2}\>\cop{\psi^1}{1} \tens \<\cop{b}{3}, \cop{\psi^2}{2}\>\cop{\psi^2}{1} \tens \<\cop{b}{2}, \cop{\psi^3}{2}\>\cop{\psi^3}{1} \tens \<\cop{b}{1}, \cop{\psi^4}{2}\>\cop{\psi^4}{1} \right] \nonumber\\
    &= \<\cop{b}{4}, \cop{\psi^1}{2}\>\rop[2]{\cop{a}{2}\cop{h}{2}S\cop{h}{4}}{\cop{h}{5}}\(\cop{\psi^1}{1}\) \tens \<\cop{b}{3}, \cop{\psi^2}{2}\> \rop[3]{\cop{a}{1}\cop{h}{1}S\cop{h}{6}}{\cop{h}{7}}\(\cop{\psi^2}{1}\) \nonumber \\
    &\qquad \tens \<\cop{b}{2}, \cop{\psi^3}{2}\>\cop{\psi^3}{1} \tens \<\cop{b}{1}, \cop{\psi^4}{2}\>\rop[1]{\cop{a}{3}}{\cop{h}{3}}\(\cop{\psi^4}{1}\) \nonumber\\
    &= \<\cop{b}{4}, \cop{\psi^1}{2}\> \eps\(\cop{h}{5}\) T^{\cop{a}{2}\cop{h}{2}S\cop{h}{4}}_{-}\(\cop{\psi^1}{1}\) \tens \<\cop{b}{3}, \cop{\psi^2}{2}\> \eps\(\cop{a}{1}\cop{h}{1}S\cop{h}{6}\) L^{\cop{h}{7}}_{+}\(\cop{\psi^2}{1}\)\nonumber \\
    &\qquad \tens \<\cop{b}{2}, \cop{\psi^3}{2}\>\cop{\psi^3}{1} \tens \<\cop{b}{1}, \cop{\psi^4}{2}\> \eps\(\cop{a}{3}\)\ L^{\cop{h}{3}}_{-}\(\cop{\psi^4}{1}\) \nonumber \\
    &= \eps\(\cop{h}{5}\)\<\cop{a}{2}\cop{h}{2}S\cop{h}{4}, \copp{\psi^1}{1}{2}\>\<\cop{b}{4}, \cop{\psi^1}{2}\>\copp{\psi^1}{1}{1} \nonumber \\
    &\qquad \tens \eps\(\cop{a}{1}\cop{h}{1}S\cop{h}{6}\)\<\cop{h}{7},  S\copp{\psi^2}{1}{1}\copp{\psi^2}{1}{3}\>\<\cop{b}{3}, \cop{\psi^2}{2}\>\copp{\psi^2}{1}{2} \nonumber \\
    &\qquad \tens \<\cop{b}{2}, \cop{\psi^3}{2}\>\cop{\psi^3}{1} \tens \eps\(\cop{a}{3}\)\<\cop{h}{3}, \copp{\psi^4}{1}{3}S^{-1}\copp{\psi^4}{1}{1} \> \<\cop{b}{1}, \cop{\psi^4}{2}\>\copp{\psi^4}{1}{2}\nonumber \\
    &= \eps\(\cop{h}{5}\)\<\cop{a}{2}\cop{h}{2}S\cop{h}{4}, \cop{\psi^1}{2}\>\<\cop{b}{4}, \cop{\psi^1}{3}\>\cop{\psi^1}{1} \nonumber \\
    &\qquad \tens \eps\(\cop{a}{1}\cop{h}{1}S\cop{h}{6}\)\<\cop{h}{7},  S\cop{\psi^2}{1}\cop{\psi^2}{3}\>\<\cop{b}{3}, \cop{\psi^2}{4}\>\cop{\psi^2}{2} \nonumber \\
    &\qquad \tens \<\cop{b}{2}, \cop{\psi^3}{2}\>\cop{\psi^3}{1} \tens \eps\(\cop{a}{3}\)\<\cop{h}{3}, \cop{\psi^4}{3}S^{-1}\cop{\psi^4}{1} \> \<\cop{b}{1}, \cop{\psi^4}{4}\>\cop{\psi^4}{2} \nonumber \\
    &= \<b, \cop{\psi^1}{3}\cop{\psi^2}{4}\cop{\psi^3}{2}\cop{\psi^4}{4} \> \eps\(\cop{h}{5}\)\<\cop{a}{2}\cop{h}{2}S\cop{h}{4}, \cop{\psi^1}{2}\>\cop{\psi^1}{1} \nonumber \\
    &\qquad \tens \eps\(\cop{a}{1}\cop{h}{1}S\cop{h}{6}\)\<\cop{h}{7},  S\cop{\psi^2}{1}\cop{\psi^2}{3}\>\cop{\psi^2}{2} 
    \tens \cop{\psi^3}{1} \tens \eps\(\cop{a}{3}\)\<\cop{h}{3}, \cop{\psi^4}{3}S^{-1}\cop{\psi^4}{1} \> \cop{\psi^4}{2} \nonumber \\
    &= \eps\(\cop{h}{5}\)\<\cop{a}{2}\cop{h}{2}S\cop{h}{4}, \cop{\psi^1}{2}\>\cop{\psi^1}{1} 
    \tens \eps\(\cop{a}{1}\cop{h}{1}S\cop{h}{6}\)\<\cop{h}{7},  S\cop{\psi^2}{1}\cop{\psi^2}{3}\>\cop{\psi^2}{2} \nonumber \\
    &\qquad \tens \psi^3 \tens \eps\(\cop{a}{3}\)\<\cop{h}{3}, \cop{\psi^4}{3}S^{-1}\cop{\psi^4}{1} \> \cop{\psi^4}{2}
\end{align}
For the first equation, the $B(s)$ operator is applied to the square lattice with edges labeled $\psi^{1}$, $\psi^2$, $\psi^3$ and $\psi^4$. We utilize the gluing expression \eqref{gluing-operators} in the third equation. In the fourth equation, we spelt out the individual ribbon operators expression for each ribbon (i.e., $\tau_1$, $\tau_2$, $\tau_3$). 
In the final step of the calculation, we again applied the fact that if $b$ is the Haar element of $H^{cop}$, then $\<b, \psi\> = \eps\(\psi\)$ and then simplified to get the last equaion.

\begin{equation}
    \begin{minipage}[c]{0.05\textwidth}
        $\ket{\Sigma}$
    \end{minipage}
    \begin{minipage}[c]{0.05\textwidth}
        $=$
    \end{minipage}
    \begin{minipage}[c]{0.5\textwidth}
        \begin{tikzpicture}[scale = 0.8]

    \begin{scope}[thick, decoration={markings, mark=at position 0.4 with {\arrow{latex}}}]
        \draw[postaction={decorate}] (3*\rcs, 4*\rcs) -- (7*\rcs, 4*\rcs);
        \draw[postaction={decorate}] (7*\rcs, 0*\rcs) -- (7*\rcs, 4*\rcs);
        \draw[postaction={decorate}] (11*\rcs, 4*\rcs) -- (7*\rcs, 4*\rcs);
        \draw[postaction={decorate}] (7*\rcs, 8*\rcs) -- (7*\rcs, 4*\rcs);
    \end{scope}

    \begin{scope}[thin, red, dashed, decoration={markings, mark=at position 0.3 with {\arrow{latex}}}]
        \draw[postaction={decorate}] (5*\rcs, 6*\rcs) -- (9*\rcs, 6*\rcs);
    \end{scope}

    \draw[dotted, very thick] (3*\rcs, 4*\rcs) -- (5*\rcs, 6*\rcs);
    \draw[dotted, very thick] (7*\rcs, 4*\rcs) -- (5*\rcs, 6*\rcs);
    \draw[dotted, very thick] (7*\rcs, 4*\rcs) -- (9*\rcs, 6*\rcs);
    \draw[dotted, very thick] (9*\rcs, 6*\rcs) -- (11*\rcs, 4*\rcs);
    

    \harrow{\rcs}{(4*\rcs, 5*\rcs)}{(6*\rcs, 5*\rcs)}{0.06}{1}
    \harrow{\rcs}{(6*\rcs, 5*\rcs)}{(8*\rcs, 5*\rcs)}{0.06}{0}
    \harrow{\rcs}{(8*\rcs, 5*\rcs)}{(10*\rcs, 5*\rcs)}{0.06}{1}
    
    \draw[text=black, draw=white, fill=white] (5*\rcs, 3.5*\rcs) circle (5pt) node {\footnotesize $\psi^{1}$};
    \draw[text=black, draw=white, fill=white] (7.5*\rcs, 2*\rcs) circle (5pt) node {\footnotesize $\psi^{2}$};
    \draw[text=black, draw=white, fill=white] (9*\rcs, 3.5*\rcs) circle (5pt) node {\footnotesize $\psi^{3}$};
    \draw[text=black, draw=white, fill=white] (6.5*\rcs, 7*\rcs) circle (5pt) node {\footnotesize $\psi^{4}$};

    \draw[text=black, draw=white, fill=white] (6*\rcs, 5*\rcs) circle (5pt) node {\footnotesize $s$};
    \draw[text=black, draw=white, fill=white] (5*\rcs, 4.5*\rcs) circle (5pt) node {\footnotesize $\tau_{1}$};
    \draw[text=black, draw=white, fill=white] (7.0*\rcs, 5.5*\rcs) circle (5pt) node {\footnotesize $\tau_{2}$};
    \draw[text=black, draw=white, fill=white] (9.0*\rcs, 4.5*\rcs) circle (5pt) node {\footnotesize $\tau_{3}$};





\end{tikzpicture}
    \end{minipage}
    \label{type_B_ribbon_with_vertex}
\end{equation}
For the above equation \eqref{type_B_ribbon_with_vertex}, we compute 
\begin{align}
    A^{g}(s)F^{a\tens h}\(\rho^{B}\)\ket{\Sigma} &= A^{g}(s)\left[ \rop[1]{\cop{a}{3}}{\cop{h}{3}}\(\psi^1\) \tens \psi^2 \tens \rop[3]{\cop{a}{1}\cop{h}{1}S\cop{h}{6}}{\cop{h}{7}}\(\psi^3\) \tens \rop[2]{\cop{a}{2}\cop{h}{2}S\cop{h}{4}}{\cop{h}{5}}\(\psi^4\) \right] \nonumber \\
    &= A^{g}(s)\left[\eps\(\cop{h}{3}\)\ T^{\cop{a}{3}}_{-}\(\psi^1\) \tens \psi^2 \tens \eps\(\cop{h}{7}\) T^{\cop{a}{1}\cop{h}{1}S\cop{h}{6}}_{-} \(\psi^3\) \tens \eps\(\cop{a}{2}\cop{h}{2}S\cop{h}{4}\)\ L^{\cop{h}{5}}_{-}\(\psi^4\) \right] \nonumber \\
    &= A^{g}(s) \left[\eps\(\cop{h}{3}\)\<\cop{a}{3}, \cop{\psi^1}{2} \>\cop{\psi^1}{1} \tens \psi^2 \tens \eps\(\cop{h}{7}\)\<\cop{a}{1}\cop{h}{1}S\cop{h}{6}, \cop{\psi^3}{2}\>\cop{\psi^3}{1} \right. \nonumber \\
    &\qquad \tens \left. \eps\(\cop{a}{2}\cop{h}{2}S\cop{h}{4}\)\<\cop{h}{5}, \cop{\psi^4}{3}S^{-1}\cop{\psi^4}{1}\>\cop{\psi^4}{2} \right] \nonumber \\
    &= \eps\(\cop{h}{3}\)\<\cop{a}{3}, \cop{\psi^1}{2} \>\PP{1}{\cop{g}{4}}\(\cop{\psi^1}{1}\) \tens \PP{\cop{g}{3}S\cop{g}{5}}{\cop{g}{6}}\(\psi^2\) \nonumber \\
    &\qquad \tens \eps\(\cop{h}{7}\)\<\cop{a}{1}\cop{h}{1}S\cop{h}{6}, \cop{\psi^3}{2}\> \PP{\cop{g}{2}S\cop{g}{7}}{\cop{g}{8}}\(\cop{\psi^3}{1}\) \nonumber \\
    &\qquad \tens \eps\(\cop{a}{2}\cop{h}{2}S\cop{h}{4}\)\<\cop{h}{5}, \cop{\psi^4}{3}S^{-1}\cop{\psi^4}{1}\>\PP{\cop{g}{1}S\cop{g}{9}}{\cop{g}{10}} \(\cop{\psi^4}{2}\) \nonumber \\
    &= \eps\(\cop{h}{3}\)\<\cop{a}{3}, \cop{\psi^1}{2} \> \<S\copp{g}{4}{1}, \copp{\psi^1}{1}{1}\>\<\copp{g}{4}{2}, \copp{\psi^1}{1}{3}\>\copp{\psi^1}{1}{2} \nonumber\\
    &\qquad \tens \<S\copp{g}{6}{1}S\(\cop{g}{3}S\cop{g}{5}\), \cop{\psi^2}{1}\>\<\copp{g}{6}{2}, \cop{\psi^2}{3}\> \cop{\psi^2}{2} \nonumber \\
    &\qquad \tens \eps\(\cop{h}{7}\)\<\cop{a}{1}\cop{h}{1}S\cop{h}{6}, \cop{\psi^3}{2}\> \<S\copp{g}{8}{1}S\(\cop{g}{2}S\cop{g}{7}\), \copp{\psi^3}{1}{1}\> \<\copp{g}{8}{2}, \copp{\psi^3}{1}{3}\> \copp{\psi^3}{1}{2} \nonumber \\ 
    &\qquad \tens \eps\(\cop{a}{2}\cop{h}{2}S\cop{h}{4}\)\<\cop{h}{5}, \cop{\psi^4}{3}S^{-1}\cop{\psi^4}{1}\>\<S\copp{g}{10}{1}S\(\cop{g}{1}S\cop{g}{9}\), \copp{\psi^4}{2}{1} \>\<\copp{g}{10}{2}, \copp{\psi^4}{2}{3}\> \copp{\psi^4}{2}{2} \nonumber \\
    &= \eps\(\cop{h}{3}\)\<\cop{a}{3}, \cop{\psi^1}{4} \> \<S\cop{g}{4}, \cop{\psi^1}{1}\>\<\cop{g}{5}, \cop{\psi^1}{3}\>\cop{\psi^1}{2} \nonumber \\
    &\qquad \tens \<S\cop{g}{7}S\(\cop{g}{3}S\cop{g}{6}\), \cop{\psi^2}{1}\>\<\cop{g}{8}, \cop{\psi^2}{3}\> \cop{\psi^2}{2} \nonumber \\
    &\qquad \tens \eps\(\cop{h}{7}\)\<\cop{a}{1}\cop{h}{1}S\cop{h}{6}, \cop{\psi^3}{4}\> \<S\cop{g}{10}S\(\cop{g}{2}S\cop{g}{9}\), \cop{\psi^3}{1}\> \<\cop{g}{11}, \cop{\psi^3}{3}\> \cop{\psi^3}{2} \nonumber \\ 
    &\qquad \tens \eps\(\cop{a}{2}\cop{h}{2}S\cop{h}{4}\)\<\cop{h}{5}, \cop{\psi^4}{5}S^{-1}\cop{\psi^4}{1}\>\<S\cop{g}{13}S\(\cop{g}{1}S\cop{g}{12}\), \cop{\psi^4}{2} \>\<\cop{g}{14}, \cop{\psi^4}{4}\> \cop{\psi^4}{3} \nonumber \\
    & = \eps\(\cop{h}{3}\)\<\cop{a}{3}, \cop{\psi^1}{4}\> \<\cop{g}{4}, S\cop{\psi^1}{1}\cop{\psi^1}{3} \> \cop{\psi^1}{2} \tens \<S\cop{g}{3}, \cop{\psi^2}{1}\> \<\cop{g}{5}, \cop{\psi^2}{3}\> \cop{\psi^2}{2} \nonumber \\
    &\qquad \tens \eps\(\cop{h}{7}\) \<\cop{a}{1}\cop{h}{1}S\cop{h}{6}, \cop{\psi^3}{4}\> \<S\cop{g}{2}, \cop{\psi^3}{1}\> \<\cop{g}{6}, \cop{\psi^3}{3} \> \cop{\psi^3}{2} \nonumber \\
    &\qquad \tens \eps\(\cop{a}{2}\cop{h}{2}S\cop{h}{6}\)\<\cop{h}{5}, \cop{\psi^4}{5}S^{-1}\cop{\psi^4}{1}\> \<S\cop{g}{1}, \cop{\psi^4}{2}\> \<\cop{g}{7}, \cop{\psi^4}{4}\> \cop{\psi^4}{3} \nonumber \\
    & = \<g, S\cop{\psi^4}{2}S\cop{\psi^3}{1}S\cop{\psi^2}{1}S\cop{\psi^1}{1}\cop{\psi^1}{3}\cop{\psi^2}{3}\cop{\psi^3}{3}\cop{\psi^4}{4} \> \eps\(\cop{h}{3}\)\<\cop{a}{3}, \cop{\psi^1}{4}\>\cop{\psi^1}{2} \tens \cop{\psi^2}{2} \nonumber \\
    &\qquad \tens \eps\(\cop{h}{7}\) \<\cop{a}{1}\cop{h}{1}S\cop{h}{6}, \cop{\psi^3}{4}\> \cop{\psi^3}{2} \tens \eps\(\cop{a}{2}\cop{h}{2}S\cop{h}{6}\)\<\cop{h}{5}, \cop{\psi^4}{5}S^{-1}\cop{\psi^4}{1}\> \cop{\psi^4}{3} \nonumber \\
    & = \eps\(\cop{h}{3}\)\<\cop{a}{3}, \cop{\psi^1}{2}\>\cop{\psi^1}{1} \tens \psi^2  \tens \eps\(\cop{h}{7}\) \<\cop{a}{1}\cop{h}{1}S\cop{h}{6}, \cop{\psi^3}{2}\> \cop{\psi^3}{1} \nonumber \\
    &\qquad \tens \eps\(\cop{a}{2}\cop{h}{2}S\cop{h}{6}\)\<\cop{h}{5}, \cop{\psi^4}{3}S^{-1}\cop{\psi^4}{1}\> \cop{\psi^4}{2}
\end{align}
In the first equation, we employ the gluing expression \eqref{gluing-operators} to the three ribbons. In the second equation, we spelt out the individual ribbon operators expression for each ribbon (i.e., $\tau_1$, $\tau_2$, $\tau_3$). For the fourth equation, the $A(s)$ operator is applied to the square lattice, with edges labeled $\psi\o^{1}$, $\psi^2$, $\psi\o^3$ and $\psi\t^4$.

\begin{align}
    F^{a\tens h}\(\rho^{B}\)A^{g}(s)\ket{\Sigma} &= \rop{a}{h}\left[\PP{1}{\cop{g}{4}}\(\psi^1\) \tens \PP{\cop{g}{3}S\cop{g}{5}}{\cop{g}{6}}\(\psi^2\) \tens \PP{\cop{g}{2}S\cop{g}{7}}{\cop{g}{8}}\(\psi^3\) \tens \PP{\cop{g}{1}S\cop{g}{9}}{\cop{g}{10}}\(\psi^4\) \right] \nonumber \\
    &= \rop{a}{h}\left[\<S\copp{g}{4}{1}, \cop{\psi^1}{1}\>\<\copp{g}{4}{2}, \cop{\psi^1}{3}\>\cop{\psi^1}{2} \tens \<S\copp{g}{6}{1}S\(\cop{g}{3}S\cop{g}{5}\), \cop{\psi^2}{1}\>\<\copp{g}{6}{2}, \cop{\psi^2}{3}\>\cop{\psi^2}{2} \right. \nonumber \\
    &\qquad \tens \left. \<S\copp{g}{8}{1}S\(\cop{g}{2}S\cop{g}{7}\), \cop{\psi^3}{1}\>\<\copp{g}{8}{2}, \cop{\psi^3}{3}\>\cop{\psi^3}{2} \right. \nonumber \\
    &\qquad \tens \left. \<S\copp{g}{10}{1}S\(\cop{g}{1}S\cop{g}{9}\), \cop{\psi^4}{1}\>\<\copp{g}{10}{2}, \cop{\psi^4}{3}\>\cop{\psi^4}{2} \right] \nonumber \\
    &= \<S\cop{g}{4}, \cop{\psi^1}{1}\>\<\cop{g}{5}, \cop{\psi^1}{3}\>\rop[1]{\cop{a}{3}}{\cop{h}{3}}\(\cop{\psi^1}{2}\) \nonumber \\
    &\qquad \tens \<S\cop{g}{7}S\(\cop{g}{3}S\cop{g}{6}\), \cop{\psi^2}{1}\>\<\cop{g}{8}, \cop{\psi^2}{3}\>\cop{\psi^2}{2}\nonumber \\
    &\qquad \tens \<S\cop{g}{10}S\(\cop{g}{2}S\cop{g}{9}\), \cop{\psi^3}{1}\>\<\cop{g}{11}, \cop{\psi^3}{3}\>\rop[3]{\cop{a}{1}\cop{h}{1}S\cop{h}{6}}{\cop{h}{7}}\(\cop{\psi^3}{2}\) \nonumber \\
    &\qquad \tens \<S\cop{g}{13}S\(\cop{g}{1}S\cop{g}{12}\), \cop{\psi^4}{1}\>\<\cop{g}{14}, \cop{\psi^4}{3}\>\rop[3]{\cop{a}{2}\cop{h}{2}S\cop{h}{4}}{\cop{h}{5}}\(\cop{\psi^4}{2}\) \nonumber \\
     &= \<S\cop{g}{4}, \cop{\psi^1}{1}\>\<\cop{g}{5}, \cop{\psi^1}{3}\>\eps\(\cop{h}{3}\) T^{\cop{a}{3}}_{-}\(\cop{\psi^1}{2}\) \nonumber \\
    &\qquad \tens \<S\cop{g}{7}S\(\cop{g}{3}S\cop{g}{6}\), \cop{\psi^2}{1}\>\<\cop{g}{8}, \cop{\psi^2}{3}\>\cop{\psi^2}{2}\nonumber \\
    &\qquad \tens \<S\cop{g}{10}S\(\cop{g}{2}S\cop{g}{9}\), \cop{\psi^3}{1}\>\<\cop{g}{11}, \cop{\psi^3}{3}\>\eps\(\cop{h}{7}\)\ T^{\cop{a}{1}\cop{h}{1}S\cop{h}{6}}_{-}\(\cop{\psi^3}{2}\) \nonumber \\
    &\qquad \tens \<S\cop{g}{13}S\(\cop{g}{1}S\cop{g}{12}\), \cop{\psi^4}{1}\>\<\cop{g}{14}, \cop{\psi^4}{3}\> \eps\(\cop{a}{2}\cop{h}{2}S\cop{h}{4}\)\ L^{\cop{h}{5}}_{+}\(\cop{\psi^4}{2}\) \nonumber \\
    &= \eps\(\cop{h}{3}\)\<\cop{a}{3}, \copp{\psi^1}{2}{2}\>\<S\cop{g}{4}, \cop{\psi^1}{1}\>\<\cop{g}{5}, \cop{\psi^1}{3}\>\copp{\psi^1}{2}{1} \nonumber \\
    &\qquad \tens \<S\cop{g}{7}S\(\cop{g}{3}S\cop{g}{6}\), \cop{\psi^2}{1}\>\<\cop{g}{8}, \cop{\psi^2}{3}\>\cop{\psi^2}{2} \nonumber \\
    &\qquad \tens \eps\(\cop{h}{7}\)\<\cop{a}{1}\cop{h}{1}S\cop{h}{6}, \copp{\psi^3}{2}{2}\>\<S\cop{g}{10}S\(\cop{g}{2}S\cop{g}{9}\), \cop{\psi^3}{1}\>\<\cop{g}{11}, \cop{\psi^3}{3}\>\copp{\psi^3}{2}{1} \nonumber \\
    &\qquad \tens \eps\(\cop{a}{2}\cop{h}{2}S\cop{h}{4}\)\<\cop{h}{5}, S\copp{\psi^4}{2}{1}\copp{\psi^4}{2}{3}\>\<S\cop{g}{13}S\(\cop{g}{1}S\cop{g}{12}\), \cop{\psi^4}{1}\>\<\cop{g}{14}, \cop{\psi^4}{3}\>\copp{\psi^4}{2}{2} \nonumber \\
    &= \eps\(\cop{h}{3}\)\<\cop{a}{3}, \cop{\psi^1}{3}\>\<S\cop{g}{4}, \cop{\psi^1}{1}\>\<\cop{g}{5}, \cop{\psi^1}{4}\>\cop{\psi^1}{2} \nonumber \\
    &\qquad \tens \<S\cop{g}{7}S\(\cop{g}{3}S\cop{g}{6}\), \cop{\psi^2}{1}\>\<\cop{g}{8}, \cop{\psi^2}{3}\>\cop{\psi^2}{2} \nonumber \\
    &\qquad \tens \eps\(\cop{h}{7}\)\<\cop{a}{1}\cop{h}{1}S\cop{h}{6}, \cop{\psi^3}{3}\>\<S\cop{g}{10}S\(\cop{g}{2}S\cop{g}{9}\), \cop{\psi^3}{1}\>\<\cop{g}{11}, \cop{\psi^3}{4}\>\cop{\psi^3}{2} \nonumber \\
    &\qquad \tens \eps\(\cop{a}{2}\cop{h}{2}S\cop{h}{4}\)\<\cop{h}{5}, S\cop{\psi^4}{2}\cop{\psi^4}{4}\>\<S\cop{g}{13}S\(\cop{g}{1}S\cop{g}{12}\), \cop{\psi^4}{1}\>\<\cop{g}{14}, \cop{\psi^4}{5}\>\cop{\psi^4}{3} \nonumber\\
    &= \eps\(\cop{h}{3}\)\<\cop{a}{3}, \cop{\psi^1}{3}\> \<\cop{g}{4}, S\cop{\psi^1}{1}\cop{\psi^1}{4}\>\cop{\psi^1}{2} \tens \<S\cop{g}{3}, \cop{\psi^2}{1}\>\<\cop{g}{5}, \cop{\psi^2}{3}\> \cop{\psi^2}{2} \nonumber\\
    &\qquad \tens \eps\(\cop{h}{7}\)\<\cop{a}{1}\cop{h}{1}S\cop{h}{6}, \cop{\psi^3}{3}\>\<S\cop{g}{2}, \cop{\psi^3}{1}\>\<\cop{g}{6}, \cop{\psi^3}{4}\>\cop{\psi^3}{2} \nonumber \\
    &\qquad \tens \eps\(\cop{a}{2}\cop{h}{2}S\cop{h}{4}\)\< \cop{h}{5}, S\cop{\psi^4}{2}\cop{\psi^4}{4}\>\< S\cop{g}{1}, \cop{\psi^4}{1}\>\<\cop{g}{7}, \cop{\psi^4}{5}\>\cop{\psi^4}{3} \nonumber \\
    &= \<g, S\cop{\psi^4}{1}S\cop{\psi^3}{1}S\cop{\psi^2}{1}S\cop{\psi^1}{1}\cop{\psi^1}{4}\cop{\psi^2}{3}\cop{\psi^3}{4}\cop{\psi^4}{5}\> \eps\(\cop{h}{3}\)\<\cop{a}{3}, \cop{\psi^1}{3}\> \cop{\psi^1}{2}
    \tens \cop{\psi^2}{2} \nonumber \\
    & \qquad \tens \eps\(\cop{h}{7}\)\<\cop{a}{1}\cop{h}{1}S\cop{h}{6}, \cop{\psi^3}{3}\> \cop{\psi^3}{2} \tens  \eps\(\cop{a}{2}\cop{h}{2}S\cop{h}{4}\)\< \cop{h}{5}, S\cop{\psi^4}{2}\cop{\psi^4}{4}\> \cop{\psi^4}{3}\nonumber \\
    & = \eps\(\cop{h}{3}\)\<\cop{a}{3}, \cop{\psi^1}{2}\>\cop{\psi^1}{1} \tens \psi^2  \tens \eps\(\cop{h}{7}\) \<\cop{a}{1}\cop{h}{1}S\cop{h}{6}, \cop{\psi^3}{2}\> \cop{\psi^3}{1} \nonumber \\
    &\qquad \tens \eps\(\cop{a}{2}\cop{h}{2}S\cop{h}{6}\)\<\cop{h}{5}, \cop{\psi^4}{3}S^{-1}\cop{\psi^4}{1}\> \cop{\psi^4}{2}
\end{align}
For the first equation, the $A(s)$ operator is applied to the square lattice with edges labeled $\psi^{1}$, $\psi^2$, $\psi^3$ and $\psi^4$. We employ the gluing expression \eqref{gluing-operators} in the third equation. In the fourth equation, we spelt out the individual ribbon operators expression for each ribbon (i.e., $\tau_1$, $\tau_2$, $\tau_3$).

\begin{equation}
    \begin{minipage}[c]{0.05\textwidth}
        $\ket{\Sigma}$
    \end{minipage}
    \begin{minipage}[c]{0.05\textwidth}
        $=$
    \end{minipage}
    \begin{minipage}[c]{0.5\textwidth}
        \begin{tikzpicture}[scale = 0.8]

    \begin{scope}[thick, decoration={markings, mark=at position 0.7 with {\arrow{latex}}}]
        \draw[postaction={decorate}] (5*\rcs, 4*\rcs) -- (9*\rcs, 4*\rcs);
        \draw[postaction={decorate}] (9*\rcs, 4*\rcs) -- (9*\rcs, 8*\rcs);
        \draw[postaction={decorate}] (9*\rcs, 8*\rcs) -- (5*\rcs, 8*\rcs);
        \draw[postaction={decorate}] (5*\rcs, 8*\rcs) -- (5*\rcs, 4*\rcs);
    \end{scope}

    \begin{scope}[thin, red, dashed, decoration={markings, mark=at position 0.3 with {\arrow{latex}}}]
        \draw[postaction={decorate}] (3*\rcs, 6*\rcs) -- (7*\rcs, 6*\rcs);
        \draw[postaction={decorate}] (11*\rcs, 6*\rcs) -- (7*\rcs, 6*\rcs);
    \end{scope}

    \draw[dotted, very thick] (3*\rcs, 6*\rcs) -- (5*\rcs, 8*\rcs);
    \draw[dotted, very thick] (5*\rcs, 8*\rcs) -- (7*\rcs, 6*\rcs);
    \draw[dotted, very thick] (7*\rcs, 6*\rcs) -- (9*\rcs, 8*\rcs);
    \draw[dotted, very thick] (9*\rcs, 8*\rcs) -- (11*\rcs, 6*\rcs);
    

    \harrow{\rcs}{(4*\rcs, 7*\rcs)}{(6*\rcs, 7*\rcs)}{0.06}{1}
    \harrow{\rcs}{(6*\rcs, 7*\rcs)}{(8*\rcs, 7*\rcs)}{0.06}{0}
    \harrow{\rcs}{(8*\rcs, 7*\rcs)}{(10*\rcs, 7*\rcs)}{0.06}{1}

    \draw[text=black, draw=white, fill=white] (7*\rcs, 8.5*\rcs) circle (5pt) node {\footnotesize $\psi^{1}$};
    \draw[text=black, draw=white, fill=white] (4.5*\rcs, 5*\rcs) circle (5pt) node {\footnotesize $\psi^{2}$};
    \draw[text=black, draw=white, fill=white] (7*\rcs, 3.5*\rcs) circle (5pt) node {\footnotesize $\psi^{3}$};
    \draw[text=black, draw=white, fill=white] (9.5*\rcs, 5*\rcs) circle (5pt) node {\footnotesize $\psi^{4}$};

    \draw[text=black, draw=white, fill=white] (8*\rcs, 7*\rcs) circle (5pt) node {\footnotesize $s$};
    \draw[text=black, draw=white, fill=white] (5*\rcs, 6.5*\rcs) circle (5pt) node {\footnotesize $\tau_{1}$};
    \draw[text=black, draw=white, fill=white] (7.0*\rcs, 7.5*\rcs) circle (5pt) node {\footnotesize $\tau_{2}$};
    \draw[text=black, draw=white, fill=white] (9.0*\rcs, 6.5*\rcs) circle (5pt) node {\footnotesize $\tau_{3}$};





\end{tikzpicture}
    \end{minipage}
    \label{type_A_ribbon_with_face}
\end{equation}

\begin{align}
    B^{b}(s)F^{a\tens h}\(\rho^{A}\)\ket{\Sigma} &= B^{b}(s)\left[\rop[2]{\cop{a}{2}\cop{h}{2}S\cop{h}{4}}{\cop{h}{5}}\(\psi^1\) \tens \rop[1]{\cop{a}{3}}{\cop{h}{3}}\(\psi^2\) \tens \psi^3 \tens \rop[3]{\cop{a}{1}\cop{h}{1}S\cop{h}{6}}{\cop{h}{7}}\(\psi^4\)\right]\nonumber \\
    &= B^{b}(s)\left[ \eps\(\cop{h}{5}\)\ \TMt{\cop{a}{2}\cop{h}{2}S\cop{h}{4}}\(\psi^1\) \tens  \eps\(\cop{a}{3}\)\ \tilde{L}^{\cop{h}{3}}_{1} \(\psi^2\) \tens \psi^3 \tens \eps\(\cop{a}{1}\cop{h}{1}S\cop{h}{6}\)\tilde{L}^{\cop{h}{7}}_{+}\(\psi^4\)\right]\nonumber \\
    &= B^{b}(s)\left[ \eps\(\cop{h}{5}\)\<\Si\(\cop{a}{2}\cop{h}{2}S\cop{h}{4}\), \cop{\psi^1}{2}\>\cop{\psi^1}{1} \tens \eps\(\cop{a}{3}\)\<\cop{h}{3}, S\cop{\psi^2}{3}\cop{\psi^2}{1}\>\cop{\psi^2}{2} \right. \nonumber \\
    &\qquad \tens \left. \psi^3 \tens \eps\(\cop{a}{1}\cop{h}{1}S\cop{h}{6}\)\<\cop{h}{7}, \cop{\psi^4}{1}S^{-1}\cop{\psi^4}{3}\>\cop{\psi^4}{2} \right] \nonumber \\
    &= \eps\(\cop{h}{5}\)\<\Si\(\cop{a}{2}\cop{h}{2}S\cop{h}{4}\), \cop{\psi^1}{2}\>\TM{\cop{b}{4}}\(\cop{\psi^1}{1}\) \tens \eps\(\cop{a}{3}\)\<\cop{h}{3}, S\cop{\psi^2}{3}\cop{\psi^2}{1}\>\TM{\cop{b}{3}} \(\cop{\psi^2}{2}\) \nonumber \\
    &\qquad \tens \TM{\cop{b}{2}}\(\psi^3\) \tens \eps\(\cop{a}{1}\cop{h}{1}S\cop{h}{6}\)\<\cop{h}{7}, \cop{\psi^4}{1}S^{-1}\cop{\psi^4}{3}\>\TM{\cop{b}{1}}\(\cop{\psi^4}{2}\) \nonumber \\
    &= \eps\(\cop{h}{5}\)\<\Si\(\cop{a}{2}\cop{h}{2}S\cop{h}{4}\), \cop{\psi^1}{2}\>\<\cop{b}{4},  \copp{\psi^1}{2}{1}\> \copp{\psi^1}{1}{1} \nonumber\\
    &\qquad \tens \eps\(\cop{a}{3}\)\<\cop{h}{3}, S\cop{\psi^2}{3}\cop{\psi^2}{1}\>\<\cop{b}{3},  \copp{\psi^2}{2}{2}\> \copp{\psi^2}{2}{1} \nonumber \\
    &\qquad \tens \<\cop{b}{2},  \cop{\psi^3}{2}\> \cop{\psi^3}{1} \tens \eps\(\cop{a}{1}\cop{h}{1}S\cop{h}{6}\)\<\cop{h}{7}, \cop{\psi^4}{1}S^{-1}\cop{\psi^4}{3}\>\<\cop{b}{1},  \copp{\psi^4}{2}{2}\> \copp{\psi^4}{2}{1}\nonumber \\
    &= \eps\(\cop{h}{5}\)\<\Si\(\cop{a}{2}\cop{h}{2}S\cop{h}{4}\), \cop{\psi^1}{3}\>\<\cop{b}{4},  \cop{\psi^1}{2}\> \cop{\psi^1}{1} \nonumber \\
    &\qquad \tens \eps\(\cop{a}{3}\)\<\cop{h}{3}, S\cop{\psi^2}{4}\cop{\psi^2}{1}\>\<\cop{b}{3},  \cop{\psi^2}{3}\> \cop{\psi^2}{2} \nonumber \\
    &\qquad \tens \<\cop{b}{2},  \cop{\psi^3}{2}\> \cop{\psi^3}{1} \tens \eps\(\cop{a}{1}\cop{h}{1}S\cop{h}{6}\)\<\cop{h}{7}, \cop{\psi^4}{1}S^{-1}\cop{\psi^4}{4}\>\<\cop{b}{1},  \cop{\psi^4}{3}\> \cop{\psi^4}{2} \nonumber \\
    &= \<b, \cop{\psi^1}{2}\cop{\psi^2}{3}\cop{\psi^3}{2}\cop{\psi^4}{3}\> \eps\(\cop{h}{5}\)\<\Si\(\cop{a}{2}\cop{h}{2}S\cop{h}{4}\), \cop{\psi^1}{3}\> \cop{\psi^1}{1} \nonumber \\
    &\qquad \tens \eps\(\cop{a}{3}\)\<\cop{h}{3}, S\cop{\psi^2}{4}\cop{\psi^2}{1}\> \cop{\psi^2}{2} \tens  \cop{\psi^3}{1} \tens \eps\(\cop{a}{1}\cop{h}{1}S\cop{h}{6}\)\<\cop{h}{7}, \cop{\psi^4}{1}S^{-1}\cop{\psi^4}{4}\> \cop{\psi^4}{2} \nonumber \\
    &= \eps\(\cop{h}{5}\)\<\Si\(\cop{a}{2}\cop{h}{2}S\cop{h}{4}\), \cop{\psi^1}{2}\> \cop{\psi^1}{1} \tens \eps\(\cop{a}{3}\)\<\cop{h}{3}, S\cop{\psi^2}{3}\cop{\psi^2}{1}\> \cop{\psi^2}{2} \nonumber \\
    &\qquad  \tens \psi^3 \tens \eps\(\cop{a}{1}\cop{h}{1}S\cop{h}{6}\)\<\cop{h}{7}, \cop{\psi^4}{1}S^{-1}\cop{\psi^4}{3}\> \cop{\psi^4}{2}
\end{align}

\begin{align}
    F^{a\tens h}\(\rho^{A}\)B^{b}(s)\ket{\Sigma} &= \rop{a}{h}\left[\TM{\cop{b}{4}}\(\psi^1\) \tens \TM{\cop{b}{3}}\(\psi^2\) \tens \TM{\cop{b}{2}}\(\psi^3\) \tens \TM{\cop{b}{1}}\(\psi^4\)\right] \nonumber \\
    &= \rop{a}{h}\left[\<\cop{b}{4}, \cop{\psi^1}{2}\>\cop{\psi^1}{1} \tens \<\cop{b}{3}, \cop{\psi^2}{2}\>\cop{\psi^2}{1}\tens \<\cop{b}{2}, \cop{\psi^3}{2}\>\cop{\psi^3}{1}\tens \<\cop{b}{1}, \cop{\psi^4}{2}\>\cop{\psi^4}{1} \right] \nonumber \\
    &= \<\cop{b}{4}, \cop{\psi^1}{2}\>\rop[2]{\cop{a}{2}\cop{h}{2}S\cop{h}{4}}{\cop{h}{5}}\(\cop{\psi^1}{1}\) \tens \<\cop{b}{3}, \cop{\psi^2}{2}\>\rop[1]{\cop{a}{3}}{\cop{h}{3}}\(\cop{\psi^2}{1}\) \nonumber \\
    &\qquad \tens \<\cop{b}{2}, \cop{\psi^3}{2}\>\cop{\psi^3}{1} \tens \<\cop{b}{1}, \cop{\psi^4}{2}\>\rop[3]{\cop{a}{1}\cop{h}{1}S\cop{h}{6}}{\cop{h}{7}}\(\cop{\psi^4}{1}\)\nonumber \\
    &= \<\cop{b}{4}, \cop{\psi^1}{2}\> \eps\(\cop{h}{5}\)\ \tilde{T}^{\cop{a}{2}\cop{h}{2}S\cop{h}{4}}_{-}\(\cop{\psi^1}{1}\) \tens   \eps\(\cop{a}{3}\)\ \tilde{L}^{\cop{h}{3}}_{+} \(\cop{\psi^2}{1}\) \nonumber \\
    &\qquad \tens \<\cop{b}{2}, \cop{\psi^3}{2}\>\cop{\psi^3}{1} \tens \<\cop{b}{1}, \cop{\psi^4}{2}\> \eps\(\cop{a}{1}\cop{h}{1}S\cop{h}{6}\)\ \tilde{L}^{\cop{h}{7}}_{-} \(\cop{\psi^4}{1}\)\nonumber \\
    &= \eps\(\cop{h}{5}\)\<\Si\(\cop{a}{2}\cop{h}{2}S\cop{h}{4}\), \copp{\psi^1}{1}{2}\>\<\cop{b}{4}, \cop{\psi^1}{2}\>\copp{\psi^1}{1}{1} \nonumber \\
    &\qquad \tens \eps\(\cop{a}{3}\)\<\cop{h}{3},  S\copp{\psi^2}{1}{3}\copp{\psi^2}{1}{1}\>\<\cop{b}{3}, \cop{\psi^2}{2}\>\copp{\psi^2}{1}{2} \nonumber \\
    &\qquad \tens \<\cop{b}{2}, \cop{\psi^3}{2}\>\cop{\psi^3}{1} \tens \eps\(\cop{a}{1}\cop{h}{1}S\cop{h}{6}\)\<\cop{h}{7}, \copp{\psi^4}{1}{1}S^{-1}\copp{\psi^4}{1}{3} \>\<\cop{b}{1}, \cop{\psi^4}{2}\> \copp{\psi^4}{1}{2}\nonumber \\
    &= \eps\(\cop{h}{5}\)\<\Si\(\cop{a}{2}\cop{h}{2}S\cop{h}{4}\), \cop{\psi^1}{2}\>\<\cop{b}{4}, \cop{\psi^1}{3}\>\cop{\psi^1}{1} \nonumber \\
    &\qquad \tens \eps\(\cop{a}{3}\)\<\cop{h}{3},  S\cop{\psi^2}{3}\cop{\psi^2}{1}\>\<\cop{b}{3}, \cop{\psi^2}{4}\>\cop{\psi^2}{2} \nonumber \\
    &\qquad \tens \<\cop{b}{2}, \cop{\psi^3}{2}\>\cop{\psi^3}{1} \tens \eps\(\cop{a}{1}\cop{h}{1}S\cop{h}{6}\)\<\cop{h}{7}, \cop{\psi^4}{1}S^{-1}\cop{\psi^4}{3}\>\<\cop{b}{1}, \cop{\psi^4}{4}\> \cop{\psi^4}{2} \nonumber \\
    &= \<b, \cop{\psi^1}{3}\cop{\psi^2}{4}\cop{\psi^3}{2}\cop{\psi^4}{4}\> \eps\(\cop{h}{5}\)\<\Si\(\cop{a}{2}\cop{h}{2}S\cop{h}{4}\), \cop{\psi^1}{2}\>\cop{\psi^1}{1} \nonumber \\
    &\qquad \tens \eps\(\cop{a}{3}\)\<\cop{h}{3},  S\cop{\psi^2}{3}\cop{\psi^2}{1}\>\cop{\psi^2}{2} \tens \cop{\psi^3}{1} \tens \eps\(\cop{a}{1}\cop{h}{1}S\cop{h}{6}\)\<\cop{h}{7}, \cop{\psi^4}{1}S^{-1}\cop{\psi^4}{3}\> \cop{\psi^4}{2} \nonumber \\
    &= \eps\(\cop{h}{5}\)\<\Si\(\cop{a}{2}\cop{h}{2}S\cop{h}{4}\), \cop{\psi^1}{2}\> \cop{\psi^1}{1} \tens \eps\(\cop{a}{3}\)\<\cop{h}{3}, S\cop{\psi^2}{3}\cop{\psi^2}{1}\> \cop{\psi^2}{2} \nonumber \\
    &\qquad  \tens \psi^3 \tens \eps\(\cop{a}{1}\cop{h}{1}S\cop{h}{6}\)\<\cop{h}{7}, \cop{\psi^4}{1}S^{-1}\cop{\psi^4}{3}\> \cop{\psi^4}{2}
\end{align}

\begin{equation}
    \begin{minipage}[c]{0.05\textwidth}
        $\ket{\Sigma}$
    \end{minipage}
    \begin{minipage}[c]{0.05\textwidth}
        $=$
    \end{minipage}
    \begin{minipage}[c]{0.5\textwidth}
        \begin{tikzpicture}[scale = 0.8]

    \begin{scope}[thick, decoration={markings, mark=at position 0.4 with {\arrow{latex}}}]
        \draw[postaction={decorate}] (3*\rcs, 4*\rcs) -- (7*\rcs, 4*\rcs);
        \draw[postaction={decorate}] (7*\rcs, 0*\rcs) -- (7*\rcs, 4*\rcs);
        \draw[postaction={decorate}] (11*\rcs, 4*\rcs) -- (7*\rcs, 4*\rcs);
        \draw[postaction={decorate}] (7*\rcs, 8*\rcs) -- (7*\rcs, 4*\rcs);
    \end{scope}

    \begin{scope}[thin, red, dashed, decoration={markings, mark=at position 0.3 with {\arrow{latex}}}]
        \draw[postaction={decorate}] (9*\rcs, 2*\rcs) -- (5*\rcs, 2*\rcs);
    \end{scope}

    \draw[dotted, very thick] (3*\rcs, 4*\rcs) -- (5*\rcs, 2*\rcs);
    \draw[dotted, very thick] (7*\rcs, 4*\rcs) -- (5*\rcs, 2*\rcs);
    \draw[dotted, very thick] (7*\rcs, 4*\rcs) -- (9*\rcs, 2*\rcs);
    \draw[dotted, very thick] (11*\rcs, 4*\rcs) -- (9*\rcs, 2*\rcs);
    

    \harrow{\rcs}{(4*\rcs, 3*\rcs)}{(6*\rcs, 3*\rcs)}{0.06}{0}
    \harrow{\rcs}{(6*\rcs, 3*\rcs)}{(8*\rcs, 3*\rcs)}{0.06}{1}
    \harrow{\rcs}{(8*\rcs, 3*\rcs)}{(10*\rcs, 3*\rcs)}{0.06}{0}

    \draw[text=black, draw=white, fill=white] (9*\rcs, 4.5*\rcs) circle (5pt) node {\footnotesize $\psi^{1}$};
    \draw[text=black, draw=white, fill=white] (6.5*\rcs, 6*\rcs) circle (5pt) node {\footnotesize $\psi^{2}$};
    \draw[text=black, draw=white, fill=white] (5*\rcs, 4.5*\rcs) circle (5pt) node {\footnotesize $\psi^{3}$};
    \draw[text=black, draw=white, fill=white] (6.5*\rcs, 1.0*\rcs) circle (5pt) node {\footnotesize $\psi^{4}$};

    \draw[text=black, draw=white, fill=white] (8*\rcs, 3*\rcs) circle (5pt) node {\footnotesize $s$};
    \draw[text=black, draw=white, fill=white] (5*\rcs, 3.5*\rcs) circle (5pt) node {\footnotesize $\tau_{1}$};
    \draw[text=black, draw=white, fill=white] (7.0*\rcs, 2.5*\rcs) circle (5pt) node {\footnotesize $\tau_{2}$};
    \draw[text=black, draw=white, fill=white] (9.0*\rcs, 3.5*\rcs) circle (5pt) node {\footnotesize $\tau_{3}$};





\end{tikzpicture}
    \end{minipage}
    \label{type_A_ribbon_with_vertex}
\end{equation}

\begin{align}
    A^{g}(s)F^{a\tens h}\(\rho^{A}\)\ket{\Sigma} &= A^{g}(s)\left[\rop[3]{\cop{a}{1}\cop{h}{1}S\cop{h}{6}}{\cop{h}{7}}\(\psi^1\) \tens \psi^2 \tens \rop[1]{\cop{a}{3}}{\cop{h}{3}}\(\psi^3\) \tens \rop[2]{\cop{a}{2}\cop{h}{2}S\cop{h}{4}}{\cop{h}{5}}\(\psi^4\)\right] \nonumber \\
    &= A^{g}(s)\left[\eps\(\cop{h}{7}\)\ \tilde{T}^{\cop{a}{1}\cop{h}{1}S\cop{h}{6}}_{-}\(\psi^1\) \tens \psi^2 \tens \eps\(\cop{h}{3}\)\ \tilde{T}^{\cop{a}{3}}_{+}\(\psi^3\) \tens \eps\(\cop{a}{2}\cop{h}{2}S\cop{h}{4}\)\ \tilde{L}^{\cop{h}{5}}_{-} \(\psi^4\)\right] \nonumber \\
    &= A^{g}(s)\left[ \eps\(\cop{h}{7}\)\< S\(\cop{a}{1}\cop{h}{1}S\cop{h}{6}\), \cop{\psi^1}{2}\> \cop{\psi^1}{1} \tens \psi^2 \tens \eps\(\cop{h}{3}\)\<\cop{a}{3}, \cop{\psi^3}{1}\>\cop{\psi^3}{2}\right. \nonumber \\
    &\qquad \left. \tens \eps\(\cop{a}{2}\cop{h}{2}S\cop{h}{4}\)\<\cop{h}{5}, \cop{\psi^4}{1}S^{-1}\cop{\psi^4}{3}\>\cop{\psi^4}{2} \right] \nonumber \\
    &= \eps\(\cop{h}{7}\)\< \Si\(\cop{a}{1}\cop{h}{1}S\cop{h}{6}\), \cop{\psi^1}{2}\> \PP{1}{\cop{g}{4}}\(\cop{\psi^1}{1}\) \tens \PP{\cop{g}{3}S\cop{g}{5}}{\cop{g}{6}}\(\psi^2\) \nonumber \\
    &\qquad \tens \eps\(\cop{h}{3}\)\<\cop{a}{3}, \cop{\psi^3}{1}\>\PP{\cop{g}{2}S\cop{g}{7}}{\cop{g}{8}}\(\cop{\psi^3}{2}\) \nonumber \\
    &\qquad  \tens \eps\(\cop{a}{2}\cop{h}{2}S\cop{h}{4}\)\<\cop{h}{5}, \cop{\psi^4}{1}S^{-1}\cop{\psi^4}{3}\>\PP{\cop{g}{1}S\cop{g}{9}}{\cop{g}{10}}\(\cop{\psi^4}{2}\) \nonumber \\
    &= \eps\(\cop{h}{7}\)\< \Si\(\cop{a}{1}\cop{h}{1}S\cop{h}{6}\), \cop{\psi^1}{2}\> \<S\copp{g}{4}{1}, \copp{\psi^1}{1}{1}\>  \<\copp{g}{4}{2}, \copp{\psi^1}{1}{3}\> \copp{\psi^1}{1}{2} \nonumber \\
    &\qquad \tens \<S\copp{g}{6}{1}S\(\cop{g}{3}S\cop{g}{5}\), \cop{\psi^2}{1} \> \<\copp{g}{6}{2}, \cop{\psi^2}{3}\> \cop{\psi^2}{2} \nonumber \\
    &\qquad \tens \eps\(\cop{h}{3}\)\<\cop{a}{3}, \cop{\psi^3}{1}\> \< S\copp{g}{8}{1}S\(\cop{g}{2}S\cop{g}{7}\), \copp{\psi^3}{2}{1} \> \<\copp{g}{8}{2}, \copp{\psi^3}{2}{3}\> \copp{\psi^3}{2}{2} \nonumber \\
    &\qquad \tens \eps\(\cop{a}{2}\cop{h}{2}S\cop{h}{4}\)\<\cop{h}{5}, \cop{\psi^4}{1}S^{-1}\cop{\psi^4}{3}\> \<S\copp{g}{10}{1}S\(\cop{g}{1}S\cop{g}{9}\), \copp{\psi^4}{2}{1}\> \<\copp{g}{10}{2}, \copp{\psi^4}{2}{3} \> \copp{\psi^4}{2}{2} \nonumber \\
    &= \eps\(\cop{h}{7}\)\< \Si\(\cop{a}{1}\cop{h}{1}S\cop{h}{6}\), \cop{\psi^1}{4}\> \<S\cop{g}{4}, \cop{\psi^1}{1}\>  \<\cop{g}{5}, \cop{\psi^1}{3}\> \cop{\psi^1}{2} \nonumber \\
    &\qquad \tens \<S\cop{g}{7}S\(\cop{g}{3}S\cop{g}{6}\), \cop{\psi^2}{1} \> \<\cop{g}{8}, \cop{\psi^2}{3}\> \cop{\psi^2}{2} \nonumber \\
    &\qquad \tens \eps\(\cop{h}{3}\)\<\cop{a}{3}, \cop{\psi^3}{1}\> \< S\cop{g}{10}S\(\cop{g}{2}S\cop{g}{9}\), \cop{\psi^3}{2} \> \<\cop{g}{11}, \cop{\psi^3}{4}\> \cop{\psi^3}{3} \nonumber \\
    &\qquad \tens \eps\(\cop{a}{2}\cop{h}{2}S\cop{h}{4}\)\<\cop{h}{5}, \cop{\psi^4}{1}S^{-1}\cop{\psi^4}{5}\> \<S\cop{g}{13}S\(\cop{g}{1}S\cop{g}{12}\), \cop{\psi^4}{2}\> \<\cop{g}{14}, \cop{\psi^4}{4} \> \cop{\psi^4}{3} \nonumber \\
    &=\eps\(\cop{h}{7}\)\<\Si\(\cop{a}{1}\cop{h}{1}S\cop{h}{6}\), \cop{\psi^1}{4}\>\<\cop{g}{4}, S\cop{\psi^1}{1}\cop{\psi^1}{3}\>\cop{\psi^1}{2} \tens \<S\cop{g}{3}, \cop{\psi^2}{1}\>\<\cop{g}{5}, \cop{\psi^2}{3} \>\cop{\psi^2}{2} \nonumber \\
    &\qquad \tens \eps\(\cop{h}{3}\)\<\cop{a}{3}, \cop{\psi^3}{1}\>\<S\cop{g}{2}, \cop{\psi^3}{2}\>\<\cop{g}{6}, \cop{\psi^3}{4}\>\cop{\psi^3}{3} \nonumber \\
    &\qquad \tens \eps\(\cop{a}{2}\cop{h}{2}S\cop{h}{4}\)\<\cop{h}{5}, \cop{\psi^4}{1}S^{-1}\cop{\psi^4}{5}\>\<S\cop{g}{1}, \cop{\psi^4}{2}\>\<\cop{g}{7}, \cop{\psi^4}{4}\>\cop{\psi^4}{3} \nonumber \\
    &=\<g, S\cop{\psi^4}{2}S\cop{\psi^3}{2}S\cop{\psi^2}{1}S\cop{\psi^1}{1}\cop{\psi^1}{3}\cop{\psi^2}{3}\cop{\psi^3}{4}\cop{\psi^4}{4}\>\eps\(\cop{h}{7}\)\<\Si\(\cop{a}{1}\cop{h}{1}S\cop{h}{6}\), \cop{\psi^1}{4}\>\cop{\psi^1}{2} \tens \cop{\psi^2}{2} \nonumber \\
    &\qquad \tens \eps\(\cop{h}{3}\)\<\cop{a}{3}, \cop{\psi^3}{1}\>\cop{\psi^3}{3} \tens \eps\(\cop{a}{2}\cop{h}{2}S\cop{h}{4}\)\<\cop{h}{5}, \cop{\psi^4}{1}S^{-1}\cop{\psi^4}{5}\>\cop{\psi^4}{3} \nonumber \\
    &=\eps\(\cop{h}{7}\)\<\Si\(\cop{a}{1}\cop{h}{1}S\cop{h}{6}\), \cop{\psi^1}{2}\>\cop{\psi^1}{1} \tens \psi^2 \tens \eps\(\cop{h}{3}\)\<\cop{a}{3}, \cop{\psi^3}{1}\>\cop{\psi^3}{2} \nonumber \\
    &\qquad \tens \eps\(\cop{a}{2}\cop{h}{2}S\cop{h}{4}\)\<\cop{h}{5}, \cop{\psi^4}{1}S^{-1}\cop{\psi^4}{3}\>\cop{\psi^4}{2} 
\end{align}

\begin{align}
    F^{a\tens h}\(\rho^{A}\)A^{g}(s)\ket{\Sigma} &= \rop{a}{h}\left[\PP{1}{\cop{g}{4}}\(\psi^1\) \tens \PP{\cop{g}{3}S\cop{g}{5}}{\cop{g}{6}}\(\psi^2\) \tens \PP{\cop{g}{2}S\cop{g}{7}}{\cop{g}{8}}\(\psi^3\) \tens \PP{\cop{g}{1}S\cop{g}{9}}{\cop{g}{10}}\(\psi^4\) \right] \nonumber \\
    &= \rop{a}{h}\left[\<S\copp{g}{4}{1}, \cop{\psi^1}{1}\>\<\copp{g}{4}{2}, \cop{\psi^1}{3}\>\cop{\psi^1}{2} \tens \<S\copp{g}{6}{1}S\(\cop{g}{3}S\cop{g}{5}\), \cop{\psi^2}{1}\>\<\copp{g}{6}{2}, \cop{\psi^2}{3}\>\cop{\psi^2}{2} \right. \nonumber \\
    &\qquad \left. \tens \<S\copp{g}{8}{1}S\(\cop{g}{2}S\cop{g}{7}\), \cop{\psi^3}{1}\>\<\copp{g}{8}{2}, \cop{\psi^3}{3}\>\cop{\psi^3}{2} \right. \nonumber \\
    &\qquad \left. \tens \<S\copp{g}{10}{1}S\(\cop{g}{1}S\cop{g}{9}\), \cop{\psi^4}{1}\>\<\copp{g}{10}{2}, \cop{\psi^4}{3}\>\cop{\psi^4}{2} \right] \nonumber \\
    &= \rop{a}{h}\left[\<\cop{g}{4}, S\cop{\psi^1}{1}\cop{\psi^1}{3}\>\cop{\psi^1}{2} \tens \<S\cop{g}{3}, \cop{\psi^2}{1}\>\<\cop{g}{5}, \cop{\psi^2}{3}\>\cop{\psi^2}{2} \right. \nonumber \\
    &\qquad \left. \tens \<S\cop{g}{2}, \cop{\psi^3}{1}\>\<\cop{g}{6}, \cop{\psi^3}{3}\>\cop{\psi^3}{2} \tens \<S\cop{g}{1}, \cop{\psi^4}{1}\>\<\cop{g}{7}, \cop{\psi^4}{3}\>\cop{\psi^4}{2} \right] \nonumber \\
    &= \rop{a}{h}\left[\<\cop{g}{4}, S\cop{\psi^1}{1}\cop{\psi^1}{3}\> \rop[3]{\cop{a}{1}\cop{h}{1}S\cop{h}{6}}{\cop{h}{7}}\(\cop{\psi^1}{2}\) \right. \nonumber \\
    &\qquad \left. \tens \<S\cop{g}{3}, \cop{\psi^2}{1}\>\<\cop{g}{5}, \cop{\psi^2}{3}\>\cop{\psi^2}{2} \right. \nonumber \\
    &\qquad \left. \tens \<S\cop{g}{2}, \cop{\psi^3}{1}\>\<\cop{g}{6}, \cop{\psi^3}{3}\>\rop[2]{\cop{a}{3}}{\cop{h}{3}}\(\cop{\psi^3}{2}\) \right. \nonumber \\
    &\qquad \left. \tens \<S\cop{g}{1}, \cop{\psi^4}{1}\>\<\cop{g}{7}, \cop{\psi^4}{3}\>\rop[3]{\cop{a}{2}\cop{h}{2}S\cop{h}{4}}{\cop{h}{5}}\(\cop{\psi^4}{2}\) \right] \nonumber \\
    &= \<\cop{g}{4}, S\cop{\psi^1}{1}\cop{\psi^1}{3}\> \eps\(\cop{h}{7}\)\TMt{\cop{a}{1}\cop{h}{1}S\cop{h}{6}}\(\cop{\psi^1}{2}\) \nonumber \\
    &\qquad \tens \<S\cop{g}{3}, \cop{\psi^2}{1}\>\<\cop{g}{5}, \cop{\psi^2}{3}\>\cop{\psi^2}{2} \nonumber \\
    &\qquad \tens \<S\cop{g}{2}, \cop{\psi^3}{1}\>\<\cop{g}{6}, \cop{\psi^3}{3}\>\eps\(\cop{h}{3}\)\TPt{\cop{a}{3}}\(\cop{\psi^3}{2}\) \nonumber \\
    &\qquad \tens \<S\cop{g}{1}, \cop{\psi^4}{1}\>\<\cop{g}{7}, \cop{\psi^4}{3}\>\eps\(\cop{a}{2}\cop{h}{2}S\cop{h}{4}\)\LMt{\cop{h}{5}}\(\cop{\psi^4}{2}\) \nonumber \\
    &= \<\cop{g}{4}, S\cop{\psi^1}{1}\cop{\psi^1}{3}\> \eps\(\cop{h}{7}\)\<\Si\(\cop{a}{1}\cop{h}{1}S\cop{h}{6}\), \copp{\psi^1}{2}{1} \>\copp{\psi^1}{2}{2} \nonumber \\
    &\qquad \tens \<S\cop{g}{3}, \cop{\psi^2}{1}\>\<\cop{g}{5}, \cop{\psi^2}{3}\>\cop{\psi^2}{2} \nonumber \\
    &\qquad \tens \<S\cop{g}{2}, \cop{\psi^3}{1}\>\<\cop{g}{6}, \cop{\psi^3}{3}\>\eps\(\cop{h}{3}\)\<\cop{a}{3}, \copp{\psi^3}{2}{1}\>\copp{\psi^3}{2}{2} \nonumber \\
    &\qquad \tens \<S\cop{g}{1}, \cop{\psi^4}{1}\>\<\cop{g}{7}, \cop{\psi^4}{3}\>\eps\(\cop{a}{2}\cop{h}{2}S\cop{h}{4}\)\<\cop{h}{5}, \copp{\psi^4}{2}{1}\Si\copp{\psi^4}{2}{3}\> \copp{\psi^4}{2}{2} \nonumber \\
    &= \<g, S\cop{\psi^4}{1}S\cop{\psi^3}{1}S\cop{\psi^2}{1}S\cop{\psi^1}{1}\cop{\psi^1}{4}\cop{\psi^2}{3}\cop{\psi^3}{5}\cop{\psi^4}{5}\> \eps\(\cop{h}{7}\)\<\Si\(\cop{a}{1}\cop{h}{1}S\cop{h}{6}\), \cop{\psi^1}{2}\>\cop{\psi^1}{3} \tens \cop{\psi^2}{2} \nonumber \\
    &\qquad \tens \eps\(\cop{h}{3}\)\<\cop{a}{3}, \cop{\psi^3}{2}\>\cop{\psi^3}{3} \tens \eps\(\cop{a}{2}\cop{h}{2}S\cop{h}{4}\)\<\cop{h}{5}, \cop{\psi^4}{2}\Si\cop{\psi^4}{4}\> \cop{\psi^4}{3} \nonumber \\
    &=\eps\(\cop{h}{7}\)\<S\(\cop{a}{1}\cop{h}{1}S\cop{h}{6}\), \cop{\psi^1}{2}\>\cop{\psi^1}{1} \tens \psi^2 \tens \eps\(\cop{h}{3}\)\<\cop{a}{3}, \cop{\psi^3}{1}\>\cop{\psi^3}{2} \nonumber \\
    &\qquad \tens \eps\(\cop{a}{2}\cop{h}{2}S\cop{h}{4}\)\<\cop{h}{5}, \cop{\psi^4}{1}S^{-1}\cop{\psi^4}{3}\>\cop{\psi^4}{2}
\end{align}

\end{document}